\documentclass[a4paper, onecolumn, 11pt, unpublished]{vendor/quantumarticle/quantumarticle}
\pdfoutput=1

\usepackage{breakurl}
\usepackage{stmaryrd}
\usepackage{amsthm,amstext,amssymb,amsmath}
\usepackage{bm}
\usepackage{epsfig}
\usepackage{slashed}
\usepackage{fancyhdr}
\usepackage{graphicx}
\usepackage{pdfpages}
\usepackage{float}
\usepackage{placeins}
\usepackage{simplewick}
\usepackage{graphics,psfrag}
\usepackage{pstricks}
\usepackage{latexsym}
\usepackage{changepage}
\usepackage{caption}
\usepackage{paralist}
\usepackage{styles/tikzit}
\usetikzlibrary{positioning,arrows.meta}

\tikzstyle{boundary vertex}=[inner sep=0mm, minimum size=1mm, shape=circle, draw=black, fill=black]
\tikzstyle{grey_dot}=[fill={rgb,255: red,191; green,191; blue,191}, draw={rgb,255: red,191; green,191; blue,191}, shape=circle, minimum size=1mm, inner sep=0mm]
\tikzstyle{blue_dot}=[fill={rgb,255: red,202; green,251; blue,255}, draw=black, shape=circle, minimum size=1.5mm, inner sep=0mm]
\tikzstyle{white_dot}=[fill=white, draw=black, shape=circle, minimum size=1.5mm, inner sep=0mm]
\tikzstyle{red_dot}=[fill=red, draw=black, shape=circle, minimum size=1.5mm, inner sep=0mm]
\tikzstyle{box}=[fill=white, draw=black, shape=rectangle]
\tikzstyle{scalable_dot}=[fill=white, draw=black, shape=circle, minimum size=6.5mm, inner sep=0mm]
\tikzstyle{scalable_box}=[fill=white, draw=black, shape=rectangle, minimum height=6.5mm, minimum width=6.5mm]
\tikzstyle{scalable_meta_X}=[fill={rgb,255: red,255; green,170; blue,162}, draw=black, shape=diamond, minimum height=8mm, minimum width=8mm]
\tikzstyle{scalable_meta_Z}=[fill={rgb,255: red,112; green,203; blue,207}, draw=black, shape=diamond, minimum height=8mm, minimum width=8mm]
\tikzstyle{X_detector}=[fill={rgb,255: red,191; green,0; blue,64}, draw=black, shape=rectangle]
\tikzstyle{Z_fault}=[fill={rgb,255: red,112; green,203; blue,207}, draw=black, shape=circle, minimum size=1mm, inner sep=0.75mm]
\tikzstyle{X_fault}=[fill={rgb,255: red,210; green,203; blue,0}, draw=black, shape=circle, minimum size=1mm, inner sep=0.75mm]
\tikzstyle{Z_detector}=[fill={rgb,255: red,105; green,105; blue,255}, draw=black, shape=rectangle]
\tikzstyle{scalable_meta_mixed}=[fill={rgb,255: red,255; green,105; blue,255}, draw=black, shape=diamond, minimum height=8mm, minimum width=8mm]
\tikzstyle{green_fault}=[fill={rgb,255: red,4; green,220; blue,0}, draw=black, shape=circle, minimum size=1mm, inner sep=0.75mm]

\tikzstyle{arrow}=[->]
\tikzstyle{red_arrow}=[->, draw=red]
\tikzstyle{cyan_arrow}=[->, draw=cyan]
\tikzstyle{red_dash}=[-, dashed, draw=red]
\tikzstyle{grey dash}=[-, fill=none, draw={rgb,255: red,191; green,191; blue,191}, dashed]
\tikzstyle{dashed arrow}=[->, dashed]
\tikzstyle{blue_edge}=[-, draw=cyan]
\tikzstyle{red_edge}=[-, draw=red]
\tikzstyle{green_edge}=[-, fill=none, draw=green]
\tikzstyle{green_dash}=[-, draw=green, dashed]
\tikzstyle{X_edge}=[-, dashed, draw={rgb,255: red,191; green,0; blue,64}]
\tikzstyle{Z_edge}=[dashed, draw={rgb,255: red,105; green,105; blue,255}, -]

\tikzstyle{gate}=[shape=rectangle, text height=1.5ex, text depth=0.25ex, yshift=0.5mm, fill=white, draw=black, minimum height=5mm, yshift=-0.5mm, minimum width=5mm, font={\small}, tikzit category=circuit]
\tikzstyle{big gate}=[shape=rectangle, text height=1.5ex, text depth=0.25ex, yshift=0.5mm, fill=white, draw=black, minimum height=18mm, yshift=-0.5mm, minimum width=5mm, font={\small}, tikzit category=circuit]
\tikzstyle{Z dot}=[inner sep=0mm, minimum size=2mm, shape=circle, draw=black, fill={rgb,255: red,221; green,255; blue,221}, tikzit category=zx]
\tikzstyle{Z phase dot}=[minimum size=5mm, font={\footnotesize\boldmath}, shape=rectangle, rounded corners=2mm, inner sep=0.2mm, outer sep=-2mm, scale=0.8, tikzit shape=circle, draw=black, fill={rgb,255: red,221; green,255; blue,221}, tikzit draw=blue, tikzit category=zx]
\tikzstyle{X dot}=[Z dot, shape=circle, draw=black, fill={rgb,255: red,255; green,136; blue,136}, tikzit category=zx]
\tikzstyle{X phase dot}=[Z phase dot, tikzit shape=circle, tikzit draw=blue, fill={rgb,255: red,255; green,136; blue,136}, font={\footnotesize\boldmath}, tikzit category=zx]
\tikzstyle{hadamard}=[fill=yellow, draw=black, shape=rectangle, inner sep=0.6mm, minimum height=1.5mm, minimum width=1.5mm, tikzit category=zx]
\tikzstyle{paulibox}=[fill={rgb,255: red,221; green,221; blue,255}, draw=black, shape=rectangle, inner sep=0.6mm, minimum height=5mm, minimum width=5mm, font={\footnotesize}, text height=1.5ex, text depth=0.25ex, tikzit category=zx]
\tikzstyle{vertex}=[inner sep=0mm, minimum size=1mm, shape=circle, draw=black, fill=black, tikzit category=misc]
\tikzstyle{vertex set}=[inner sep=0mm, minimum size=1mm, shape=circle, draw=black, fill=white, font={\footnotesize\boldmath}, tikzit category=misc]
\tikzstyle{small black dot}=[fill=black, draw=black, shape=circle, inner sep=0pt, minimum width=1.2mm, tikzit category=circuit]
\tikzstyle{cnot ctrl}=[fill=black, draw=black, shape=circle, inner sep=0pt, minimum width=1.2mm, tikzit category=circuit]
\tikzstyle{cnot targ}=[fill=white, draw=white, shape=circle, tikzit category=circuit, label={center:$\oplus$}, inner sep=0pt, minimum width=2.1mm, tikzit fill={rgb,255: red,102; green,204; blue,255}, tikzit draw=black]
\tikzstyle{ket}=[fill=white, draw=black, shape=regular polygon, regular polygon sides=3, regular polygon rotate=-30, scale=0.7, inner sep=1pt, tikzit category=circuit, tikzit shape=rectangle, tikzit fill=green]
\tikzstyle{bra}=[fill=white, draw=black, shape=regular polygon, regular polygon sides=3, regular polygon rotate=30, scale=0.7, inner sep=1pt, tikzit category=circuit, tikzit shape=rectangle, tikzit fill=red]
\tikzstyle{scalar}=[shape=rectangle, text height=1.5ex, text depth=0.25ex, yshift=0.5mm, fill=white, draw=black, minimum height=5mm, yshift=-0.5mm, minimum width=5mm, font={\small}]
\tikzstyle{clabel}=[fill=white, draw=none, shape=rectangle, tikzit fill={rgb,255: red,56; green,255; blue,242}, font={\footnotesize}, inner sep=1pt, tikzit category=labels]
\tikzstyle{empty diagram}=[draw={gray!40!white}, dashed, shape=rectangle, minimum width=1cm, minimum height=1cm, tikzit category=misc]
\tikzstyle{bigger gate}=[fill=white, draw=black, shape=rectangle, minimum height=28mm, minimum width=5mm, font={\small}, tikzit category=circuit]
\tikzstyle{wider gate}=[fill=white, draw=black, shape=rectangle, minimum height=28mm, minimum width=9mm, font={\small}]
\tikzstyle{biggest gate}=[fill=white, draw=black, shape=rectangle, minimum height=36mm, minimum width=9mm, font={\small}]

\tikzstyle{hadamard edge}=[-, dashed, dash pattern=on 2pt off 0.5pt, thick, draw={rgb,255: red,68; green,136; blue,255}]
\tikzstyle{box edge}=[-, dashed, dash pattern=on 2pt off 0.5pt, thick, draw={rgb,255: red,203; green,192; blue,225}]
\tikzstyle{brace edge}=[-, tikzit draw=blue, decorate, decoration={brace,amplitude=1mm,raise=-1mm}]
\tikzstyle{diredge}=[->]
\tikzstyle{double edge}=[-, double, shorten <=-1mm, shorten >=-1mm, double distance=2pt]
\tikzstyle{gray edge}=[-, {gray!60!white}]
\tikzstyle{pointer edge}=[->, very thick, gray]
\tikzstyle{boldedge}=[-, line width=1.6pt, shorten <=-0.17mm, shorten >=-0.17mm]

\tikzstyle{boundary vertex}=[inner sep=0mm, minimum size=1mm, shape=circle, draw=black, fill=black]
\tikzstyle{grey_dot}=[fill={rgb,255: red,191; green,191; blue,191}, draw={rgb,255: red,191; green,191; blue,191}, shape=circle, minimum size=1mm, inner sep=0mm]
\tikzstyle{blue_dot}=[fill={rgb,255: red,202; green,251; blue,255}, draw=black, shape=circle, minimum size=1mm, inner sep=0mm]
\tikzstyle{white_dot}=[fill=white, draw=black, shape=circle, minimum size=1.5mm, inner sep=0mm]
\tikzstyle{gate}=[shape=rectangle, text height=1.5ex, text depth=0.25ex, yshift=0.5mm, fill=white, draw=black, minimum height=5mm, yshift=-0.5mm, minimum width=5mm, font={\small}, tikzit category=circuit]
\tikzstyle{big gate}=[shape=rectangle, text height=1.5ex, text depth=0.25ex, yshift=0.5mm, fill=white, draw=black, minimum height=18mm, yshift=-0.5mm, minimum width=5mm, font={\small}, tikzit category=circuit]
\tikzstyle{Z dot}=[inner sep=0mm, minimum size=2mm, shape=circle, draw=black, fill={rgb,255: red,221; green,255; blue,221}, tikzit category=zx]
\tikzstyle{Z phase dot}=[minimum size=5mm, font={\footnotesize\boldmath}, shape=rectangle, rounded corners=2mm, inner sep=0.2mm, outer sep=-2mm, scale=0.8, tikzit shape=circle, draw=black, fill={rgb,255: red,221; green,255; blue,221}, tikzit draw=blue, tikzit category=zx]
\tikzstyle{X dot}=[Z dot, shape=circle, draw=black, fill={rgb,255: red,255; green,136; blue,136}, tikzit category=zx]
\tikzstyle{X phase dot}=[Z phase dot, tikzit shape=circle, tikzit draw=blue, fill={rgb,255: red,255; green,136; blue,136}, font={\footnotesize\boldmath}, tikzit category=zx]
\tikzstyle{hadamard}=[fill=yellow, draw=black, shape=rectangle, inner sep=0.6mm, minimum height=1.5mm, minimum width=1.5mm, tikzit category=zx]
\tikzstyle{paulibox}=[fill={rgb,255: red,221; green,221; blue,255}, draw=black, shape=rectangle, inner sep=0.6mm, minimum height=5mm, minimum width=5mm, font={\footnotesize}, text height=1.5ex, text depth=0.25ex, tikzit category=zx]
\tikzstyle{vertex}=[inner sep=0mm, minimum size=1mm, shape=circle, draw=black, fill=black, tikzit category=misc]
\tikzstyle{vertex set}=[inner sep=0mm, minimum size=1mm, shape=circle, draw=black, fill=white, font={\footnotesize\boldmath}, tikzit category=misc]
\tikzstyle{small black dot}=[fill=black, draw=black, shape=circle, inner sep=0pt, minimum width=1.2mm, tikzit category=circuit]
\tikzstyle{cnot ctrl}=[fill=black, draw=black, shape=circle, inner sep=0pt, minimum width=1.2mm, tikzit category=circuit]
\tikzstyle{cnot targ}=[fill=white, draw=white, shape=circle, tikzit category=circuit, label={center:$\oplus$}, inner sep=0pt, minimum width=2.1mm, tikzit fill={rgb,255: red,102; green,204; blue,255}, tikzit draw=black]
\tikzstyle{ket}=[fill=white, draw=black, shape=regular polygon, regular polygon sides=3, regular polygon rotate=-30, scale=0.7, inner sep=1pt, tikzit category=circuit, tikzit shape=rectangle, tikzit fill=green]
\tikzstyle{bra}=[fill=white, draw=black, shape=regular polygon, regular polygon sides=3, regular polygon rotate=30, scale=0.7, inner sep=1pt, tikzit category=circuit, tikzit shape=rectangle, tikzit fill=red]
\tikzstyle{scalar}=[shape=rectangle, text height=1.5ex, text depth=0.25ex, yshift=0.5mm, fill=white, draw=black, minimum height=5mm, yshift=-0.5mm, minimum width=5mm, font={\small}]
\tikzstyle{clabel}=[fill=white, draw=none, shape=rectangle, tikzit fill={rgb,255: red,56; green,255; blue,242}, font={\footnotesize}, inner sep=1pt, tikzit category=labels]
\tikzstyle{empty diagram}=[draw={gray!40!white}, dashed, shape=rectangle, minimum width=1cm, minimum height=1cm, tikzit category=misc]
\tikzstyle{rmat}=[draw, signal, fill={gray!30}, signal to=east, signal from=west, inner sep=1.5pt, minimum height=9pt]
\tikzstyle{lmat}=[draw, signal, fill={gray!30}, signal to=west, signal from=east, inner sep=1.5pt, minimum height=9pt]
\tikzstyle{ggreen}=[fill=green, draw=black, shape=circle, tikzit category=SZX, tikzit fill=green, tikzit draw=black, line width=1pt, inner sep=2pt]
\tikzstyle{gred}=[fill=red, draw=black, shape=circle, rounded corners=2mm, tikzit category=SZX, inner sep=2pt, tikzit fill=red, line width=1pt]
\tikzstyle{divide}=[regular polygon, regular polygon sides=3, shape border rotate=90, draw=black, fill={gray!30}, inner sep=1.6pt, tikzit category=scal, rounded corners=0.8mm]
\tikzstyle{gather}=[fill={gray!30}, draw=black, tikzit category=scal, rounded corners=0.8mm, regular polygon, regular polygon sides=3, shape border rotate=-90, inner sep=1.6pt]
\tikzstyle{A}=[fill=white, shape=circle, tikzit category=scal, inner sep=1pt]
\tikzstyle{umat}=[draw, signal, fill={gray!30}, signal to=north, signal from=south, inner sep=1.5pt, minimum width=9pt]
\tikzstyle{dmat}=[draw, signal, fill={gray!30}, signal to=south, signal from=north, inner sep=1.5pt, minimum width=9pt]

\tikzstyle{arrow}=[->]
\tikzstyle{red_arrow}=[->, draw=red]
\tikzstyle{cyan_arrow}=[->, draw=cyan]
\tikzstyle{red_dash}=[-, dashed, draw=red]
\tikzstyle{grey dash}=[-, fill=none, draw={rgb,255: red,191; green,191; blue,191}, dashed]
\tikzstyle{dashed arrow}=[->, dashed]
\tikzstyle{blue_edge}=[-, draw={rgb,255: red,46; green,126; blue,255}]
\tikzstyle{red_edge}=[-, draw=red]
\tikzstyle{hadamard edge}=[-, dashed, dash pattern=on 2pt off 0.5pt, thick, draw={rgb,255: red,68; green,136; blue,255}]
\tikzstyle{box edge}=[-, dashed, dash pattern=on 2pt off 0.5pt, thick, draw={rgb,255: red,203; green,192; blue,225}]
\tikzstyle{brace edge}=[-, tikzit draw=blue, decorate, decoration={brace,amplitude=1mm,raise=-1mm}]
\tikzstyle{diredge}=[->]
\tikzstyle{double edge}=[-, double, shorten <=-1mm, shorten >=-1mm, double distance=2pt]
\tikzstyle{gray edge}=[-, {gray!60!white}]
\tikzstyle{pointer edge}=[->, very thick, gray]
\tikzstyle{boldedge}=[-, line width=1.6pt, shorten <=-0.17mm, shorten >=-0.17mm]
\tikzstyle{very thick}=[-, line width=1pt, tikzit draw=red]

\usepackage{graphicx}
\usepackage{array}
\usepackage{hyperref}
\usepackage{comment}
\usepackage{tikz-cd}
\usepackage{mathrsfs}
\usepackage{braket}
\usepackage{soul}

\usepackage{subcaption}

\usepackage{hhline}
\usepackage{algorithm}
\usepackage{algpseudocode}

\usepackage{scalefnt}
\usepackage[all]{xypic}
\usepackage{slashed}
\usepackage{extarrows}
\usepackage{lettrine}
\usepackage{booktabs}
\usepackage{paralist}
\usepackage{wrapfig}
\usepackage{xcolor}
\usepackage{enumerate}
\usepackage{dsfont}
\usepackage{xparse}
\usepackage[normalem]{ulem}
\newtheorem{lemma}{Lemma}[section]
\newtheorem{proposition}[lemma]{Proposition}
\newtheorem{corollary}[lemma]{Corollary}
\newtheorem{example}[lemma]{Example}
\newtheorem*{workedexample}{Example}

\newtheorem{definition}[lemma]{Definition}

\usepackage{thmtools}
\usepackage{thm-restate}

\DeclareMathAlphabet{\mathcal}{OMS}{cmsy}{m}{n}
\renewcommand{\imath}{\mathrm{i}}

\newcommand{\F}{\mathbb{F}}
\newcommand{\Z}{\mathbb{Z}}

\newcommand{\nkd}[3]{\ensuremath{\left \llbracket #1,\, #2,\, #3\right \rrbracket}}

\newcommand{\supp}{{\rm supp}}

\setstcolor{red}

\usepackage[numbers, sort, compress]{natbib}
\definecolor{ExpeditedTeal}{HTML}{006D77}
\definecolor{ExpeditedMagenta}{HTML}{B5179E}

\hypersetup{
    colorlinks=true,
    linkcolor=ExpeditedTeal,
    citecolor=ExpeditedMagenta,
    urlcolor=ExpeditedTeal,
    pdftitle={Building codes with transversal CCZ using projective geometry and SAT solvers},
}
\newcommand{\SF}{S_{\mathrm{FG48}}}

\newcommand{\Qcode}{\mathcal{Q}_{48}}
\newcommand{\PG}{\operatorname{PG}}
\newcommand{\RM}{\operatorname{RM}}
\newcommand{\rowspan}{\operatorname{rowspan}}

\begin{document}

\title{Building codes with transversal CCZ using projective geometry and SAT solvers}
\author{Bohan Lu}
\email{bohan.lu@duke.edu}
\affiliation{Duke Quantum Center, Duke University, Durham, NC 27701, USA}
\affiliation{Department of Electrical and Computer Engineering, Duke University, Durham, NC 27708, USA}
\author{Kenneth R. Brown}
\affiliation{Duke Quantum Center, Duke University, Durham, NC 27701, USA}
\affiliation{Department of Electrical and Computer Engineering, Duke University, Durham, NC 27708, USA}
\affiliation{Department of Physics, Duke University, Durham, NC 27708, USA}
\affiliation{Department of Chemistry, Duke University, Durham, NC 27708, USA}

\begin{abstract}
We construct CSS codes with three logical qubits whose logical CCZ gate is implemented by transversal physical $T/T^{\dagger}$ gates and whose distance is at least $3$. These codes start with $8$-divisible $X$-stabilizers defined by projective geometry, which guarantees $d_Z\geq 3$. A SAT solver then finds three $X$-logical operators compatible with the transversal CCZ. We build codes at thirteen block lengths from $n=48$ to $n=496$, with $d_Z=3$ or $4$, and illustrate the method with the $\nkd{48}{3}{3}$ code $\Qcode$. Additionally, we prove that no CSS code with three logical qubits, $Z$-distance at least $3$, and a quasi-transversal CCZ gate (transversal $T$ with a diagonal Clifford correction) exists below $n=39$. The $\nkd{47}{3}{3}$ code of Jacinto et al.~\cite{jacintoCompactFactories2026} attains the minimum length in this class, as established by the classification of Wills, Jain, and Singh~\cite{willsClassificationGeneralised2026}.
\end{abstract}

\maketitle
\hypersetup{pdftitle={Building codes with transversal CCZ using projective geometry and SAT solvers}}

\section{Introduction}\label{sec:exp-intro}
Fault-tolerant quantum computation requires non-Clifford logical gates. Magic-state distillation remains the standard route to such gates, but it introduces a substantial space-time overhead~\cite{bravyiMagicStateDistillation2012,gidneyMagicStateCultivation2024,bombinFaultTolerantPostselectionLowOverhead2024}.
Transversal-gate constructions address the overhead by designing codes whose physical phase rotations implement a desired logical diagonal gate directly~\cite{campbellUnifiedFrameworkMagic2017,haahCodesProtocolsDistilling2017,bargGeometricStructureTransversal2024}. Examples include color-code and topological constructions~\cite{bombinExactTopologicalQuantum2015,brownColorCodeLogical2025} and CSS-T codes~\cite{campsCSST2024,kohPhantomCodes2025}. Hu, Liang, and Calderbank characterize diagonal gates on CSS codes~\cite{huDesigningQuantumChannels2023}. Rengaswamy et al. give necessary conditions for fixed physical $T/T^\dagger$ patterns on stabilizer codes~\cite[Thm.~6]{rengaswamyOnOptimalityCSSCodes2020}, and Campbell and Howard give sufficient overlap conditions for CCZ~\cite{campbellUnifiedFrameworkMagic2017}. Related CSS constructions use the diagonal Clifford hierarchy or divisible coset weights~\cite{huClimbingDiagonalClifford2021,huDivisibleCodesQuantum2022}. Concurrent work develops codes over binary extension fields~\cite{gongMagicStateBinaryExtension2026}. For a single logical qubit and the $T$ gate, the shortest known such codes are the divisible and triorthogonal families of~\cite{jainTransversalCliffordTgate2025}. A recurring requirement in these constructions is that the code generated by the $X$-type stabilizers be \emph{triply even}, meaning every codeword weight is a multiple of $8$~\cite{betsumiyaMunemasa2012}.
Such a code is also called $8$-divisible, and the two terms are used interchangeably.

The present work focuses on finding Calderbank-Shor-Steane (CSS) codes with three logical qubits that implement a non-Clifford logical CCZ gate by physical $T$ and $T^\dagger$ rotations applied directly to the data block of a distance-$d$ code.
A CSS code is specified by nested classical binary codes
\[
\mathcal{C}_2=\rowspan(S)\subset\mathcal{C}_1=\rowspan(G),
\qquad
G=\begin{bmatrix}K\\S\end{bmatrix}.
\]
Here the $s$ rows of $S$ generate the $X$-type stabilizers, the $k=3$ rows of $K$ represent the
$X$-logical generators, and the $n$ columns index the physical qubits.
The quantity $s=\operatorname{rank}S=\dim\mathcal C_2$ is the ($X$) stabilizer rank.
For a binary vector or Boolean function $f$, $\lvert f\rvert$ denotes its Hamming
weight, and for a set $A$, $\lvert A\rvert$ denotes its cardinality.
The $X$-logicals are the vectors of $\mathcal{C}_1\setminus\mathcal{C}_2$ and the
$Z$-logicals those of $\mathcal{C}_2^\perp\setminus\mathcal{C}_1^\perp$, so the code distances are
$d_X=\min\{\,\lvert v\rvert:v\in\mathcal{C}_1\setminus\mathcal{C}_2\,\}$ and
$d_Z=\min\{\,\lvert v\rvert:v\in\mathcal{C}_2^\perp\setminus\mathcal{C}_1^\perp\,\}$.

Throughout the paper, $\mathcal{C}_2$ denotes the classical code and $S$ denotes only its chosen generator matrix.

The logical computational-basis state with label $\vec x\in\F_2^3$ is the uniform
superposition over the coset $K^{\!\top}\vec x+\mathcal{C}_2$ of $\mathcal{C}_2$ in
$\mathcal{C}_1$,
\begin{equation}\label{eq:exp-logicalstate}
\ket{\vec x_L}=\frac{1}{2^{s/2}}\sum_{\vec y\in\F_2^s}\ket{K^{\!\top}\vec x+S^{\!\top}\vec y}
=\frac{1}{2^{s/2}}\sum_{\vec y\in\F_2^s}\ket{G^{\!\top}\vec z},
\qquad\vec z=(\vec x,\vec y)\in\F_2^{3+s},
\end{equation}
following the notation of Campbell and Howard~\cite{campbellUnifiedFrameworkMagic2017}. Applying
$T^{\otimes n}$ to $\ket{\vec x_L}$ multiplies the term indexed by $\vec y$ by
$\omega^{\lvert G^{\!\top}\vec z\rvert}$ with $\omega=e^{2\pi i/8}$, so a diagonal gate's logical action
is governed by $\lvert G^{\!\top}\vec z\rvert\bmod 8$. The
\emph{Bravyi--Haah parity
identity}~\cite{bravyiMagicStateDistillation2012,campbellUnifiedFrameworkMagic2017,nezamiClassificationSmallTriorthogonal2022}
decomposes this weight mod~$8$ into single-row weights, pairwise overlaps and
triple overlaps: for any full-rank $G$ with rows $g^{\,1},\ldots,g^{\,r}$ and any
$\vec z\in\F_2^{r}$, with $i$, $i'$, $i''$ ranging over all $r$ rows of $G$,
\begin{equation}\label{eq:exp-BH}
\begin{aligned}
\lvert G^{\!\top}\vec z\rvert
&=\sum_i \lvert g^{\,i}\rvert\,z_i
-2\!\sum_{i<i'}\lvert g^{\,i}\wedge g^{\,i'}\rvert\,z_iz_{i'}\\
&\quad+4\!\sum_{i<i'<i''}\lvert g^{\,i}\wedge g^{\,i'}\wedge g^{\,i''}\rvert\,z_iz_{i'}z_{i''}
\pmod 8.
\end{aligned}
\end{equation}
where $\lvert g^{\,i}\wedge g^{\,i'}\rvert=\sum_j g^i_j g^{i'}_j$ is the \emph{overlap} (shared support)
of two rows, and similarly for triple overlaps
\footnote{Lemma~\ref{lem:exp-xorlift} states the exact identity over $\Z$. Terms of degree
$\ge 4$ have coefficients divisible by $8$, so~\eqref{eq:exp-BH} is the exact truncation
modulo~$8$.}.

For a target diagonal gate
$U_F=\sum_{\vec x}\omega^{F(\vec x)}\ket{\vec x}\!\bra{\vec x}$, we consider the case where a diagonal
Clifford correction $C$, applied after the transversal $T$ rotations, i.e. $C\,T^{\otimes n}$,  realizes logical
$U_F$ on the codespace. For $U_F=$ CCZ on $k=3$ logical qubits the target is
$F_{\mathrm{CCZ}}(\vec x)=4\,x_1x_2x_3$, and the constraint reads
\begin{equation}\label{eq:exp-FQT-CCZ}
\lvert K^{\!\top}\vec x+S^{\!\top}\vec y\rvert+2h(K^{\!\top}\vec x+S^{\!\top}\vec y)\equiv 4\,x_1x_2x_3\pmod 8
\quad\forall\,\vec x,\vec y,
\end{equation}
where the function $h$ collects the phases contributed by $C$: a linear term for each $S$-gate
 and a quadratic term for each $CZ$-gate. $h$ is explicitly defined in Eq.~\eqref{eq:exp-correction}.
Two regimes are considered:
\begin{itemize}
\item \emph{quasi-transversal CCZ}~\cite{campbellUnifiedFrameworkMagic2017}: $C$ is any diagonal
Clifford, so the correction may still contain entangling $CZ$ gates on pairs of qubits and gates
acting on the logical qubits.
\item
\emph{single-qubit transversal CCZ} (Definition~\ref{def:exp-CI}) is implemented by a physical
$T/T^\dagger$ pattern. Equivalently, apply $T^{\otimes n}$ followed by $S^\dagger$ on selected
qubits, using $T S^\dagger=T^\dagger$.
\end{itemize}

Equation~\eqref{eq:exp-FQT-CCZ} fixes the undetermined coefficients of~\eqref{eq:exp-BH},
one monomial of degree $\le3$ at a time, yielding nine conditions on the row overlaps of $G$, i.e. the
Campbell--Howard (CH) conditions~\cite{campbellUnifiedFrameworkMagic2017}. The conditions are necessary as
well as sufficient (Theorem~\ref{thm:exp-NS}).
Table~\ref{tab:exp-conditions} lists all nine conditions.
\begin{table}[tb]
\centering
\renewcommand{\arraystretch}{1.2}
\begin{tabular}{c l}
\toprule
\textbf{Label} & \textbf{Condition}\\
\midrule
\textbf{K1} & $\lvert K_i\rvert\equiv 0\pmod 2$\\
\textbf{K2} & $\lvert K_i\wedge K_{i'}\rvert\equiv 0\pmod 2$\\
\textbf{K3} & $\lvert K_1\wedge K_2\wedge K_3\rvert\equiv 1\pmod 2$\\
\midrule
\textbf{KS1} & $\lvert K_i\wedge S_l\rvert\equiv 0\pmod 2$\\
\textbf{KS2} & $\lvert K_i\wedge K_{i'}\wedge S_l\rvert\equiv 0\pmod 2$\\
\textbf{KS3} & $\lvert K_i\wedge S_l\wedge S_{l'}\rvert\equiv 0\pmod 2$\\
\midrule
\textbf{S1} & $\lvert S_l\rvert\equiv 0\pmod 2$\\
\textbf{S2} & $\lvert S_l\wedge S_{l'}\rvert\equiv 0\pmod 2$\\
\textbf{S3} & $\lvert S_l\wedge S_{l'}\wedge S_{l''}\rvert\equiv 0\pmod 2$\\
\bottomrule
\end{tabular}
\caption{The nine Campbell--Howard conditions~\cite{campbellUnifiedFrameworkMagic2017}, grouped like the rows of $G=[K;S]$: conditions on logical rows
(K1--K3), conditions mixing logical and stabilizer rows (KS1--KS3), and conditions on stabilizer rows (S1--S3). Indices range over $1\le i<i'\le3$
and $1\le l<l'<l''\le s$. K3 is the single odd parity.}
\label{tab:exp-conditions}
\end{table}

Classical coding theory constructs codes from finite geometry by
indexing code coordinates with points over a finite field and deriving weight properties from
intersections with subspaces~\cite{assmusKeyPolynomial1998,alyQLDPCFiniteGeometries2008}. The
construction uses the binary projective geometry $\PG(s-1,2)$, the $2^s-1$ distinct nonzero
vectors of $\F_2^s$.

An $8$-divisible $\mathcal C_2$ satisfies the stabilizer-only CH conditions, and choosing $S$
with distinct nonzero columns guarantees $d_Z\ge3$ for every choice
of logical rows since every weight-1 $Z$ error has a distinct syndrome.

The logical $K$-row search is encoded as a Boolean satisfiability
(SAT) problem, which has been widely adopted in solving constraint satisfaction problems in quantum error correction~\cite{jacintoCompactFactories2026,dalfaveroSatisfyingQuantumCodes2026}. Because $\mathcal C_2$ is fixed and already guarantees $d_Z\ge3$, the SAT encoding has no
distance constraints, and S1--S3 impose no constraints on the logical labels. The remaining
conditions give $3n$ Boolean label variables and
$7+6s+3\binom{s}{2}=O(s^2)$ parity constraints: respectively,
K1, K2, K3, KS1, KS2, and KS3.

Here, geometry fixes $\mathcal C_2$ and guarantees $d_Z\ge3$ before the SAT search for logical
rows, which is followed by the physical-sign test for transversal CCZ.
Jacinto et al.~\cite{jacintoCompactFactories2026} use SAT to find compact magic-state
distillation protocols and to prove lower bounds on their qubit count.

For FG48, the resulting $G$ matrix generates the single-qubit transversal code $\Qcode$, a new
$\nkd{48}{3}{3}$ CSS code with $\mathcal C_2$ equal to FG48, whose logical CCZ is implemented
by physical $T$ and $T^\dagger$ rotations. We encode the geometric
construction with the
column indicator of $S$ and the logical-row search with the column indicator of $G$.
Nezami and Haah introduced this indicator-polynomial representation to classify
small triorthogonal codes~\cite{nezamiClassificationSmallTriorthogonal2022}.

Furthermore, we prove that no CSS code with $k=3$ and $d_Z\ge3$ satisfying the nine CH conditions
exists at length $n\le38$, with any number of $X$-stabilizer generators
(Theorem~\ref{thm:exp-floor38}). The CH conditions involving only stabilizer generators limit the
valid block lengths to the weight spectrum of a Reed--Muller code, and each admissible length is
then decided exactly. The $\nkd{47}{3}{3}$ code of Jacinto et
al.~\cite{jacintoCompactFactories2026} is the smallest code of this class, as certified by the classification of Wills, Jain, and Singh~\cite{willsClassificationGeneralised2026}, and $\Qcode$ is the
smallest known single-qubit transversal one.

Section~\ref{sec:exp-fg48} gives the geometric construction, Section~\ref{sec:exp-criterion} the
single-qubit transversal criteria, Section~\ref{sec:exp-minimality} the length lower bound, and
Section~\ref{sec:exp-conclusion} summarizes our results and lists open problems,
including extensions to other non-Clifford gates and more logical qubits. The appendices contain the search details and deferred
proofs.

\section{Subspace-complement construction of stabilizers}\label{sec:exp-fg48}

In this section, we find geometric constructions that satisfy the stabilizer-only CH conditions. An $8$-divisible $\mathcal C_2$ satisfies the stabilizer-only conditions
\textbf{S1}--\textbf{S3} of Table~\ref{tab:exp-conditions}.
$8$-divisibility is a stronger condition:
\begin{equation}\label{eq:exp-divladder}
\begin{aligned}
\lvert S_l\rvert&\equiv0 \pmod 8,&
\lvert S_l\wedge S_{l'}\rvert&\equiv0 \pmod 4,\\
\lvert S_l\wedge S_{l'}\wedge S_{l''}\rvert&\equiv0 \pmod 2.
\end{aligned}
\end{equation}
Conditions S1--S3
only impose the conditions mod $2$. Codes satisfying these conditions without being
$8$-divisible are
known~\cite{nezamiClassificationSmallTriorthogonal2022,jainTransversalCliffordTgate2025}.

By imposing
$8$-divisibility, we can borrow from
classical constructions.
Divisible codes are a well-developed subject~\cite{kurzDivisible2021}, and
two-weight codes from subspace complements are among the simplest ways to produce one. The
stabilizer support is then fixed by geometry rather than found by search, while the logical rows and
physical signs remain to be chosen.

$8$-divisibility is the first of two requirements this section places on
$\mathcal C_2$. The second is the distance requirement $d_Z\ge3$, which we now reduce to a
condition on the columns of $S$.

Write each column of $G=[K;S]$ as a pair $(\alpha,\beta)\in\F_2^3\times\F_2^s$,
where $\alpha$ is its logical label (the column of $K$) and $\beta$ is its \emph{syndrome} (the
column of $S$, i.e.\ the $X$-stabilizer syndrome of a $Z$ error on that qubit). The $Z$-logicals
are $\mathcal C_2^\perp\setminus\mathcal C_1^\perp$ (Section~\ref{sec:exp-intro}), so the
$Z$-distance is
\begin{equation}\label{eq:exp-dZ}
d_Z=\min\{\,\lvert v\rvert:v\in\mathcal C_2^\perp\setminus\mathcal C_1^\perp\,\}
   =\min\{\,\lvert v\rvert:Sv=0,\ Kv\neq0\,\},
\end{equation}
the minimum weight of a vector with zero syndrome and nonzero logical label.

\begin{lemma}[Multiplicity-free support $=d_Z\ge3$]\label{lem:exp-fiber}
The following are equivalent: \textup{(a)} $d_Z\ge3$; \textup{(b)} no column has $\beta=0$ with
$\alpha\ne0$, and no two columns share the same nonzero $\beta$ with different labels $\alpha$. A support satisfying \textup{(b)}
is called \emph{multiplicity-free}.
\end{lemma}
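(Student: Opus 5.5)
We will prove both implications through the characterization \eqref{eq:exp-dZ}: $d_Z$ is the minimum weight of a vector $v\in\F_2^n$ with $Sv=0$ and $Kv\neq0$. For any $v$ with support $A$, we have $Sv=\sum_{j\in A}\beta_j$ and $Kv=\sum_{j\in A}\alpha_j$, where $(\alpha_j,\beta_j)$ is column $j$ of $G$. Hence $d_Z\ge3$ says exactly that no support $A$ with $\lvert A\rvert\in\{1,2\}$ has $\sum_{A}\beta_j=0$ and $\sum_A\alpha_j\neq0$. We treat $\lvert A\rvert=1$ and $\lvert A\rvert=2$ separately.

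For $\lvert A\rvert=1$, say $A=\{j\}$, the bad event is $\beta_j=0$ with $\alpha_j\neq0$. This is precisely the first clause of (b).

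For $\lvert A\rvert=2$, say $A=\{j,j'\}$, the bad event is $\beta_j=\beta_{j'}$ with $\alpha_j\neq\alpha_{j'}$. If the common syndrome is nonzero, this is exactly the second clause of (b). If the common syndrome is $0$, the labels differ, so at least one of $\alpha_j,\alpha_{j'}$ is nonzero. That column alone is then a weight-1 bad vector, which the first clause already excludes.

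(b)$\Rightarrow$(a): assuming (b), neither bad event can occur. Any minimum-weight $Z$-logical therefore has weight at least $3$; note that $d_Z$ is well defined because $k=3>0$ guarantees $Z$-logicals exist.

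(a)$\Rightarrow$(b): we argue by contrapositive. A violating column, or a violating pair with a common nonzero $\beta$ and distinct $\alpha$, directly gives a $Z$-logical of weight $1$ or $2$.

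There is no real obstacle here. The only care needed is the bookkeeping that lets the zero-syndrome pair case be absorbed by the single-column clause, and the use of the identification $Sv=\sum\beta_j$, $Kv=\sum\alpha_j$ over $\F_2$.
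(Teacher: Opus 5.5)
Your proof is correct and follows the same route as the paper's: it enumerates the weight-one and weight-two vectors $v$ with $Sv=0$ and $Kv\ne0$. These are exactly a zero-syndrome column with a nonzero label, or a pair of columns that share a syndrome but not a label. Your explicit reduction of the shared-zero-syndrome pair to the single-column clause fills a step the paper leaves implicit, since clause (b) as stated only forbids pairs sharing a \emph{nonzero} $\beta$.
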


\begin{proof}
A $Z$ error of weight at most two is determined by its support columns. For such $v$, the syndrome
$Sv$ and label $Kv$ are the sums of at most two columns, and $d_Z\le2$ means some such string
satisfies $Sv=0$ with $Kv\ne0$. The only candidates are
\[
v=e_j:\ (Sv,Kv)=(\beta_j,\alpha_j);
\qquad
v=e_i+e_j:\ (Sv,Kv)=(\beta_i+\beta_j,\ \alpha_i+\alpha_j).
\]
So a weight-one $Z$-logical is a column with zero syndrome but nonzero label, and a weight-two
$Z$-logical is a pair of columns sharing a syndrome but not a label. Condition \textup{(b)} forbids
exactly these two configurations, hence is equivalent to $d_Z\ge3$, which is \textup{(a)}.
\end{proof}

The code $\mathcal C_2$ is called \emph{projective} when the columns of
its generator matrix $S$ are distinct and nonzero, equivalently when
$d(\mathcal C_2^\perp)\ge3$.
The columns $\beta_1,\ldots,\beta_n\in\F_2^s$ then form a projective
set~\cite{landjev2000,kurzDivisible2021} $\mathcal P\subseteq\F_2^s\setminus\{0\}$ of
$n\le2^s-1$ points, where we write $\PG(s{-}1,2)$ for the set of $2^s-1$ nonzero vectors of
$\F_2^s$.
Projectivity excludes both failures in
Lemma~\ref{lem:exp-fiber}, so it guarantees $d_Z\ge3$ for every choice of $K$.
For the existence and length-lower-bound searches, it is enough to consider projective
$\mathcal C_2$.
Every $G$ with $d_Z\ge3$ satisfying the nine CH conditions can be punctured,
by deleting a zero column or a repeated pair of columns,
to one that contains a projective $\mathcal C_2$. The punctured code satisfies the same nine CH conditions,
$d_Z\ge3$, and possesses the same number of logicals $k$
(see Lemma~\ref{lem:exp-projred}, Appendix~\ref{app:exp-projred}).
The search therefore fixes a projective $\mathcal C_2$ and varies only the logical rows $K$.

FG48 and its subspace-complement family are constructed from projective
geometry~\cite[\S2]{kurzDivisible2021}, where codeword weights are read from hyperplane sections.
A nonzero vector $a\in\F_2^s$ selects the codeword
$c_a = a_1 S_1+\cdots+a_s S_s\in\F_2^n$, whose $j$-th coordinate is the dot product $a\cdot\beta_j$.
The same $a$ defines the \emph{hyperplane} $\mathcal H_a=\{\beta\in\F_2^s\setminus\{0\}: a\cdot\beta=0\}$, and the
points of $\mathcal P$ lying on it form the \emph{hyperplane section} $\mathcal P\cap \mathcal H_a$. A hyperplane of
$\PG(s-1,2)$ has $2^{s-1}-1$ points ($31$ when $s=6$). Coordinate $j$ is nonzero exactly when $\beta_j$ lies outside $\mathcal H_a$, so
\begin{equation}\label{eq:exp-weight-section}
\lvert c_a\rvert=\sum_{\beta\in\mathcal P} a\cdot\beta=n-\lvert \mathcal P\cap \mathcal H_a\rvert .
\end{equation}
The code is \emph{two-weight} when only two distinct nonzero codeword weights
occur~\cite{calderbankKantor1986,brouwerTwoWeight2021}. By~\eqref{eq:exp-weight-section} this is the
same as the hyperplane sections $\mathcal P\cap \mathcal H_a$ taking only two distinct sizes~\cite[Thm.~2.18]{kurzDivisible2021}. One way to construct such two-weight point sets is by using complements of subspaces, defined below:

\begin{definition}[Subspace complements~{\cite[\S19.7.1]{brouwerTwoWeight2021}}]\label{def:exp-flatcomplement}
For a $p$-dimensional vector subspace $P\subseteq\F_2^s$ (so $P\cong\F_2^p$), the \emph{subspace-complement} point set is
\[
\mathcal P=\PG(s-1,2)\setminus(P\setminus\{0\}),
\]
the projective points of $\F_2^s$ not in $P$
(see Calderbank--Kantor~\cite{calderbankKantor1986} for
this family's place in the two-weight literature).
\end{definition}

Set $(s,p)=(6,4)$ in
Definition~\ref{def:exp-flatcomplement}. Removing the $15$ nonzero points of a four-dimensional
subspace, a \emph{solid} $\PG(3,2)$, from the $63$ points of $\PG(5,2)$ leaves $48$ points. The
classical code with these $48$ points as columns is denoted
\textbf{FG48}\footnote{The deleted subspace has dimension exactly
$p=4$, forced by $n=2^6-2^p=48$, and a generic $15$-point deletion would not be $8$-divisible. FG48
is the binary member of Calderbank--Kantor's subspace-complement family,
Example~SU1 (complement of a $p$-dimensional subspace of $\F_2^s$,
here $(s,p)=(6,4)$, cited in the proof of Proposition~\ref{prop:exp-uniqueness}). Two-weight codes are a broad class with several unrelated families.}.
\begin{proposition}[{Uniqueness of the projective $8$-divisible $[48,6]$
code~\cite{calderbankKantor1986,brouwerTwoWeight2021,kurzDivisible2021}}]\label{prop:exp-uniqueness}
Up to a permutation of coordinates, the only projective $8$-divisible $[48,6]$ code is FG48, the
two-weight complement of a solid, with column set $\PG(5,2)\setminus\PG(3,2)$, the code
$\mathcal C_2$ of $\Qcode$.
\end{proposition}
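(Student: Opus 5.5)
The plan is to translate $8$-divisibility into a statement about hyperplane sections via~\eqref{eq:exp-weight-section}, pass to the $15$-point complement of $\mathcal P$, and show by a counting argument that this complement must be exactly the nonzero points of a $4$-dimensional subspace. Let $\mathcal C_2$ be a projective $8$-divisible $[48,6]$ code with column set $\mathcal P\subseteq\PG(5,2)$, so $\lvert\mathcal P\rvert=48$ and the columns are distinct and nonzero. For every nonzero $a\in\F_2^6$, \eqref{eq:exp-weight-section} gives $\lvert c_a\rvert=48-\lvert\mathcal P\cap\mathcal H_a\rvert$. Hence $8$-divisibility is equivalent to $\lvert\mathcal P\cap\mathcal H_a\rvert\equiv0\pmod 8$ for every hyperplane. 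Since a hyperplane has $31$ points, this forces $\lvert\mathcal P\cap\mathcal H_a\rvert\in\{0,8,16,24\}$.

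Next I pass to the complement $\mathcal Q=\PG(5,2)\setminus\mathcal P$, a set of $63-48=15$ distinct points. Its hyperplane sections satisfy $\lvert\mathcal Q\cap\mathcal H_a\rvert=31-\lvert\mathcal P\cap\mathcal H_a\rvert\in\{31,23,15,7\}$. Because $\lvert\mathcal Q\rvert=15$, only the values $15$ and $7$ are possible. Equivalently, the matrix $M$ whose $15$ columns are the points of $\mathcal Q$ generates a code in which every codeword $a^{\!\top}M$ has weight $15-\lvert\mathcal Q\cap\mathcal H_a\rvert\in\{0,8\}$. Let $r=\dim\operatorname{span}\mathcal Q$. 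The row space of $M$ then has dimension $r$ and is a projective constant-weight-$8$ code of length $15$.

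The key step is to pin down $r$ by double counting. Over the $2^r-1$ nonzero codewords of $\rowspan M$, each coordinate is a nonzero column and is therefore nonzero in exactly $2^{r-1}$ codewords. Summing weights gives
\[
15\cdot 2^{r-1}=8\,(2^r-1),
\]
which rearranges to $2^{r-1}=8$, so $r=4$. Thus $\mathcal Q$ consists of $15$ distinct nonzero points inside a $4$-dimensional subspace $P$. But $P$ has exactly $15$ nonzero points, so $\mathcal Q=P\setminus\{0\}$, i.e.\ $\mathcal Q$ is a solid $\PG(3,2)$ and $\mathcal P=\PG(5,2)\setminus\PG(3,2)$. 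This is FG48 up to the choice of solid and the ordering of columns. Since $\mathrm{GL}(6,2)$ acts transitively on $4$-dimensional subspaces, and a change of basis of $\F_2^6$ leaves the row space unchanged, all such codes coincide up to a permutation of coordinates.

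For completeness I will also verify the converse, namely that FG48 really is projective and $8$-divisible. Projectivity is immediate. A hyperplane meets a solid either in the whole solid ($15$ points) or in a plane ($7$ points), so $\lvert\mathcal P\cap\mathcal H_a\rvert\in\{16,24\}$ and the weights are $32$ and $24$, both divisible by $8$. The dimension is $6$ because $48$ points cannot all lie in a hyperplane, which has only $31$ points. I expect the only real obstacle to be the constant-weight step, which is precisely where projectivity of $\mathcal Q$ is needed for the uniform column count $2^{r-1}$. That count holds because the columns of $M$ are distinct points of $\PG(5,2)$, so no column is zero and the counting is valid even though $M$ has rank $r$ rather than $6$.
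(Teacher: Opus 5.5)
Your proof is correct, and it takes a different route from the paper's at the uniqueness step.

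Both arguments start the same way. Since $15$ points are missing and every section size must be a multiple of $8$, every hyperplane section of $\mathcal P$ has size $16$ or $24$.

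The paper then double counts point--hyperplane incidences to get the weight enumerator ($N_{16}=3$, $N_{24}=60$). It concludes uniqueness by citing the external classification of projective $8$-divisible codes of length at most $48$ (Betsumiya--Munemasa, Heinlein et al., Kurz). It uses Calderbank--Kantor only for the fact that the solid complement attains that enumerator.

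You instead pass to the $15$-point complement $\mathcal Q$ and show that it spans a one-weight code whose nonzero codewords all have weight $8$. Double counting coordinate incidences over the $2^r$ codewords then forces $\dim\operatorname{span}\mathcal Q=4$, so $\mathcal Q$ is exactly the nonzero points of a solid. This is essentially the standard counting behind Bonisoli's theorem that a projective one-weight code is a simplex code, specialized to this case.

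What your route buys is a self-contained proof: it verifies uniqueness directly rather than importing it from a classification. It also covers two points the paper handles by citation or leaves unstated. First, you prove the converse, that the solid complement is projective, $8$-divisible and of rank $6$. Second, you give the $\mathrm{GL}(6,2)$-transitivity argument that turns ``the column set is a solid complement'' into ``the code equals FG48 up to a coordinate permutation.''

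What the paper's route records is the explicit weight distribution: $3$ codewords of weight $32$ and $60$ of weight $24$. Your argument recovers this immediately. The $2^{6-4}-1=3$ nonzero $a$ with $a^{\!\top}M=0$ are exactly the hyperplanes containing the solid, and they give the weight-$32$ codewords.
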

\begin{proof}
The $48$ columns of $\mathcal C_2$ occupy $48$ of the $63$ points of $\PG(5,2)$, missing $15$.
Each hyperplane $\mathcal H_a$ has $31$ points, so it meets the column set in at least
$31-15=16$ points and at most $31$. By~\eqref{eq:exp-weight-section} the nonzero codeword
$c_a$ has weight $48$ minus this section size, and $8$-divisibility forces the section size to
be a multiple of $8$. In the range $[16,31]$ the only multiples of $8$ are $16$ and $24$,
giving codeword weights $32$ and $24$.

To determine how many hyperplanes give each weight, we count incidences in two ways.
There are $63$ hyperplanes in $\PG(5,2)$ (one per nonzero $a\in\F_2^6$), and each of the $48$
points lies on $31$ of them. For $m\in\{16,24\}$, let $N_m$ be the number of hyperplanes whose
intersection with the $48$-point column set has size $m$. Then
\[
N_{16}+N_{24}=63,\qquad 16N_{16}+24N_{24}=48\cdot31,
\]
so $N_{16}=3$ and $N_{24}=60$. Thus, our code has 3 codewords of weight 32 and 60 codewords of weight 24.

Every projective $8$-divisible $[48,6]$ code has this enumerator. The solid complement of
Definition~\ref{def:exp-flatcomplement} attains it
(Calderbank--Kantor Example~SU1~\cite[\S7]{calderbankKantor1986}), and among
projective codes the enumerator has no other realization, by the classification of projective
$8$-divisible codes of length at most
$48$~\cite{betsumiyaMunemasa2012,heinleinProjectiveDivisible2017,kurzDivisible2021}.
\end{proof}

To write an explicit generator matrix $S$ for FG48, we choose coordinates $(u,w)\in\F_2^4\mathbin\oplus\F_2^2$
with the removed subspace given by $w=0$. The surviving points have $w\ne0$ and split into three
cosets, one for each nonzero value of $w$.
Ordering the columns of $S$ coset by coset, and within each coset by the $16$ values of $u$,
gives~\eqref{eq:exp-blockform}. The columns of $A$ are the $16$ vectors of $\F_2^4$, written as
the binary expansions of $0,1,\ldots,15$ with the least significant bit in the top row, while the
bottom two rows record the three coset labels $(1,0),(0,1),(1,1)$:
\begin{equation}\label{eq:exp-blockform}
\setlength{\arraycolsep}{3pt}
\SF=
\begin{bmatrix}
A&A&A\\
\mathbf 1&\mathbf 0&\mathbf 1\\
\mathbf 0&\mathbf 1&\mathbf 1
\end{bmatrix},
\qquad
A=
\begin{bmatrix}
0&1&0&1&0&1&0&1&0&1&0&1&0&1&0&1\\
0&0&1&1&0&0&1&1&0&0&1&1&0&0&1&1\\
0&0&0&0&1&1&1&1&0&0&0&0&1&1&1&1\\
0&0&0&0&0&0&0&0&1&1&1&1&1&1&1&1
\end{bmatrix}.
\end{equation}

The six rows of $\SF$ are independent
coordinate functions on the surviving points and form a basis of $\mathcal C_2$. The two nonzero
weights arise from whether the hyperplane $\mathcal H_a$ of the codeword $c_a$
in~\eqref{eq:exp-weight-section} contains the removed subspace $P$.
\begin{corollary}[Block form of a subspace complement code]\label{cor:exp-blockform}
Let $P\cong\F_2^p$ be a $p$-dimensional subspace of $\F_2^s$.
Set $q=s-p$ and let $\pi\colon\F_2^s\to\F_2^s/P\cong\F_2^q$ send each $\beta$
to its coset $\beta+P$.
Choose coordinates $(u,w)$ on $\F_2^s$, with $u\in\F_2^p$ along $P$ and
$w=\pi(\beta)\in\F_2^q$.
The generator matrix of the code with column set
$\PG(s{-}1,2)\setminus\PG(p{-}1,2)$ then admits the block form
\[
S=\begin{bmatrix}A&A&\cdots&A\\[2pt]
w^{(1)}\mathbf 1&w^{(2)}\mathbf 1&\cdots&w^{(b)}\mathbf 1\end{bmatrix},
\]
where $b=2^q-1$ blocks are indexed by the nonzero vectors $w^{(1)},\ldots,w^{(b)}\in\F_2^q\setminus\{0\}$, and
the top $p$ rows repeat the block matrix $A$ listing all $2^p$ vectors of~$\F_2^p$ in each block.
Here $w^{(i)}\mathbf 1$ is the $q\times2^p$ block whose every column equals
$w^{(i)}$, so all $2^p$ columns of block $i$ carry the same coset label.
The block length is $n=b\cdot 2^p = 2^s-2^p$.
\end{corollary}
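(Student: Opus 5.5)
The plan is to choose a basis of $\F_2^s$ adapted to $P$ and then read off the columns directly. First extend a basis of $P$ to a basis of $\F_2^s$. The first $p$ vectors span $P$, and the last $q=s-p$ vectors span a complement $Q$ with $Q\cap P=\{0\}$. In these coordinates every $\beta\in\F_2^s$ is written uniquely as $\beta=(u,w)$ with $u\in\F_2^p$ and $w\in\F_2^q$. Under this identification $P=\{(u,0)\}$, and the projection $\pi$ becomes $(u,w)\mapsto w$, because $\F_2^s/P\cong Q\cong\F_2^q$. A change of basis acts on a generator matrix by left multiplication by an invertible $s\times s$ matrix. Such a multiplication preserves the row space, so it produces the same code $\mathcal C_2$ and keeps it projective.

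Next I will identify the column set. It is $\PG(s-1,2)\setminus(P\setminus\{0\})$, the set of nonzero $\beta$ outside $P$, which in the new coordinates is exactly the set of pairs $(u,w)$ with $w\neq0$. I will partition this set by the value of $w$, one class for each coset $\beta+P$. Each of the $b=2^q-1$ nonzero values $w^{(i)}$ gives the class $\{(u,w^{(i)}):u\in\F_2^p\}$, which has exactly $2^p$ elements because $u$ ranges freely over $\F_2^p$.

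The final step is to order the columns. Since column permutations do not change the code up to equivalence, I order them coset by coset, and within each coset by a fixed enumeration of $\F_2^p$. The top $p$ rows then restrict to the same matrix $A$ in every block, since $A$ lists all vectors of $\F_2^p$ in the chosen order. The bottom $q$ rows are constant on block $i$ and equal to $w^{(i)}\mathbf 1$. The block length is $n=b\cdot2^p=(2^q-1)2^p=2^s-2^p$.

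I do not expect a real obstacle here. The only point needing care is the phrase ``admits the block form''. It should be read as: up to a column permutation and a row change of basis, which leave the code and its weights unchanged. The rows of $S$ remain a basis of the code, because the column set spans $\F_2^s$: it contains the points $(0,e_j)$ and $(e_i,w)$ for a fixed $w\neq0$, so $S$ has full rank $s$. This full-rank condition is what lets the stated block form stand for the code itself.
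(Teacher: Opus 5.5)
Your proposal is correct and takes essentially the same route as the paper: in the paragraph just before the corollary (worked out for FG48), the paper chooses coordinates $(u,w)$ with $P=\{w=0\}$, splits the surviving points into the $2^q-1$ cosets indexed by nonzero $w$, and orders the columns coset by coset and by $u$ within each coset. The paper leaves implicit your added check that the column set spans $\F_2^s$, which is what makes the rows of $S$ a basis of the code.
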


Proposition~\ref{prop:exp-uniqueness} determines
$\mathcal C_2=\mathrm{FG48}$ once $8$-divisibility is imposed.
A classical support determines neither the logical rows nor the physical gate pattern needed to
implement the target logical gate.
Section~\ref{sec:exp-criterion} constructs the logical rows and physical signs for FG48.

\section{CCZ conditions and a 48-qubit example}\label{sec:exp-criterion}

\subsection{CH conditions and column indicators}\label{sec:exp-full-g}
Section~\ref{sec:exp-fg48} treated the columns of $\SF$ as a projective point set and computed
codeword weights from sums over those columns. The same column picture applies to the
full generator matrix $G$. The CH conditions become
parity sums over columns in $G$, which can be captured by column indicator polynomials. Membership
of these indicator polynomials in certain Reed--Muller code cosets will also be characterized,
following the methodology of~\cite{nezamiClassificationSmallTriorthogonal2022}. The notation
$\RM(r,m)$ denotes the binary Reed--Muller code of order $r$ and length $2^m$, namely the
evaluations on $\F_2^m$ of polynomials of degree at most $r$.

\begin{definition}[Column indicator]\label{def:exp-indicator}

Let $M$ be a binary matrix with $r$ rows and pairwise distinct columns, and let
$X\subseteq\F_2^r$ be its set of columns. The \emph{column indicator} of $M$ is the Boolean
function
\[
\chi_M\colon\F_2^r\longrightarrow\F_2,\qquad
\chi_M(v)=\begin{cases}1&\text{if }v\in X,\\0&\text{otherwise,}\end{cases}
\]
For a Boolean polynomial or function $f$, its Hamming weight $\lvert f\rvert$ is the number of inputs at which
$f$ equals $1$, equivalently the Hamming weight of its evaluation vector. Hence
the same notation applies when $f$ represents an evaluation vector in a
Reed--Muller code or coset. In particular,
$\lvert \chi_M\rvert=\lvert X\rvert$, which equals the number of columns because the columns are
distinct.
Note every Boolean function $g$ on $\F_2^r$ has a unique multilinear expansion
\[
g=\sum_{T\subseteq\{1,\dots,r\}}c_T\prod_{l\in T}v_l,
\qquad c_T\in\F_2 ,
\]
and its degree is the largest
$\lvert T\rvert$ with $c_T=1$.  If $\chi_M$ has degree $\deg\chi_M$, then it belongs to the Reed--Muller code $\RM(\deg\chi_M,\,r)$.
\end{definition}
For the stabilizer matrix $M=\SF$, the ambient dimension is $r=s$.
Write $\chi_S$ for its indicator, whose support is the projective point set
$\mathcal P\subseteq\F_2^s\setminus\{0\}$ of Section~\ref{sec:exp-fg48}.
Consequently, $\chi_S(0)=0$ and $\lvert\chi_S\rvert=n$.
For the full generator matrix $M=G=[K;\SF]$, the ambient dimension is $r=k+s$,
and we write $\chi_G$ for its indicator.
Since the stabilizer columns are distinct, so are the columns of $G$, and the
indicator $\chi_G$ records $G$ faithfully.

Equation~\eqref{eq:exp-weight-section} computes a codeword weight by summing a linear
function over $\mathcal P$. For the full generator matrix $G$, each CH condition instead sums a
monomial of degree at most three over the columns and retains only the parity. For
Boolean functions $f,g$ on $\F_2^{r}$, define~\cite{campbellUnifiedFrameworkMagic2017}
\begin{equation*}
\langle f,g\rangle=
\left(\sum_{v\in\F_2^{r}}f(v)g(v)\right)\bmod 2.
\end{equation*}
In Boolean algebra and in SAT encodings, a sum mod $2$ i.e the parity, is the XOR, denoted $\bigoplus$.
For the linear function $\ell_a(z)=a\cdot z$, the inner product
$\langle\chi_S,\ell_a\rangle$ counts the parity of the points of $\mathcal P$ outside
the hyperplane $\mathcal H_a$. Hence the degree-one case recovers the parity of
the hyperplane count in~\eqref{eq:exp-weight-section}.

For a nonempty set $T$ of at most three rows, let
$T_K\subseteq\{1,2,3\}$ and $T_S\subseteq\{1,\ldots,s\}$ index its logical and stabilizer rows.
Define
\[
m_{T_K}(x)=\prod_{i\in T_K}x_i,\qquad
m_{T_S}(y)=\prod_{j\in T_S}y_j,\qquad
m_T(x,y)=m_{T_K}(x)m_{T_S}(y).
\]
On a column of $G$, $m_T=1$ exactly when every selected row has entry $1$.
Consequently, $\langle\chi_G,m_T\rangle$ counts their overlap modulo two.
Note that the factorization $m_T(x,y)=m_{T_K}(x)m_{T_S}(y)$
comes only from partitioning the selected rows into the $K$ and $S$ rows and no factorization of
$\chi_G$ is assumed.

Each CH condition is such an inner product for a set $T$ of at most three rows.
Define the \emph{label map} $\kappa\colon\mathcal P\to\F_2^3$ by assigning
to each stabilizer column the three entries of $K$ above it.
Lemma~\ref{lem:exp-fiber} ensures that each column has one such label.
In terms of $\kappa$, the CH conditions read
\begin{equation}\label{eq:exp-bridge}
\langle\chi_G,m_T\rangle
=\bigoplus_{\beta\in\mathcal P}m_{T_K}\bigl(\kappa(\beta)\bigr)m_{T_S}(\beta)=
\begin{cases}
1 & \text{if }T=\{1,2,3\}\text{, the K3 condition},\\
0 & \text{otherwise}.
\end{cases}
\end{equation}
To sum over all nonzero $\beta$ rather than only $\mathcal P$,
set $\kappa(\beta)=(0,0,0)$ for $\beta\notin\mathcal P$.
Multiplying by $\chi_S(\beta)$ makes every term outside $\mathcal P$ zero,
so Equation~\eqref{eq:exp-bridge} becomes
\begin{equation}\label{eq:exp-clause}
\bigoplus_{\beta\ne0} \chi_S(\beta)\,m_{T_S}(\beta)\,
m_{T_K}(\kappa(\beta))\;=\;
\begin{cases}
1,&T=\{1,2,3\},\\
0,&\text{otherwise}.
\end{cases}
\end{equation}
For the three stabilizer-only conditions S1--S3, $T_K=\emptyset$, so
$m_{T_K}=1$ and the label map drops out.

Appendix~\ref{app:exp-search} describes how the logical rows are found for a
fixed support.
For a fixed support and two logical rows satisfying the pair tests~\eqref{eq:exp-admissible-pair},
the remaining equations become linear in $\kappa_3$.
Algorithm~\ref{alg:exp-search} uses this fact to test each pair of logical rows
by solving a linear system for the third.

Lemma~\ref{lem:exp-coset} below expresses every solution of the CH equations as a
member of a Reed--Muller coset.

\begin{restatable}{lemma}{ExpCoset}\label{lem:exp-coset}
Write $r=k+s$. Let $\delta_0=\prod_{i=1}^{r}(1+z_i)$ be the indicator
of the origin of $\F_2^r$. Fix $\chi_{G,0}$ with
$\langle\chi_{G,0},x_1x_2x_3\rangle=1$, orthogonal to all monomials of degree
below three and all other degree-three monomials.
The CH conditions force the $G$ indicator
to belong to a coset of $\RM(r-4,r)$,
\begin{equation}\label{eq:exp-fullcoset}
\chi_G\;\in\;\chi_{G,0}+\varepsilon\delta_0+\RM(r-4,r),\qquad \varepsilon=n\bmod2,
\end{equation}
which does not depend on the choice of $\chi_{G,0}$.
\end{restatable}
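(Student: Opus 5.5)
The proof will rest on Reed--Muller duality: $\RM(\rho,r)^\perp=\RM(r-\rho-1,r)$ under the pairing $\langle f,g\rangle$. Taking $\rho=3$, the dual of $\RM(3,r)$ is $\RM(r-4,r)$. A Boolean function on $\F_2^r$ is therefore determined modulo $\RM(r-4,r)$ exactly by its inner products with the monomials of degree at most three. That includes the constant monomial $1$, whose inner product is the weight parity.

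First, translate the CH conditions into inner products. By \eqref{eq:exp-bridge}, each of the nine conditions reads $\langle\chi_G,m_T\rangle=[T=\{1,2,3\}]$ for every nonempty set $T$ of at most three rows. Here $m_T$ ranges over all monomials of degree $1,2,3$ in $z=(x,y)$. The only monomial not yet covered is the empty one, $m_\emptyset=1$. For it, $\langle\chi_G,1\rangle=\lvert\chi_G\rvert\bmod 2=n\bmod 2=\varepsilon$, since the columns of $G$ are distinct (Definition~\ref{def:exp-indicator}).

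Second, compare with the reference function. By hypothesis $\chi_{G,0}$ has the same inner products as $\chi_G$ with every monomial of degree $1$ to $3$. Its inner product with $1$, however, is unspecified. I will therefore add $\varepsilon'\delta_0$, where $\varepsilon'$ is chosen to fix the constant term. Since $\delta_0$ is supported only at the origin, $\langle\delta_0,m_T\rangle=m_T(0)=0$ for every nonempty $T$, and $\langle\delta_0,1\rangle=1$. Hence $\chi_G-\chi_{G,0}-\varepsilon'\delta_0$ is orthogonal to $\RM(3,r)$, so it lies in $\RM(r-4,r)$ by duality.

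The step needing care is the parity bookkeeping behind writing the coefficient as $\varepsilon=n\bmod 2$. It requires $\lvert\chi_{G,0}\rvert$ to be even, for instance when $\chi_{G,0}$ is normalized to a function with even weight. Otherwise one absorbs $\lvert\chi_{G,0}\rvert$ into $\varepsilon$. I would state this normalization explicitly when fixing $\chi_{G,0}$. If the normalization is not imposed, note instead that when $r\ge4$, changing $\chi_{G,0}$ by an element of $\RM(r-4,r)$ can flip its parity only if $\RM(r-4,r)$ contains odd-weight words, which happens exactly when $r-4=r$. Since that never occurs, every element of $\RM(r-4,r)$ has even weight, and the parity of $\chi_{G,0}$ is an invariant of the coset it generates.

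Finally, independence of the choice follows from the same duality. Any two valid choices $\chi_{G,0},\chi'_{G,0}$, normalized as above, agree in all inner products with monomials of degree at most three. Their difference is therefore orthogonal to $\RM(3,r)$ and lies in $\RM(r-4,r)$, so both give the same coset \eqref{eq:exp-fullcoset}. The argument needs $r\ge4$, which holds because $k=3$ and $s\ge1$.
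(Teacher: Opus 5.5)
Your proposal is correct and follows the paper's proof. Both arguments read the CH conditions through \eqref{eq:exp-bridge} as inner products with the nonconstant monomials of degree at most three, fix the constant term with $\varepsilon\delta_0$ via \eqref{eq:exp-deltainner}, conclude with $\RM(3,r)^\perp=\RM(r-4,r)$, and use the same duality for independence of the representative. The parity normalization you flag is not an extra assumption: the constant monomial has degree zero, so the hypothesis that $\chi_{G,0}$ is orthogonal to all monomials of degree below three already makes $\lvert\chi_{G,0}\rvert$ even, which is exactly the fact $\langle\chi_{G,0},\mathbf 1\rangle=0$ the paper uses.
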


The proof of Lemma~\ref{lem:exp-coset} is given in
Appendix~\ref{app:exp-coset-proof}.

For three logical rows, the function $\chi_{G,0}=y_1\cdots y_s$ on $\F_2^{3+s}$ is one special solution. The function equals $1$ exactly at the $8$ points $(x,\mathbf1)$ with
$x\in\F_2^3$, where $m_{T_S}(\mathbf1)=1$. Hence, for every $T$ with $\lvert T\rvert\le3$,
\[
\langle\chi_{G,0},m_T\rangle=\Bigl(\sum_{x\in\F_2^3}m_{T_K}(x)\Bigr)\bmod2
=2^{3-\lvert T_K\rvert}\bmod2,
\]
which equals $1$ only for $T_K=\{1,2,3\}$.
At even length, the $\delta_0$ term vanishes, so~\eqref{eq:exp-fullcoset} gives
$\chi_G\in y_1\cdots y_s+\RM(s-1,s+3)$.
Hence $\deg\chi_G=s$, with $y_1\cdots y_s$ as its only degree-$s$ term.

Represent a column of $G$ by $(x,u,w)\in\F_2^{3+s}$. The last $s$ coordinates
$(u,w)$ are the block coordinates of Corollary~\ref{cor:exp-blockform}, and the first
three coordinates are the logical label $x=\kappa(u,w)$. Thus, for a column with label $\alpha$
and syndrome $\beta=(u,w)$, the coordinate functions satisfy
$x_i(\alpha)=(\alpha)_i$ and $y_l(\beta)=(\beta)_l$.

The support factor $\chi_S(u,w)$ selects the occupied stabilizer columns. For each selected
column, the factor indexed by $i$ enforces $x_i=\kappa_i(u,w)$, so the product is the column
indicator of $G$:
\begin{equation}\label{eq:exp-graphind}
\chi_G(x,u,w)=\chi_S(u,w)\prod_{i=1}^{3}\bigl(1+x_i+\kappa_i(u,w)\bigr).
\end{equation}

\begin{workedexample}[The indicator of $\Qcode$]

Corollary~\ref{cor:exp-blockform} identifies the stabilizer columns with the points
$(u,w)$ for which $w\ne0$, with every $u\in\F_2^p$ occurring for each such $w$. Hence
$\chi_S$ is independent of $u$. Writing $w=(w_1,\ldots,w_q)$,
$\mathbf 1_{w=0}=(1+w_1)(1+w_2)\cdots(1+w_q)$:
\begin{equation}\label{eq:exp-chiS}
\chi_S(w)=\mathbf 1_{w\ne0}=1-\prod_{l=1}^q(1+w_l).
\end{equation}
For $q=2$ (FG48) this is $w_1+w_2+w_1w_2$, of degree $2$.

For each nonzero block $b$, the label map of $\Qcode$ is quadratic in the coordinates
$u$. Table~\ref{tab:exp-labelmap} lists $\kappa$ on each of the three blocks, one row
per block.
The rows are the output of Algorithm~\ref{alg:exp-search}.
\end{workedexample}
\begin{table}[tb]
\centering
\begin{tabular}{@{}c l l l@{}}
\toprule
$w^{(b)}$ & $\kappa_1$ & $\kappa_2$ & $\kappa_3$\\
\midrule
$(1,0)$ & $u_1+u_3$ & $1+u_1+u_2+u_3+u_4$ & $1+u_1+u_1u_2+u_1u_4+u_3u_4$\\
$(0,1)$ & $1+u_3$ & $1+u_1+u_3+u_1u_3+u_1u_4$ & $u_1+u_1u_2+u_2u_3+u_2u_4+u_3u_4$\\
$(1,1)$ & $u_4$ & $u_3+u_4+u_1u_3+u_1u_4$ & $u_1+u_3+u_1u_4+u_2u_3+u_2u_4$\\
\bottomrule
\end{tabular}
\caption{The label map $\kappa$ of $\Qcode$ in the block coordinates $(u,w)$, one row per nonzero
coset. With $\chi_S$ of~\eqref{eq:exp-chiS} these are the two factors of the
indicator~\eqref{eq:exp-graphind}, and their product reproduces $\chi_G$ at every point of
$\F_2^{9}$.}
\label{tab:exp-labelmap}
\end{table}

Equation~\eqref{eq:exp-bridge} separates the roles of the nine CH conditions.
Conditions S1--S3 depend only on the projective point set $\mathcal P$. Conditions K1--K3
constrain the parities of the numbers of columns carrying each logical label. The label map
$\kappa$ appears in KS1--KS3 and identifies where those labels occur.

For $\Qcode$, Equation~\eqref{eq:exp-chiS} gives a stabilizer indicator of degree
$2$, while the even-length consequence of~\eqref{eq:exp-fullcoset} gives a full indicator
of degree $6$. The degree of the
indicator of $G$ still does not determine the physical gate pattern. The next subsection
determines that pattern.

\subsection{CCZ criteria and the 48-qubit sign pattern}\label{sec:exp-native}

For the physical gate
$C\,T^{\otimes n}$ and logical CCZ considered here, comparing coefficients in
\eqref{eq:exp-FQT-CCZ} yields the nine conditions of Table~\ref{tab:exp-conditions}.

\begin{restatable}{theorem}{ExpNS}\label{thm:exp-NS}
Let $G=[K;S]$ be the generator matrix of a CSS $\nkd{n}{3}{d}$ code.
\begin{enumerate}[(i)]
\item \emph{Sufficiency}~\cite{campbellUnifiedFrameworkMagic2017}. If $G$ satisfies the CH conditions,
then $C\,T^{\otimes n}$ realizes logical CCZ for some diagonal Clifford correction $C$, a product of
$S$ and $CZ$ gates.
\item \emph{Necessity}~\cite[Thm.~7, Rem.~8, and Thm.~9]{huDesigningQuantumChannels2023}. Conversely, if $C\,T^{\otimes n}$ realizes logical CCZ for some
diagonal Clifford $C$, then $G$ satisfies all nine conditions.
\end{enumerate}
\end{restatable}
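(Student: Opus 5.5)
Both directions reduce to comparing coefficients of a polynomial identity on $\F_2^{3+s}$. Write $\vec z=(\vec x,\vec y)$ and $f(\vec z)=\lvert G^{\!\top}\vec z\rvert$. The Bravyi--Haah identity~\eqref{eq:exp-BH} expresses $f \bmod 8$ as a polynomial in the $z_i$ with coefficients $\lvert g^i\rvert$, $-2\lvert g^i\wedge g^{i'}\rvert$ and $4\lvert g^i\wedge g^{i'}\wedge g^{i''}\rvert$. The exact $\Z$-version is Lemma~\ref{lem:exp-xorlift}, with degree-$\ge4$ terms vanishing mod $8$.

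Next we parametrise the correction. A diagonal Clifford $C$ acts on $\ket{c}$, $c\in\F_2^n$, by the phase $\omega^{2h(c)}$ with
\[
h(c)=\sum_j a_j c_j+2\sum_{j<j'}b_{jj'}c_jc_{j'}\pmod 4 ,
\]
where $a_j\in\Z_4$ comes from $S$-powers and $b_{jj'}\in\{0,1\}$ from $CZ$s, up to a global phase. Substituting $c=G^{\!\top}\vec z$ and using $c_j=\bigoplus_i G_{ij}z_i$ together with the XOR-lift $c_j=\sum_i G_{ij}z_i-2\sum_{i<i'}G_{ij}G_{i'j}z_iz_{i'}+\dots$, we see that $2h(G^{\!\top}\vec z)$ is a polynomial in $\vec z$ whose monomials of degree $d$ have coefficients divisible by $2^{d}$ mod $8$. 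The cubic coefficients of $2h$ are $0 \bmod 8$ except possibly from the $4c_jc_{j'}$ terms, and those lift to degree $\le 2$ times $4$, hence contribute cubic terms only with coefficient $\equiv 0 \bmod 8$. So $2h$ has degree $\le2$ modulo $8$, with linear coefficients even and quadratic coefficients divisible by $4$.

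\emph{Necessity (ii).} Suppose~\eqref{eq:exp-FQT-CCZ} holds for all $\vec z$. Functions $\F_2^{r}\to\Z_8$ have a unique multilinear expansion in which the degree-$d$ coefficient is determined mod $8$. Rewriting everything in this basis, the condition says that
\[
f+2h \equiv 4x_1x_2x_3 \pmod 8 ,
\]
so we can compare monomials:
\begin{itemize}
\item[--] Degree $3$: $2h$ contributes nothing, so $4\lvert g^i\wedge g^{i'}\wedge g^{i''}\rvert\equiv 4[\{i,i',i''\}=\{1,2,3\}] \pmod 8$. This gives K3, KS3, S3 and KS2's triple cases.
\item[--] Degree $2$: $-2\lvert g^i\wedge g^{i'}\rvert+(\text{multiple of }4)\equiv 0 \pmod 8$, so every pairwise overlap is even. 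This gives K2, KS1, S2.
\item[--] Degree $1$: $\lvert g^i\rvert+(\text{even})\equiv0$, so every weight is even. This gives K1, S1.
\end{itemize}
The single mixed case KS2, $\lvert K_i\wedge K_{i'}\wedge S_l\rvert$, is a triple overlap, hence already covered in degree $3$. The constant term is absorbed by the global phase. I will verify carefully that the $4b_{jj'}c_jc_{j'}$ lift really produces no odd-multiple-of-$4$ cubic terms. Since $c_jc_{j'}$ is a product of two XORs, its lift involves terms like $4z_iz_{i'}z_{i''}\cdot(\text{even})$, which is the point to check.

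\emph{Sufficiency (i).} Assume the nine conditions. Then the cubic part of $f$ already equals $4x_1x_2x_3 \bmod 8$, and all quadratic coefficients are $\equiv 0$ or $4 \pmod 8$. The linear coefficients $\lvert g^i\rvert$ are even. It remains to cancel the quadratic and linear residue with $2h$, following Campbell--Howard. The direct approach is to pick phase gates on physical qubits so that their pullback under $G^{\!\top}$ matches the residue. Because the residue is a function of $\vec z$ of degree $\le 2$ with coefficients of the right $2$-adic valuation, it defines a logical-plus-stabilizer diagonal Clifford. The correction can therefore be realised on the logical level by $S$ and $CZ$ on the encoded qubits, which are themselves implementable as physical diagonal Cliffords. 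The theorem statement explicitly allows corrections acting on logical qubits.

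The main obstacle is this realisability step. We must show every such degree-$\le2$ phase polynomial in $\vec z$ is the pullback of a physical $h$. For the $\vec y$-dependent part, a nontrivial phase would spoil the stabilizer superposition, so we also need it to vanish. I expect it to be absorbed using the S conditions plus the fact that stabilizer $X$-rows commute with the correction. I would try first to cite Campbell--Howard's construction directly for this step, and to cite Hu--Liang--Calderbank's characterisation for the coefficient-comparison uniqueness in (ii).
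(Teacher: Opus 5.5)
Your necessity argument is the paper's argument. You show that the correction's coefficient on each monomial $m_T$ is divisible by $2^{\lvert T\rvert}$ modulo $8$, then compare coefficients of the Bravyi--Haah expansion degree by degree. The cubic check you flag does hold. The paper gets it from $4c_jc_{j'}=2\bigl(c_j+c_{j'}-(c_j\oplus c_{j'})\bigr)$, which makes every correction term twice an XOR. Alternatively, \eqref{eq:exp-topcoeff} shows that the integer multilinear coefficients of $c_jc_{j'}$ reduce mod $2$ to those of a product of two $\F_2$-linear forms, so its cubic coefficients are even.

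For sufficiency, the paper gives no proof and only cites Campbell--Howard, which is also your fallback. The self-contained route you sketch, however, has a real hole in the $\vec y$-dependent part. You correctly reduce (i) to showing that the residue $-\sum_i\lvert g^{\,i}\rvert z_i+2\sum_{i<i'}\lvert g^{\,i}\wedge g^{\,i'}\rvert z_iz_{i'}$ equals $2h(G^{\!\top}\vec z)$ modulo $8$ for some physical $h$ of the form~\eqref{eq:exp-correction}. But the $\vec y$-dependent part of this residue need not vanish, and S1--S3 cannot absorb it, because they are only parity conditions:
\begin{itemize}
\item when $\mathcal C_2$ is not $8$-divisible (the paper notes that such CH codes exist), some $\lvert S_l\rvert\not\equiv0\pmod8$ or some $\lvert S_l\wedge S_{l'}\rvert\not\equiv0\pmod4$;
\item KS1 likewise allows $\lvert K_i\wedge S_l\rvert\equiv2\pmod4$, which leaves a term $4x_iy_l$.
\end{itemize}
No logical $S$ or $CZ$ can cancel these terms, since they are not functions of $\vec x$. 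The point you are missing is that $C$ does not need to preserve the code space by itself; only $C\,T^{\otimes n}$ must. Moreover, every $z_i$, logical or stabilizer, is a parity of physical bits on the code strings, so $C$ can cancel these terms directly.

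Concretely, since $G$ has full rank $r=3+s$, choose columns $J$ with $G_J$ invertible. Then $z_i=\bigoplus_{j\in J}M_{ij}f_j$, where $\vec f=G^{\!\top}\vec z$.
\begin{itemize}
\item By Lemma~\ref{lem:exp-xorlift}, $2z_i\equiv2\sum_jM_{ij}f_j-4\sum_{j<j'}M_{ij}M_{ij'}f_jf_{j'}\pmod 8$.
\item The term $4z_iz_{i'}$ depends only on $z_iz_{i'}\bmod 2$, which is a quadratic polynomial in the $f_j$.
\end{itemize}
The residue's linear coefficients are even by K1 and S1, and its quadratic coefficients are divisible by $4$ by K2, KS1 and S2. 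So the residue becomes a product of $S$, $Z$ and $CZ$ phases on the qubits in $J$. Meanwhile the cubic part of $\lvert G^{\!\top}\vec z\rvert$ is already $4x_1x_2x_3$ modulo $8$ by K3, KS2, KS3 and S3. This proves (i). The same argument also justifies your unproved claim that logical $S$ and $CZ$ are physical diagonal Cliffords.
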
%
\begin{proof}[Proof sketch of necessity]
Substitute the parity identity~\eqref{eq:exp-BH} into the
demand~\eqref{eq:exp-FQT-CCZ} and compare the coefficient of each monomial of degree $\le3$ in
$\vec z=(\vec x,\vec y)$. Set $\vec f=G^{\!\top}\vec z\in\F_2^n$. The diagonal correction contributes
$2h(\vec f)$, where
\begin{equation}\label{eq:exp-correction}
h(\vec f)=\sum_j\lambda_j f_j+\sum_{j<j'}\lambda_{jj'}f_jf_{j'},
\qquad\lambda_j\in\{0,1,2,3\},\quad\lambda_{jj'}\in\{0,2\}.
\end{equation}
Here $\lambda_j$ are the $S$-gate phases, $\lambda_{jj'}$ the $CZ$-gate phases, and
all arithmetic is mod~$8$.
The correction can absorb any multiple of $2$ at degree one and any multiple of $4$ at degree
two, and nothing at degree
three~\cite[App.~A]{nezamiClassificationSmallTriorthogonal2022}. The proof of
Theorem~\ref{thm:exp-NS} in Appendix~\ref{app:exp-ch-proof}
proves this from the parity identity~\eqref{eq:exp-BH}. So the logical gate is CCZ exactly when every residual coefficient
vanishes and the cubic coefficient of $x_1x_2x_3$ is $4$.
Imposing that on one monomial of degree at most three at a time, and discarding the coefficients
the correction absorbs, leaves exactly the nine parities of
Table~\ref{tab:exp-conditions}~\cite{campbellUnifiedFrameworkMagic2017}.
\end{proof}

A product of single-qubit gates cannot spread an error between qubits.
We consider a physical $T/T^\dagger$ pattern specified by $\Gamma\in\{\pm1\}^n$,
with $+1$ for $T$ and $-1$ for $T^\dagger$. Since $T S^\dagger=T^\dagger$,
the pattern is $T^{\otimes n}$ followed by $S^\dagger$ on the qubits with $\Gamma_j=-1$.
For $\Gamma=\vec1$, no correction is applied.

Applying a specific $T/T^\dagger$ pattern to a codeword $\vec v\in\F_2^n$ defines a
\emph{signed weight}. Interpret each coordinate $v_j$ as the integer $0$ or $1$. For a sign
vector $\Gamma\in\{\pm1\}^n$,
\begin{equation}\label{eq:exp-sw}
\mathrm{sw}_\Gamma(\vec v)=\sum_j\Gamma_j v_j,
\end{equation}
the Hamming weight with each coordinate counted with its sign. Signed weight replaces the Hamming weight
$\lvert G^{\!\top}\vec z\rvert$ of~\eqref{eq:exp-BH} whenever the pattern mixes $T$ and $T^\dagger$.

\begin{definition}[$T/T^\dagger$ implementation]\label{def:exp-CI}
$G=[K;S]$ admits a \emph{$T/T^\dagger$ implementation} of CCZ if there is $\Gamma\in\{\pm1\}^n$ with
\begin{equation}\label{eq:exp-CI}
\mathrm{sw}_\Gamma\!\bigl(K^{\!\top}\vec x+S^{\!\top}\vec y\bigr)\equiv 4\,x_1x_2x_3\pmod 8
\qquad\forall\,\vec x\in\F_2^3,\ \vec y\in\F_2^s,
\end{equation}
with $\mathrm{sw}_\Gamma$ the signed weight of~\eqref{eq:exp-sw}. Such an implementation is called \emph{single-qubit transversal}.
\end{definition}

Checking Definition~\ref{def:exp-CI} need not require evaluating every codeword.
Replacing the Hamming weight in Equation~\eqref{eq:exp-BH} by the signed
weight~\eqref{eq:exp-sw} gives the following equivalent row conditions for Equation~\eqref{eq:exp-CI}.
\begin{proposition}
\label{prop:exp-signed-overlaps}
The $T/T^\dagger$ pattern specified by $\Gamma$ is single-qubit transversal if and only if, for every row $g$,
every pair of distinct rows $g,g'$, and every triple of distinct rows $g,g',g''$,
\begin{equation}\label{eq:exp-signed-overlap-conditions}
\begin{aligned}
\mathrm{sw}_\Gamma(g)&\equiv0\pmod8,\\
\mathrm{sw}_\Gamma(g\wedge g')&\equiv0\pmod4,\\
\mathrm{sw}_\Gamma(g\wedge g'\wedge g'')&\equiv
\begin{cases}
1\pmod2,&\{g,g',g''\}=\{K_1,K_2,K_3\},\\
0\pmod2,&\text{otherwise}.
\end{cases}
\end{aligned}
\end{equation}
Moreover, the conditions above that contain a stabilizer row make the signed weight independent
of the stabilizers: if they hold, then
\begin{equation}\label{eq:exp-signed-stabilizer-cancellation}
\mathrm{sw}_\Gamma(K^{\!\top}\vec x+S^{\!\top}\vec y)
\equiv\mathrm{sw}_\Gamma(K^{\!\top}\vec x)\pmod8
\end{equation}
for every $\vec x$ and $\vec y$.
\end{proposition}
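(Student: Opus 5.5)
We prove a signed analogue of the parity identity~\eqref{eq:exp-BH} and then compare coefficients, exactly as in the proof sketch of Theorem~\ref{thm:exp-NS}. The correction term is now absent, so every monomial coefficient must be matched exactly.

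\textbf{Step 1: the signed inclusion--exclusion identity.} For rows $g^{\,1},\ldots,g^{\,r}$ of $G$ and $\vec z\in\F_2^r$, each coordinate $j$ of $\vec f=G^{\!\top}\vec z$ is the XOR $f_j=\bigoplus_i z_ig^i_j$. Over $\Z$ this XOR satisfies
\[
f_j=\sum_{\emptyset\ne T}(-2)^{\lvert T\rvert-1}\prod_{i\in T}z_ig^i_j .
\]
Multiplying by $\Gamma_j$ and summing over $j$ gives
\[
\mathrm{sw}_\Gamma(G^{\!\top}\vec z)=\sum_{T\neq\emptyset}(-2)^{\lvert T\rvert-1}\,\mathrm{sw}_\Gamma\Bigl(\textstyle\bigwedge_{i\in T}g^i\Bigr)\prod_{i\in T}z_i .
\]
Terms with $\lvert T\rvert\ge4$ vanish mod $8$. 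This leaves the signed version of~\eqref{eq:exp-BH}, with coefficients $\mathrm{sw}_\Gamma(g)$, $-2\,\mathrm{sw}_\Gamma(g\wedge g')$ and $4\,\mathrm{sw}_\Gamma(g\wedge g'\wedge g'')$. This is the integer identity of Lemma~\ref{lem:exp-xorlift} applied coordinatewise, and the proof only needs linearity in the weights $\Gamma_j$.

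\textbf{Step 2: sufficiency.} Suppose the conditions~\eqref{eq:exp-signed-overlap-conditions} hold. Then every degree-one and degree-two term is $0\bmod8$. Every cubic term is $0\bmod 8$ except $4\cdot(\text{odd})\,x_1x_2x_3\equiv 4x_1x_2x_3$. Hence~\eqref{eq:exp-CI} holds.

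\textbf{Step 3: necessity.} This is the main point requiring care, because a polynomial function $\F_2^r\to\Z_8$ need not determine its coefficients uniquely unless the representation is restricted. We use the standard fact that functions $\F_2^r\to\Z_8$ have unique multilinear expansions with $\Z_8$ coefficients, obtained by Möbius inversion: the coefficient of $\prod_{i\in T}z_i$ equals $\sum_{U\subseteq T}(-1)^{\lvert T\setminus U\rvert}F(\mathbf 1_U)$. We apply this to the function $\mathrm{sw}_\Gamma(G^{\!\top}\vec z)-4x_1x_2x_3$, which vanishes identically mod $8$, so all its multilinear coefficients are $0$ in $\Z_8$. Reading off the coefficients gives:
\begin{itemize}
\item degree one: $\mathrm{sw}_\Gamma(g)\equiv0\pmod 8$;
\item degree two: $-2\,\mathrm{sw}_\Gamma(g\wedge g')\equiv0\pmod 8$, i.e. the overlap is $\equiv0\pmod4$;
\item degree three: $4\,\mathrm{sw}_\Gamma(g\wedge g'\wedge g'')\equiv 4\,[\{g,g',g''\}=\{K_1,K_2,K_3\}]\pmod8$, which is the stated parity.
\end{itemize}

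\textbf{Step 4: stabilizer cancellation.} For~\eqref{eq:exp-signed-stabilizer-cancellation}, expand both sides with the identity of Step 1. Their difference consists only of monomials containing at least one $y_l$. Each such coefficient is governed by a condition in~\eqref{eq:exp-signed-overlap-conditions} that involves a stabilizer row, and each such condition makes its term vanish mod $8$: single stabilizer rows are $0\bmod8$, pair overlaps are $0\bmod4$, and triple overlaps involving $S$ are even. So the difference is $0\bmod8$.
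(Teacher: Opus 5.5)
Your proof is correct and follows essentially the paper's argument: weighting the XOR lift by $\Gamma_j$ to get the exact integer expansion of $\mathrm{sw}_\Gamma(G^{\!\top}\vec z)$, extracting coefficients by M\"obius inversion to obtain the three congruences, and subtracting the $\vec y=0$ expansion for the stabilizer cancellation. The differences are only presentational. You prove sufficiency and necessity separately and state the $\Z_8$-valued M\"obius inversion explicitly, whereas the paper invokes its coefficient identity~\eqref{eq:exp-topcoeff} once to get both directions of the equivalence.
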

Once the conditions involving a stabilizer row hold,
Equation~\eqref{eq:exp-signed-stabilizer-cancellation} reduces the check of
Equation~\eqref{eq:exp-CI} to the eight logical inputs $\vec x\in\F_2^3$ at $\vec y=0$.

The proof is given in Appendix~\ref{app:exp-signed-overlaps-proof}.

Write $\Gamma_j=1-2\gamma_j$ with $\gamma_j\in\{0,1\}$. The first and second conditions in
Equation~\eqref{eq:exp-signed-overlap-conditions} then become
\begin{equation}\label{eq:exp-signed-local-system}
\gamma\cdot g\equiv\frac{\lvert g\rvert}{2}\pmod4,
\qquad
\gamma\cdot(g\wedge g')\equiv
\frac{\lvert g\wedge g'\rvert}{2}\pmod2,
\end{equation}
for every row $g$ and pair of distinct rows $g,g'$. Since $\Gamma_j\equiv1\pmod2$,
\[
\mathrm{sw}_\Gamma(g\wedge g'\wedge g'')
\equiv\lvert g\wedge g'\wedge g''\rvert\pmod2.
\]
The CH conditions therefore imply the third line of Equation~\eqref{eq:exp-signed-overlap-conditions}
for every choice of $\Gamma$.

Substituting $\Gamma_j=1-2\gamma_j$ directly
into~\eqref{eq:exp-CI} gives a system mod $4$ whose binary solutions are exactly the
single-qubit transversal patterns, as shown in the following Corollary.

\begin{corollary}\label{cor:exp-Z4}
Let $G=[K;S]$ satisfy the nine CH conditions of Table~\ref{tab:exp-conditions}.
Then $G$ admits a $T/T^\dagger$ implementation of CCZ
(Definition~\ref{def:exp-CI}) if and only if the system
\begin{equation}\label{eq:exp-Z4half}
\gamma\cdot c\;\equiv\;\tfrac{1}{2}\lvert c\rvert-2x_1x_2x_3\pmod 4,
\qquad c=K^{\!\top}\vec x+S^{\!\top}\vec y,
\end{equation}
has a solution with $\gamma_j\in\{0,1\}$ for all~$j$.
\end{corollary}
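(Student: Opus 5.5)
The plan is to substitute $\Gamma_j=1-2\gamma_j$ into the defining congruence~\eqref{eq:exp-CI} and divide by $2$. Doing this needs one parity fact: every codeword has even weight. For a codeword $c=K^{\!\top}\vec x+S^{\!\top}\vec y$ and a binary $\gamma$, the substitution gives
\[
\mathrm{sw}_\Gamma(c)=\sum_j(1-2\gamma_j)c_j=\lvert c\rvert-2\,\gamma\cdot c ,
\]
where $\gamma\cdot c$ is the integer $\sum_j\gamma_jc_j$. Condition~\eqref{eq:exp-CI} therefore reads
\[
\lvert c\rvert-2\,\gamma\cdot c\equiv4x_1x_2x_3\pmod8
\qquad\text{for all }\vec x,\vec y .
\]
The map $\Gamma\leftrightarrow\gamma$ is a bijection between $\{\pm1\}^n$ and $\{0,1\}^n$, so it suffices to show this congruence is equivalent to~\eqref{eq:exp-Z4half} codeword by codeword.

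First I would show that every codeword of $\mathcal C_1$ has even weight, so that $\tfrac12\lvert c\rvert$ is an integer. The weight is linear mod $2$: $\lvert a+b\rvert\equiv\lvert a\rvert+\lvert b\rvert\pmod 2$. Hence it suffices that every row of $G$ has even weight, which is exactly conditions K1 and S1. Equivalently, one can read it off the parity identity~\eqref{eq:exp-BH} reduced mod $2$, where only the single-row terms survive.

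With $\lvert c\rvert=2m$, the congruence becomes $2m-2\,\gamma\cdot c\equiv 4x_1x_2x_3\pmod 8$. Since every term carries a factor of $2$, this is equivalent to $m-\gamma\cdot c\equiv2x_1x_2x_3\pmod4$, i.e.
\[
\gamma\cdot c\equiv\tfrac12\lvert c\rvert-2x_1x_2x_3\pmod 4 ,
\]
which is~\eqref{eq:exp-Z4half}. Both directions hold, because multiplying or dividing a congruence by $2$ moves exactly between mod $8$ and mod $4$. Ranging over all $\vec x,\vec y$ gives the stated equivalence.

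There is no real obstacle. The only point needing care is the evenness of $\lvert c\rvert$, which is where the CH hypothesis enters (specifically K1 and S1). I would note that the remaining CH conditions are not needed for the equivalence itself. Proposition~\ref{prop:exp-signed-overlaps} optionally reduces the check to $\vec y=0$, but that reduction is not part of the claim.
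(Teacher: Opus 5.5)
Your proposal is correct and follows essentially the same route as the paper: substitute $\Gamma_j=1-2\gamma_j$ to get $\mathrm{sw}_\Gamma(c)=\lvert c\rvert-2(\gamma\cdot c)$, use K1 and S1 to make every codeword even, and divide the mod-$8$ congruence by $2$ to obtain the mod-$4$ system. Your remarks that the $\Gamma\leftrightarrow\gamma$ correspondence is a bijection and that only K1 and S1 are actually used are accurate.
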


\begin{proof}
Write $\Gamma_j=1-2\gamma_j$ with $\gamma_j\in\{0,1\}$.
Then $\mathrm{sw}_\Gamma(c)=\lvert c\rvert-2(\gamma\cdot c)$,
and condition~\eqref{eq:exp-CI} reads
$\lvert c\rvert-2(\gamma\cdot c)\equiv4x_1x_2x_3\pmod 8$.
Conditions K1 and S1 make every generator row, and hence every codeword $c$,
even. Both sides of Equation~\eqref{eq:exp-CI} are therefore
even, and division by $2$ gives~\eqref{eq:exp-Z4half}.
The left side $\sum_j c_j\gamma_j$ is linear over $\Z_4$ in $\gamma$, and the right
side is a constant for each codeword.
\end{proof}
Step~5 of Algorithm~\ref{alg:exp-search} solves~\eqref{eq:exp-Z4half} for a binary
$\gamma$ and accepts a code only when a solution exists (Appendix~\ref{app:exp-search}).
The $\nkd{47}{3}{3}$ code of~\cite{jacintoCompactFactories2026} satisfies the CH conditions but
fails this test:
\begin{example}\label{prop:exp-nonnative47}
The $\nkd{47}{3}{3}$ code of~\cite{jacintoCompactFactories2026} satisfies the nine CH conditions
($d_X=16$, $d_Z=3$, $s=6$, projective support). For it, Equation~\eqref{eq:exp-Z4half} is a
$\Z_4$-linear system in $47$ unknowns, one equation per codeword, and elimination over $\Z_4$
shows it inconsistent. So the code admits no $T/T^\dagger$ implementation of transversal CCZ, and
any diagonal Clifford correction for it needs entangling $CZ$ gates.
\end{example}

For FG48, the following identity reduces the signed-weight calculation to
Hamming weights on the three coset blocks of Corollary~\ref{cor:exp-blockform}.

\begin{restatable}{proposition}{ExpBlockCriterion}
\label{prop:exp-blockcriterion}
Let $c=K^{\!\top}\vec x+S^{\!\top}\vec y$, and let $c_b$ and $\gamma_b$ be the
restrictions of $c$ and the flip vector $\gamma$ to block~$b$ of
Corollary~\ref{cor:exp-blockform}. If $\Gamma_j=(-1)^{\gamma_j}$, then
\begin{equation}\label{eq:exp-blockcriterion}
\mathrm{sw}_\Gamma(c)
=\sum_b\bigl(\lvert \gamma_b+c_b\rvert-\lvert \gamma_b\rvert\bigr).
\end{equation}
Thus the pattern specified by $\Gamma$ is single-qubit transversal if and only if, for every $\vec x,\vec y$,
\begin{equation}\label{eq:exp-blockcriterion-mod16}
\sum_b\lvert \gamma_b+c_b\rvert\equiv
\sum_b\lvert \gamma_b\rvert+4x_1x_2x_3\pmod 8.
\end{equation}
If all conditions in Proposition~\ref{prop:exp-signed-overlaps} that contain a stabilizer row hold,
the left hand side is independent of $\vec y$ modulo $8$, and it suffices to check the following
equation for the eight $\vec x\in\F_2^3$:
\begin{equation}\label{eq:exp-block-logical-criterion}
\sum_b\lvert \gamma_b+(K^{\!\top}\vec x)_b\rvert
\equiv\sum_b\lvert \gamma_b\rvert+4x_1x_2x_3\pmod 8.
\end{equation}

Equations~\eqref{eq:exp-blockcriterion}--\eqref{eq:exp-block-logical-criterion} apply to the
Boolean block restrictions $c_b$ and $\gamma_b$ with no degree restriction.
\end{restatable}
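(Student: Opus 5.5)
The identity~\eqref{eq:exp-blockcriterion} is a coordinatewise computation, so I would prove it one coordinate at a time and then sum. Fix a coordinate $j$ with $\gamma_j,c_j\in\{0,1\}$, and read $\lvert\gamma_j+c_j\rvert$ as the Hamming weight of the XOR $\gamma_j\oplus c_j$. Over the integers,
\[
\lvert\gamma_j\oplus c_j\rvert-\lvert\gamma_j\rvert=\gamma_j+c_j-2\gamma_jc_j-\gamma_j=c_j(1-2\gamma_j)=\Gamma_jc_j,
\]
since $\Gamma_j=(-1)^{\gamma_j}=1-2\gamma_j$. Summing over all $j$ gives $\mathrm{sw}_\Gamma(c)=\lvert\gamma\oplus c\rvert-\lvert\gamma\rvert$. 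The blocks of Corollary~\ref{cor:exp-blockform} partition the coordinates, so the Hamming weights split as sums over $b$, which is exactly~\eqref{eq:exp-blockcriterion}. The calculation uses only the binary values of $c_b$ and $\gamma_b$, so no degree restriction is needed; this is the closing remark of the proposition.

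Next I would substitute~\eqref{eq:exp-blockcriterion} into Definition~\ref{def:exp-CI}. Condition~\eqref{eq:exp-CI} becomes $\sum_b\lvert\gamma_b+c_b\rvert-\sum_b\lvert\gamma_b\rvert\equiv4x_1x_2x_3\pmod8$ for all $\vec x,\vec y$. Moving $\sum_b\lvert\gamma_b\rvert$ to the right-hand side gives~\eqref{eq:exp-blockcriterion-mod16}, so the two conditions are equivalent.

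For the reduction to eight inputs, I would invoke~\eqref{eq:exp-signed-stabilizer-cancellation} of Proposition~\ref{prop:exp-signed-overlaps}. When the conditions involving stabilizer rows hold, $\mathrm{sw}_\Gamma(K^{\!\top}\vec x+S^{\!\top}\vec y)\equiv\mathrm{sw}_\Gamma(K^{\!\top}\vec x)\pmod8$. Since $\sum_b\lvert\gamma_b\rvert$ is a constant, the left side of~\eqref{eq:exp-blockcriterion-mod16} is then independent of $\vec y$ modulo $8$. Setting $\vec y=0$ reduces the criterion to~\eqref{eq:exp-block-logical-criterion} for the eight $\vec x\in\F_2^3$.

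There is no real obstacle here. The only point needing care is to keep the XOR $\gamma_b+c_b$ distinct from the integer sum, so that the per-coordinate identity is read correctly.
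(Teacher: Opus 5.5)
Your proof is correct and follows essentially the same route as the paper. The paper applies Corollary~\ref{cor:exp-xorweight} blockwise to get $\lvert\gamma_b+c_b\rvert-\lvert\gamma_b\rvert=\lvert c_b\rvert-2\lvert\gamma_b\wedge c_b\rvert$, which is your coordinatewise identity summed over a block. It then substitutes the result into~\eqref{eq:exp-CI} and uses~\eqref{eq:exp-signed-stabilizer-cancellation} to reduce to $\vec y=0$, exactly as you do.
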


To apply Proposition~\ref{prop:exp-blockcriterion} to FG48, take $S=\SF$
from~\eqref{eq:exp-blockform} and the logical rows $K$ from Table~\ref{tab:exp-labelmap}.
The resulting $G=[K;S]$ has $\operatorname{rank}G=9$,
$\operatorname{rank}S=6$, $d_X=16$, and $d_Z=3$.

With the stabilizer rows $S$ and logical rows $K$ fixed, we choose the
$T/T^\dagger$ exponents $\Gamma_j=(-1)^{\gamma_j}$ by solving
Equation~\eqref{eq:exp-Z4half} for the flip vector $\gamma\in\F_2^{48}$.
The polynomial $\gamma(u,w)$ below specifies this vector.

\begin{restatable}{proposition}{ExpQFortyEightPattern}
\label{prop:exp-q48-tpattern}\label{prop:exp-tpattern}
Index the columns of $\SF$ by their coordinates $(u,w)$ of~\eqref{eq:exp-blockform},
writing $u=(u_1,u_2,u_3,u_4)$ for the four rows of $A$ and $w$ for the coset label.
The $T/T^{\dagger}$ pattern of $\Qcode$ is obtained by evaluating the following polynomial.
Write $\mathrm{ev}_{\mathcal P}$ for the
evaluation map $p\mapsto(p(\beta))_{\beta\in\mathcal P}$.
\begin{equation}\label{eq:exp-tpattern}
\gamma(u,w)=u_3+w_1Q_1(u)+w_2Q_2(u),\qquad
\begin{aligned}
Q_1&=u_1+u_2+u_3+u_1u_3+u_2u_3+u_2u_4,\\
Q_2&=u_3+u_4+u_1u_2+u_1u_4+u_3u_4,
\end{aligned}
\end{equation}
The flip vector is $\gamma=\mathrm{ev}_{\mathcal P}(\gamma(u,w))\in\F_2^{48}$.
The flip vector has weight $22$. The resulting $26\,T+22\,T^\dagger$ pattern is a single-qubit transversal
implementation of logical CCZ for $\Qcode$.
\end{restatable}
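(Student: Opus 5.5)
The claim is a finite verification, and I will organise it with Proposition~\ref{prop:exp-signed-overlaps} and Proposition~\ref{prop:exp-blockcriterion} so that no full codeword enumeration is needed.

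\textbf{Step 1: the flip vector.} I will evaluate $\gamma(u,w)$ block by block.
\begin{itemize}
\item On block $w=(1,0)$, $\gamma=u_3+Q_1$.
\item On block $w=(0,1)$, $\gamma=u_3+Q_2$.
\item On block $w=(1,1)$, $\gamma=u_3+Q_1+Q_2$.
\end{itemize}
Counting the weight of each of these quartic-free Boolean functions on $\F_2^4$ should give a total of $22$. That count determines the $26\,T+22\,T^\dagger$ split.

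\textbf{Step 2: CH conditions.} Before using the signed-overlap criterion, I need $G=[K;\SF]$, with $K$ given by Table~\ref{tab:exp-labelmap}, to satisfy the nine CH conditions.
\begin{itemize}
\item S1--S3 hold by $8$-divisibility of FG48 (Proposition~\ref{prop:exp-uniqueness}).
\item K1--K3 and KS1--KS3 are parities $\langle\chi_G,m_T\rangle$ via~\eqref{eq:exp-clause}.
\end{itemize}
For the K and KS parities I use the block decomposition. For each block, $\bigoplus_u m_{T_K}(\kappa(u,w))\,m_{T_S}(u,w)$ is the parity of a Boolean function of $u\in\F_2^4$. That parity equals the coefficient of $u_1u_2u_3u_4$ in its algebraic normal form. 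The $\kappa_i$ have degree at most $2$ and the $m_{T_S}$ factors are linear in $u$ (the $w$ factors are constants on a block). So only products reaching degree $4$ contribute, which reduces each check to a short symbolic computation.

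\textbf{Step 3: signed conditions with a stabilizer row.} Next I verify the first two lines of~\eqref{eq:exp-signed-overlap-conditions} for every condition containing a stabilizer row, in the reformulated form~\eqref{eq:exp-signed-local-system}. The third line follows automatically from the CH conditions. For each such generator $g$ and pair $g\wedge g'$, I compute $\gamma\cdot g \bmod 4$ and $\gamma\cdot(g\wedge g')\bmod 2$ per block. Each is a weight count of a low-degree function on $\F_2^4$, and these counts are compared against $|g|/2$ and $|g\wedge g'|/2$. By~\eqref{eq:exp-signed-stabilizer-cancellation} this makes $\mathrm{sw}_\Gamma$ independent of $\vec y$.

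\textbf{Step 4: logical inputs.} It then remains to check~\eqref{eq:exp-block-logical-criterion} for the eight $\vec x$. For each $\vec x$, the vector $(K^{\top}\vec x)_b$ on block $b$ is the function $x_1\kappa_1+x_2\kappa_2+x_3\kappa_3$ of $u$. So I will tabulate $\lvert\gamma_b+x\cdot\kappa\rvert$ on each block, sum over $b$, and compare with $\sum_b\lvert\gamma_b\rvert+4x_1x_2x_3 \pmod 8$. The required value at $\vec x=(1,1,1)$ is the extra $4$.

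\textbf{Main obstacle.} The main cost is the bookkeeping in Step~3, since there are $O(s^2)$ pair conditions mod $2$ and $s+3$ row conditions mod $4$. I would organise these by block, using that the stabilizer rows are $u_l$ or constants $w_l$ on each block. The mod-$4$ weight counts of quadratic Boolean functions on $\F_2^4$ can then be read off from their rank/Arf-invariant classification. Alternatively, since the statement is a finite certificate, I would present the computation as the explicit $\Z_4$ check of Corollary~\ref{cor:exp-Z4} on the $2^9$ codewords, which directly confirms~\eqref{eq:exp-Z4half}.
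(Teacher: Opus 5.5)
Your proposal is correct and follows the paper's proof. Like the paper, you verify $\mathrm{sw}_\Gamma(S_l)\equiv0\pmod8$ and $\mathrm{sw}_\Gamma(K_i\wedge S_l),\,\mathrm{sw}_\Gamma(S_l\wedge S_{l'})\equiv0\pmod4$, use Equation~\eqref{eq:exp-signed-stabilizer-cancellation} to reduce to $\vec y=0$, and tabulate the block weights $\lvert\gamma_b+(K^{\top}\vec x)_b\rvert$ for the eight logical inputs (with $\lvert\gamma\rvert=6+6+10=22$). The only difference is that you re-verify K1--K3 and KS1--KS3 explicitly through top-degree coefficients on each block, whereas the paper takes them from the construction when it cites S3, KS2, and KS3 for the triple conditions.
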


The proof is given in Appendix~\ref{app:exp-tpattern-proof}.

\begin{figure}[H]
\centering
{\tikzset{every picture/.style={}}%
\begin{tikzpicture}[
  cellline/.style={draw=black!60,line width=0.4pt},
  axis/.style={font=\scriptsize,text=black!70},
  panelhead/.style={font=\small},
  panelnote/.style={font=\small},
  rowlabel/.style={font=\scriptsize},
]
\colorlet{tfill}{red!14}
\colorlet{dagfill}{teal!32}
\node[panelhead] at (2.12,0.58) {$w=(1,0)$};
\draw[cellline,fill=tfill] (0.00,-1.06) rectangle (1.06,0.00);
\node at (0.53,-0.53) {$T$};
\draw[cellline,fill=tfill] (1.06,-1.06) rectangle (2.12,0.00);
\node at (1.59,-0.53) {$T$};
\draw[cellline,fill=tfill] (2.12,-1.06) rectangle (3.18,0.00);
\node at (2.65,-0.53) {$T$};
\draw[cellline,fill=tfill] (3.18,-1.06) rectangle (4.24,0.00);
\node at (3.71,-0.53) {$T$};
\draw[cellline,fill=dagfill] (0.00,-2.12) rectangle (1.06,-1.06);
\node at (0.53,-1.59) {$T^\dagger$};
\draw[cellline,fill=tfill] (1.06,-2.12) rectangle (2.12,-1.06);
\node at (1.59,-1.59) {$T$};
\draw[cellline,fill=dagfill] (2.12,-2.12) rectangle (3.18,-1.06);
\node at (2.65,-1.59) {$T^\dagger$};
\draw[cellline,fill=tfill] (3.18,-2.12) rectangle (4.24,-1.06);
\node at (3.71,-1.59) {$T$};
\draw[cellline,fill=tfill] (0.00,-3.18) rectangle (1.06,-2.12);
\node at (0.53,-2.65) {$T$};
\draw[cellline,fill=dagfill] (1.06,-3.18) rectangle (2.12,-2.12);
\node at (1.59,-2.65) {$T^\dagger$};
\draw[cellline,fill=dagfill] (2.12,-3.18) rectangle (3.18,-2.12);
\node at (2.65,-2.65) {$T^\dagger$};
\draw[cellline,fill=tfill] (3.18,-3.18) rectangle (4.24,-2.12);
\node at (3.71,-2.65) {$T$};
\draw[cellline,fill=dagfill] (0.00,-4.24) rectangle (1.06,-3.18);
\node at (0.53,-3.71) {$T^\dagger$};
\draw[cellline,fill=dagfill] (1.06,-4.24) rectangle (2.12,-3.18);
\node at (1.59,-3.71) {$T^\dagger$};
\draw[cellline,fill=tfill] (2.12,-4.24) rectangle (3.18,-3.18);
\node at (2.65,-3.71) {$T$};
\draw[cellline,fill=tfill] (3.18,-4.24) rectangle (4.24,-3.18);
\node at (3.71,-3.71) {$T$};
\node[axis] at (-0.32,-0.53) {00};
\node[axis] at (-0.32,-1.59) {01};
\node[axis] at (-0.32,-2.65) {11};
\node[axis] at (-0.32,-3.71) {10};
\node[axis] at (0.53,-4.54) {00};
\node[axis] at (1.59,-4.54) {01};
\node[axis] at (2.65,-4.54) {11};
\node[axis] at (3.71,-4.54) {10};
\node[panelhead] at (7.16,0.58) {$w=(0,1)$};
\draw[cellline,fill=tfill] (5.04,-1.06) rectangle (6.10,0.00);
\node at (5.57,-0.53) {$T$};
\draw[cellline,fill=dagfill] (6.10,-1.06) rectangle (7.16,0.00);
\node at (6.63,-0.53) {$T^\dagger$};
\draw[cellline,fill=tfill] (7.16,-1.06) rectangle (8.22,0.00);
\node at (7.69,-0.53) {$T$};
\draw[cellline,fill=tfill] (8.22,-1.06) rectangle (9.28,0.00);
\node at (8.75,-0.53) {$T$};
\draw[cellline,fill=tfill] (5.04,-2.12) rectangle (6.10,-1.06);
\node at (5.57,-1.59) {$T$};
\draw[cellline,fill=dagfill] (6.10,-2.12) rectangle (7.16,-1.06);
\node at (6.63,-1.59) {$T^\dagger$};
\draw[cellline,fill=tfill] (7.16,-2.12) rectangle (8.22,-1.06);
\node at (7.69,-1.59) {$T$};
\draw[cellline,fill=tfill] (8.22,-2.12) rectangle (9.28,-1.06);
\node at (8.75,-1.59) {$T$};
\draw[cellline,fill=dagfill] (5.04,-3.18) rectangle (6.10,-2.12);
\node at (5.57,-2.65) {$T^\dagger$};
\draw[cellline,fill=dagfill] (6.10,-3.18) rectangle (7.16,-2.12);
\node at (6.63,-2.65) {$T^\dagger$};
\draw[cellline,fill=tfill] (7.16,-3.18) rectangle (8.22,-2.12);
\node at (7.69,-2.65) {$T$};
\draw[cellline,fill=dagfill] (8.22,-3.18) rectangle (9.28,-2.12);
\node at (8.75,-2.65) {$T^\dagger$};
\draw[cellline,fill=tfill] (5.04,-4.24) rectangle (6.10,-3.18);
\node at (5.57,-3.71) {$T$};
\draw[cellline,fill=tfill] (6.10,-4.24) rectangle (7.16,-3.18);
\node at (6.63,-3.71) {$T$};
\draw[cellline,fill=dagfill] (7.16,-4.24) rectangle (8.22,-3.18);
\node at (7.69,-3.71) {$T^\dagger$};
\draw[cellline,fill=tfill] (8.22,-4.24) rectangle (9.28,-3.18);
\node at (8.75,-3.71) {$T$};
\node[axis] at (4.72,-0.53) {00};
\node[axis] at (4.72,-1.59) {01};
\node[axis] at (4.72,-2.65) {11};
\node[axis] at (4.72,-3.71) {10};
\node[axis] at (5.57,-4.54) {00};
\node[axis] at (6.63,-4.54) {01};
\node[axis] at (7.69,-4.54) {11};
\node[axis] at (8.75,-4.54) {10};
\node[panelhead] at (12.20,0.58) {$w=(1,1)$};
\draw[cellline,fill=tfill] (10.08,-1.06) rectangle (11.14,0.00);
\node at (10.61,-0.53) {$T$};
\draw[cellline,fill=dagfill] (11.14,-1.06) rectangle (12.20,0.00);
\node at (11.67,-0.53) {$T^\dagger$};
\draw[cellline,fill=dagfill] (12.20,-1.06) rectangle (13.26,0.00);
\node at (12.73,-0.53) {$T^\dagger$};
\draw[cellline,fill=dagfill] (13.26,-1.06) rectangle (14.32,0.00);
\node at (13.79,-0.53) {$T^\dagger$};
\draw[cellline,fill=dagfill] (10.08,-2.12) rectangle (11.14,-1.06);
\node at (10.61,-1.59) {$T^\dagger$};
\draw[cellline,fill=dagfill] (11.14,-2.12) rectangle (12.20,-1.06);
\node at (11.67,-1.59) {$T^\dagger$};
\draw[cellline,fill=tfill] (12.20,-2.12) rectangle (13.26,-1.06);
\node at (12.73,-1.59) {$T$};
\draw[cellline,fill=dagfill] (13.26,-2.12) rectangle (14.32,-1.06);
\node at (13.79,-1.59) {$T^\dagger$};
\draw[cellline,fill=dagfill] (10.08,-3.18) rectangle (11.14,-2.12);
\node at (10.61,-2.65) {$T^\dagger$};
\draw[cellline,fill=tfill] (11.14,-3.18) rectangle (12.20,-2.12);
\node at (11.67,-2.65) {$T$};
\draw[cellline,fill=tfill] (12.20,-3.18) rectangle (13.26,-2.12);
\node at (12.73,-2.65) {$T$};
\draw[cellline,fill=tfill] (13.26,-3.18) rectangle (14.32,-2.12);
\node at (13.79,-2.65) {$T$};
\draw[cellline,fill=dagfill] (10.08,-4.24) rectangle (11.14,-3.18);
\node at (10.61,-3.71) {$T^\dagger$};
\draw[cellline,fill=dagfill] (11.14,-4.24) rectangle (12.20,-3.18);
\node at (11.67,-3.71) {$T^\dagger$};
\draw[cellline,fill=tfill] (12.20,-4.24) rectangle (13.26,-3.18);
\node at (12.73,-3.71) {$T$};
\draw[cellline,fill=dagfill] (13.26,-4.24) rectangle (14.32,-3.18);
\node at (13.79,-3.71) {$T^\dagger$};
\node[axis] at (9.76,-0.53) {00};
\node[axis] at (9.76,-1.59) {01};
\node[axis] at (9.76,-2.65) {11};
\node[axis] at (9.76,-3.71) {10};
\node[axis] at (10.61,-4.54) {00};
\node[axis] at (11.67,-4.54) {01};
\node[axis] at (12.73,-4.54) {11};
\node[axis] at (13.79,-4.54) {10};
\end{tikzpicture}}
\caption{The $T/T^{\dagger}$ pattern of $\Qcode$ in closed form (Proposition~\ref{prop:exp-q48-tpattern}). The three panels show the flip vector $\gamma(u,w)=u_3+w_1Q_1(u)+w_2Q_2(u)$ on the three cosets of~\eqref{eq:exp-blockform}. Each panel lists the $16$ points $u=(u_1,u_2,u_3,u_4)$ of one coset, rows indexed by $(u_1,u_2)$ and columns by $(u_3,u_4)$, with $T$ at $\gamma=0$ and $T^\dagger$ at $\gamma=1$. The flip counts are $6$, $6$, $10$, resulting in a total of $26\,T$ and $22\,T^\dagger$.}
\label{fig:exp-tpattern-grid}

\end{figure}

The code $\Qcode$ is not an isolated example. Building $S$ from a larger subspace
complement, as in Corollary~\ref{cor:exp-blockform}, also admits CSS codes with a $T/T^\dagger$
implementation of logical CCZ. Table~\ref{tab:exp-family} lists, for each $(p,q)$ with $s=p+q\le9$,
the codes of this kind that we found. Each of these codes pairs the
subspace-complement matrix $S$ of Corollary~\ref{cor:exp-blockform} with three logical rows $K$
that satisfy the CH conditions. For $\Qcode$, the SAT encoding of
Section~\ref{sec:exp-intro} found the logical rows. For the longer codes, the logical rows come
from structured label maps $\kappa$ (explicit polynomials in the block coordinates $(u,w)$, and
maps invariant under a cyclic symmetry of the support), from random label maps of low degree, and
from lifts of codes on smaller supports. A candidate was kept when it satisfied the CH conditions
and Equation~\eqref{eq:exp-Z4half} had a binary solution. The exhaustive pair enumeration of
Algorithm~\ref{alg:exp-search} was not used for these codes.

The stabilizer code FG48 is unique by
Proposition~\ref{prop:exp-uniqueness}, but the logical rows are not. FG48 has exactly two
inequivalent sets of compatible $K$ rows with a single-qubit transversal CCZ, and $\Qcode$ is one of them. For the other supports
the numbers of inequivalent codes are lower bounds. Supports with $p\le3$ do not admit any valid code. The supports with
$(p,q)=(4,1)$ and $(5,1)$ have $n\le38$ and are excluded by Theorem~\ref{thm:exp-floor38}.

\begin{table}[H]
\centering
\small
\renewcommand{\arraystretch}{1.2}
\begin{tabular}{@{}c|ccccc@{}}
\toprule
$p\backslash q$ & $1$ & $2$ & $3$ & $4$ & $5$\\
\midrule
$4$ & none & \begin{tabular}[c]{@{}c@{}}$\nkd{48}{3}{3}$\\ $2$ codes$^{\star}$\end{tabular}
 & \begin{tabular}[c]{@{}c@{}}$\nkd{112}{3}{3}$\\ $12$ codes\end{tabular}
 & \begin{tabular}[c]{@{}c@{}}$\nkd{240}{3}{3}$\\ $17$ codes\end{tabular}
 & \begin{tabular}[c]{@{}c@{}}$\nkd{496}{3}{3}$\\ $18$ codes\end{tabular}\\[3pt]
$5$ & none & \begin{tabular}[c]{@{}c@{}}$\nkd{96}{3}{3}$\\ $11$ codes\end{tabular}
 & \begin{tabular}[c]{@{}c@{}}$\nkd{224}{3}{3}$\\ $8$ codes\end{tabular}
 & \begin{tabular}[c]{@{}c@{}}$\nkd{480}{3}{3}$\\ $14$ codes\end{tabular} & \\[3pt]
$6$ & \begin{tabular}[c]{@{}c@{}}$\nkd{64}{3}{4}$\\ $5$ codes\end{tabular}
 & \begin{tabular}[c]{@{}c@{}}$\nkd{192}{3}{3}$\\ $14$ codes\end{tabular}
 & \begin{tabular}[c]{@{}c@{}}$\nkd{448}{3}{3}$\\ $13$ codes\end{tabular} & & \\[3pt]
$7$ & \begin{tabular}[c]{@{}c@{}}$\nkd{128}{3}{4}$\\ $3$ codes\end{tabular}
 & \begin{tabular}[c]{@{}c@{}}$\nkd{384}{3}{3}$\\ $18$ codes\end{tabular} & & & \\[3pt]
$8$ & \begin{tabular}[c]{@{}c@{}}$\nkd{256}{3}{4}$\\ $3$ codes\end{tabular} & & & & \\
\bottomrule
\end{tabular}
\caption{CSS codes with a $T/T^\dagger$ implementation of logical CCZ whose stabilizer
code is the subspace complement $\PG(p+q-1,2)\setminus\PG(p-1,2)$ of Corollary~\ref{cor:exp-blockform},
for $p+q\le9$. Each cell gives the parameters $\nkd{n}{3}{d}$, where $d=d_Z<d_X$, and the number of
inequivalent codes found. Two codes count as inequivalent when their distances or the weight
enumerators of their codes $\mathcal C_1$ differ. The count marked $\star$ is complete, and the
others are lower bounds. ``None'' marks the supports excluded by Theorem~\ref{thm:exp-floor38}.}
\label{tab:exp-family}
\end{table}

In Section~\ref{sec:exp-minimality}, we use the nine CH conditions to derive a lower bound on block-length.

\section{A length lower bound}\label{sec:exp-minimality}

For the logical $T$ gate, Koutsioumpas, Banfield, and Kay prove the analogous statement that the
$15$-qubit Reed--Muller code is the smallest distance-$3$ code with a transversal non-Clifford
gate~\cite{koutsioumpasSmallestCodeTransversal2022}.  Here, we show that no CSS code with $k=3$ and $d_Z\ge3$ satisfies the nine CH conditions at any length $n\le38$ and
any stabilizer rank (Table~\ref{tab:exp-lengths}).

\begin{table}[!htbp]
\centering
\begin{tabular}{@{}l l l@{}}
\toprule
lengths & nine CH conditions & evidence\\
\midrule
$n\le38$ & met by no code & Theorem~\ref{thm:exp-floor38}\\
$39\le n\le46$ & met by no code & Wills, Jain, and Singh~\cite{willsClassificationGeneralised2026}\\
$n=47$ & met, no $T/T^\dagger$ implementation & Example~\ref{prop:exp-nonnative47}\\
$n=48$ & met, $T/T^\dagger$ implementation & $\Qcode$\\
\bottomrule
\end{tabular}
\caption{Lengths settled by our lower bound and the concurrent classification, with the two nearest comparison points.}
\label{tab:exp-lengths}
\end{table}

To exclude smaller codes, we first restrict the possible stabilizer supports using
S1--S3, then ask whether any choice of logical rows satisfies K1--K3 and KS1--KS3.
Distinct nonzero stabilizer columns guarantee $d_Z\ge3$ by Lemma~\ref{lem:exp-fiber}.
For $n\le38$, Lemma~\ref{lem:exp-levelred} reduces the problem to $s\le8$.
Appendix~\ref{app:exp-decide-one} covers the supports at these ranks and shows that none admits
logical rows satisfying all six remaining CH conditions.

\subsection{Admissible block lengths}\label{sec:exp-spectrum}

We first discuss constraints on the block length $n$ by conditions S1--S3.

\begin{restatable}[Admissible block lengths]{theorem}{ExpSpectrum}\label{thm:exp-spectrum}
Conditions S1--S3 hold if and only if $\chi_S=c$ or $\chi_S=c+\delta_0$ for some
$c\in\RM(s-4,s)$, where $\delta_0$ is the indicator of the origin, that is,
$\chi_S\in\RM(s-4,s)\cup\bigl(\delta_0+\RM(s-4,s)\bigr)$. The lengths allowed by S1--S3 together with
$\operatorname{rank}S=s$ are exactly $n=\lvert c\rvert-c(0)$ for $c\in\RM(s-4,s)$ satisfying
\[
\operatorname{span}\{\beta\in\F_2^s\setminus\{0\}:c(\beta)=1\}=\F_2^s.
\]
\end{restatable}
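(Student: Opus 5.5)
The plan is to rewrite S1--S3 as orthogonality conditions on $\chi_S$ and then use Reed--Muller duality. Assume, as in the paper, that $S$ has distinct columns, so that $\chi_S$ records $S$. By~\eqref{eq:exp-clause} with $T_K=\emptyset$, conditions S1--S3 say exactly that
\[
\langle\chi_S,m_{T_S}\rangle=0\qquad\text{for every }T_S\subseteq\{1,\dots,s\},\ 1\le\lvert T_S\rvert\le3 .
\]
This holds because on a column $\beta$ the monomial $m_{T_S}(\beta)$ equals $1$ precisely when every selected stabilizer row has a $1$ there. The constant monomial ($T_S=\emptyset$) is the only degree-$\le3$ monomial left unconstrained, and $\langle\chi_S,1\rangle=\lvert\chi_S\rvert\bmod2$. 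The key device is the origin indicator $\delta_0$. It satisfies $\langle\delta_0,m\rangle=m(0)$, which is $0$ for every nonconstant monomial and $1$ for $m=1$.

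Set $\varepsilon=\langle\chi_S,1\rangle$ and $c=\chi_S+\varepsilon\delta_0$. Then S1--S3 hold if and only if $c$ is orthogonal to every monomial of degree at most $3$, including the constant. That is, $c\in\RM(3,s)^\perp$. The standard duality $\RM(3,s)^\perp=\RM(s-4,s)$ then gives $c\in\RM(s-4,s)$, so $\chi_S\in\RM(s-4,s)\cup(\delta_0+\RM(s-4,s))$. Conversely, suppose $\chi_S=c$ or $c+\delta_0$ with $c\in\RM(s-4,s)$. Then $c$ is orthogonal to every monomial of degree $1$ to $3$, and adding $\delta_0$ does not change these inner products, so S1--S3 hold. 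For $s<4$ the code $\RM(s-4,s)$ is $\{0\}$, and the statement still reads correctly under the convention that $\RM(r,s)=\{0\}$ for $r<0$.

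For the lengths, I would take any $c\in\RM(s-4,s)$ and define $\chi_S$ to be $c$ with its value at the origin set to $0$. This is either $c$ or $c+\delta_0$, so it satisfies S1--S3 by the first part. Its support is $\{\beta\ne0:c(\beta)=1\}$, of size $\lvert c\rvert-c(0)$. Conversely, every valid $S$ has $\chi_S(0)=0$ when the zero column is excluded, which is the convention from projectivity and Lemma~\ref{lem:exp-projred}. It arises this way from $c=\chi_S$ or $c=\chi_S+\delta_0$, whichever lies in $\RM(s-4,s)$. In both cases $n=\lvert c\rvert-c(0)$.

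Finally, $\operatorname{rank}S=s$ means the rows of $S$ are linearly independent. Equivalently, no nonzero $a$ is orthogonal to all columns, which says the columns span $\F_2^s$. This is exactly the stated span condition.

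The only real obstacle is the bookkeeping of the constant monomial and $\delta_0$. In particular, the parity of $n$ is unconstrained by S1--S3, and this freedom is exactly what the $\delta_0$ shift absorbs. The duality $\RM(r,m)^\perp=\RM(m-r-1,m)$ is quoted as standard.
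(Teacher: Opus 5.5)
Your proposal is correct and follows essentially the same route as the paper. You rewrite S1--S3 as orthogonality of $\chi_S$ to the nonconstant monomials of degree at most three, absorb the constant monomial with $\delta_0$, apply $\RM(3,s)^\perp=\RM(s-4,s)$, and read off $n=\lvert c\rvert-c(0)$ and the rank condition from the column set. The one minor difference is how the orthogonal complement $W^\perp$ of the span $W$ of those nonconstant monomials is identified with $\RM(s-4,s)\cup(\delta_0+\RM(s-4,s))$: the paper uses containment plus a dimension count, whereas you fix the constant-term parity directly with $c=\chi_S+\varepsilon\delta_0$, which is the device the paper itself uses in the proof of Lemma~\ref{lem:exp-coset}.
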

The proof is given in Appendix~\ref{app:exp-support-lengths}.

\begin{table}[htbp]
\centering
\small
\begin{tabular}{@{}c l l l l@{}}
\toprule
$s$ & Reed--Muller code & relevant $|c|\le39$ & candidate $15\le n\le38$ & source\\
\midrule
$5$ & $\RM(1,5)$ & $16,32$ & $15,16,31$ & \cite[Ch.~13]{macwilliams1977theory}\\
$6$ & $\RM(2,6)$ & $16,24,28,32,36$ & $15,16,23,24,27,28,31,32,35,36$ & \cite{sloaneBerlekamp1970}\\
$7$ & $\RM(3,7)$ & $16,24,28,32,36$ & $15,16,23,24,27,28,31,32,35,36$ & \cite{kasamiTokura1970,kasamiTokuraAzumi1976,carletBooleanFunctions2021}\\
\bottomrule
\end{tabular}
\caption{Relevant Reed--Muller weights and the candidate block lengths obtained from
Theorem~\ref{thm:exp-spectrum}. Every omitted length at $s=5,6,7$ in the range $15\le n\le38$
is excluded by S1--S3. A listed length has stabilizer rank $s$ only when its support spans
$\F_2^s$.}
\label{tab:exp-spectrum}
\end{table}

The $s=5$ row follows from the weights of first-order Reed--Muller codewords.
The closed weight-enumerator formula for second-order Reed--Muller codes gives the $s=6$
row~\cite{sloaneBerlekamp1970}. For $s=7$, three sources cover three ranges of
weights. Kasami--Tokura~\cite{kasamiTokura1970} determine the weights of $\RM(3,7)$ below
$2d=32$, which are $16$, $24$, and $28$. Ax's theorem~\cite[Sec.~4.1.5]{carletBooleanFunctions2021}
makes every weight of $\RM(3,7)$ divisible by $4$, so $32$ and $36$ are the only possible weights
from $32$ to $39$. Kasami--Tokura--Azumi~\cite{kasamiTokuraAzumi1976} classify the codewords of
weight below $2.5d=40$, which supplies the normal forms at these weights used in
Appendix~\ref{app:exp-decide-one}.

The candidate column of Table~\ref{tab:exp-spectrum} lists, for each weight $|c|$, the
lengths $n=|c|$ and $n=|c|-1$ given by Theorem~\ref{thm:exp-spectrum}. Lemma~\ref{lem:exp-flatobstruction}
excludes $n=15,16$ because K3 is unsatisfiable on the supports defined below.

\begin{restatable}{lemma}{ExpFlat}\label{lem:exp-flatobstruction}
Call a set an \emph{affine flat of dimension $m$} if it equals $v+U$ for a vector
$v\in\F_2^s$ and an $m$-dimensional linear subspace $U\subseteq\F_2^s$ (so the set has $2^m$
elements). If the support $\mathcal P$ is an affine flat of dimension $m\le4$, or such a flat with
the origin deleted, no label map satisfies the CH conditions, at any stabilizer rank $s$. In
particular, $n\in\{15,16\}$ are excluded uniformly in $s$.
\end{restatable}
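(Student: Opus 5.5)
The plan is to transfer every CH condition to the flat itself, with the flat's own affine coordinates. In those coordinates the label functions $\kappa_i$ are forced into a Reed--Muller code of very low order. This makes $\kappa_1\kappa_2\kappa_3$ a polynomial of degree below $m$, so its sum over the flat is even and K3 fails.

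\emph{Reduction to a full flat.} Suppose first that $\mathcal P=v+U$ is an affine flat of dimension $m\le 4$. If instead $\mathcal P=U\setminus\{0\}$ for a linear subspace $U$, we use the convention already made before~\eqref{eq:exp-clause} and set $\kappa(0)=(0,0,0)$. Then for every $T$ the sums in~\eqref{eq:exp-bridge} over $U\setminus\{0\}$ and over $U$ agree, because the added term vanishes. So in both cases we may work on a full flat $F\cong\F_2^m$, with affine coordinates $t_1,\dots,t_m$ and a map $\kappa\colon F\to\F_2^3$ satisfying~\eqref{eq:exp-bridge} with the sum taken over $F$.

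\emph{Key observation.} Every affine function of $t$ on $F$ is the restriction of a function of the form (constant) $+$ (linear form in $y$). The linear forms $\ell_a(y)=\sum_l a_l y_l$ are sums of the monomials $y_l$. Hence, expanding products, every polynomial of degree $\le d$ in $t$ is a restriction of a sum of monomials $m_{T_S}(y)$ with $\lvert T_S\rvert\le d$. Feeding this into~\eqref{eq:exp-bridge} and using linearity of $\langle\cdot,\cdot\rangle$ in the second slot gives the following:
\begin{itemize}
\item KS1, KS3 and K1 (with $\lvert T_S\rvert\le 2$) give $\langle\kappa_i,q\rangle_F=0$ for every $q\in\RM(2,m)$. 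Hence $\kappa_i\in\RM(2,m)^\perp=\RM(m-3,m)$, so $\deg\kappa_i\le m-3$. When $m<3$ this forces $\kappa_i=0$.
\item KS2 and K2 similarly give $\kappa_i\kappa_{i'}\in\RM(m-2,m)$. This is not even needed for the argument below.
\end{itemize}

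\emph{Conclusion.} The product $\kappa_1\kappa_2\kappa_3$ has degree $\le 3(m-3)$, and since $m\le4$ this is $<m$. A Boolean function on $\F_2^m$ of degree $<m$ has even weight, because the coefficient of $t_1\cdots t_m$ is the parity of the weight. So $\bigoplus_{F}\kappa_1\kappa_2\kappa_3=0$, contradicting K3. In the degenerate cases $m\le 2$ the $\kappa_i$ already vanish. For $m=3$ the product is constant on $8$ points, which again gives even parity. None of this uses $s$, so the exclusion holds at every stabilizer rank.

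\emph{Lengths $15$ and $16$.} By Theorem~\ref{thm:exp-spectrum}, $\chi_S$ equals $c$ or $c+\delta_0$ with $c\in\RM(s-4,s)$ and $n=\lvert c\rvert-c(0)$. For $n\le16$ with $n\ge15$ we need $\lvert c\rvert=16$, which is the minimum distance of $\RM(s-4,s)$. By the classical characterization of minimum-weight Reed--Muller codewords (MacWilliams--Sloane, Ch.~13), $c$ is then the indicator of a $4$-dimensional affine flat. So $n=16$ gives a flat missing the origin, and $n=15$ gives a linear $4$-space with the origin deleted. Both are excluded by the above.

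\emph{Main obstacle.} The delicate step is the identification of the restricted conditions with orthogonality to all of $\RM(2,m)$ on $F$. Two points need care there. One is the constant term of affine functions, which is supplied by the lower-degree conditions K1/KS1. The other is the origin-deleted case, where the extension $\kappa(0)=0$ must not break the equalities.
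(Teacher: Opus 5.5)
Your argument for flat supports is the same as the paper's. You pass to affine coordinates on the flat, so that K1, KS1 and KS3 place each $\kappa_i$ in $\RM(2,m)^\perp=\RM(m-3,m)$, or in $\{0\}$ when $m\le2$. You then bound $\deg(\kappa_1\kappa_2\kappa_3)\le3(m-3)<m$ for $m\in\{3,4\}$ and use top-coefficient parity to contradict K3. Giving the origin the label $0$ in the origin-deleted case is exactly the paper's reinsertion of a zero column.

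There is one gap, in the length part. Your claim that $n\in\{15,16\}$ forces $|c|=16$ is not justified as written. For $n=16$ with $c(0)=1$ you get $|c|=17$, and the minimum distance $16$ of $\RM(s-4,s)$ only excludes $|c|=15$, not $|c|=17$.

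You need the extra fact that $|c|$ is even, and it follows from the tool you already use. Since $\deg c\le s-4<s$, the coefficient of $y_1\cdots y_s$ in $c$ vanishes, so Lemma~\ref{lem:exp-coefficient-parity} gives $|c|\equiv0\pmod2$. This is the paper's Eq.~\eqref{eq:exp-affine-constant}. With that line added, the minimum-weight characterization applies and your proof is complete.
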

The proof is given in Appendix~\ref{app:exp-flat-supports}.

\subsection{Excluding lengths below 39}\label{sec:exp-floor-sub}

\begin{restatable}[Stabilizer-rank lower bound]{proposition}{ExpLB}\label{prop:exp-lb}
For $k=3$ and $d_Z\ge3$, no multiplicity-free code (Lemma~\ref{lem:exp-fiber}) satisfying the CH
conditions exists with $s\le5$ (equivalently $r=k+s\le8$). Hence every such code has $s\ge6$.
\end{restatable}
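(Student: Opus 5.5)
The plan is to reduce the whole question to the shape of the stabilizer support, using Theorem~\ref{thm:exp-spectrum}. After that, the flat cases fall to Lemma~\ref{lem:exp-flatobstruction}, and a single remaining support needs a separate argument. By Lemma~\ref{lem:exp-fiber} (and the puncturing reduction of Lemma~\ref{lem:exp-projred}) we may take $\mathcal P$ projective, so $n\le 2^s-1\le 31$.

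For $s\le5$ we have $s-4\le1$, so Theorem~\ref{thm:exp-spectrum} gives $\chi_S=c$ or $c+\delta_0$ with $c\in\RM(1,s)$; for $s\le 3$ the code is $\RM(0,s)$ or smaller, which is even more restrictive. The support of a nonzero codeword of $\RM(1,s)$ is one of two things:
\begin{itemize}
\item an affine hyperplane, which is an affine flat of dimension $s-1$;
\item the whole space $\F_2^s$, coming from the constant function $1$.
\end{itemize}
Since $\chi_S(0)=0$, the support $\mathcal P$ is such a set with the origin deleted if it contained it. The rank condition in Theorem~\ref{thm:exp-spectrum} rules out linear hyperplanes at full rank only when they fail to span, but we do not need that, because Lemma~\ref{lem:exp-flatobstruction} applies regardless. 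Every affine flat of dimension $\le4$, with or without the origin deleted, admits no label map. This settles every case except one: $s=5$ and $\mathcal P=\F_2^5\setminus\{0\}$, that is, $n=31$ and $\chi_S=1+\delta_0$. The same applies to $s\le4$, where even the full space is a flat of dimension $\le4$.

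For the remaining case I will work directly with the label map via \eqref{eq:exp-clause}. Extend each $\kappa_i$ to $\F_2^5$ by $\kappa_i(0)=0$; since $\chi_S=1$ off the origin, the sums in \eqref{eq:exp-clause} become full sums over $\F_2^5$. The conditions then translate as follows:
\begin{itemize}
\item K1, KS1 and KS3 say $\kappa_i\perp\RM(2,5)$, hence $\kappa_i\in\RM(2,5)^\perp=\RM(2,5)$, so each $\kappa_i$ has degree $\le2$.
\item K2 and KS2 say $\kappa_i\kappa_{i'}\perp\RM(1,5)$, hence $\kappa_i\kappa_{i'}\in\RM(3,5)$, so each pairwise product has degree $\le3$.
\item K3 says $\lvert\kappa_1\kappa_2\kappa_3\rvert$ is odd, that is, the coefficient of $y_1\cdots y_5$ in $\kappa_1\kappa_2\kappa_3$ is $1$.
\end{itemize}
So the goal is an algebraic lemma: if $q_1,q_2,q_3$ are Boolean functions on $\F_2^5$ of degree $\le2$ whose pairwise products have degree $\le3$, then $\deg(q_1q_2q_3)\le4$.

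To prove the lemma I would first normalize $q_1$ by an affine change of variables (Dickson's theorem) to $y_1y_2+\dots$ plus affine terms, and use $\deg(q_1q_2)\le3$ to constrain the quadratic part of $q_2$ relative to $q_1$. The observation driving this is that the degree-4 part of $q_1q_2$ is essentially the product of the quadratic forms modulo squares. Its vanishing forces the two forms to share a linear factor or a common radical direction, and then multiplying by $q_3$ cannot reach all five variables. This case analysis is the main obstacle, and I am not certain the pair conditions alone suffice without also using the origin constraint $\kappa_i(0)=0$.

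If the structural argument becomes unwieldy, the fallback is exhaustive: the problem is small, $3\times 31$ label bits. Reduce by $\mathrm{GL}(5,2)$, which acts on the support, and by invertible recombination of the logical rows. Then finish with the SAT and linear search of Algorithm~\ref{alg:exp-search} on this single support: enumerate admissible pairs $(\kappa_1,\kappa_2)$ and solve the linear system for $\kappa_3$, which should be inconsistent. Either route excludes $n=31$ at $s=5$ and gives $s\ge6$.
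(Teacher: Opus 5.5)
Your reduction is the paper's own. Theorem~\ref{thm:exp-spectrum} puts $\chi_S$ in $\RM(1,5)$ up to $\delta_0$. Lemma~\ref{lem:exp-flatobstruction} removes the $62$ hyperplane and punctured-hyperplane supports, as well as every support with $s\le4$. Only $\mathcal P=\F_2^5\setminus\{0\}$ survives. The paper settles that support entirely by your fallback: it computes $\dim V_{\mathcal P}=15$, runs Algorithm~\ref{alg:exp-search} over all $49{,}741{,}825$ ordered admissible pairs without symmetry reduction, and finds the $13$-equation system for $\kappa_3$ inconsistent every time.

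Your primary route is genuinely different. Your translation is correct: $q_i\in\RM(2,5)$, $q_iq_{i'}\in\RM(3,5)$, and K3 becomes an odd $y_1\cdots y_5$ coefficient of $q_1q_2q_3$. Both of your doubts can be removed, and the lemma is true.

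The origin constraint carries no information. Replacing $q_i$ by $q_i+1$ changes $q_iq_j$ by $q_j$, and changes $q_1q_2q_3$ by some $q_jq_k$ of degree at most $3$. So every condition is preserved.

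The case analysis also closes. Let $\omega_i$ be the quadratic part of $q_i$, viewed in $\Lambda^2$ of the space of linear forms. Then $\deg(q_iq_j)\le3$ is exactly $\omega_i\wedge\omega_j=0$.
\begin{itemize}
\item If some $\omega_i=0$, then $q_i$ is affine, so $\deg(q_1q_2q_3)\le1+3$.
\item If some $\omega_1$ has rank $4$, normalize it to $y_1y_2+y_3y_4$. Then $\omega\wedge\omega_1=0$ forces $\omega$ to avoid $y_5$. So $q_i=a_i+\lambda_iy_5$ with $a_i$ in $y_1,\dots,y_4$. The top coefficient is $\sum\lambda_ic_{jk}$ over $\{i,j,k\}=\{1,2,3\}$, where $c_{jk}$ is the $y_1y_2y_3y_4$ coefficient of $a_ja_k$. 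Each $c_{jk}$ is a coefficient of $\omega_j\wedge\omega_k=0$.
\item If all three have rank $2$, write $\omega_i=\ell_i\wedge\ell'_i$. The planes $\langle\ell_i,\ell'_i\rangle$ meet pairwise, which leaves two sub-cases.
\begin{itemize}
\item The planes share a linear form, normalized to $y_1$. Then $q_i=y_1A_i+B_i$ with $A_i,B_i$ affine in $y_2,\dots,y_5$. The top coefficient equals that of $\prod_i(A_i+B_i)+\prod_iB_i$, a cubic in four variables, so it is $0$.
\item The planes lie in one $3$-space. Then $q_i=a_i(y_1,y_2,y_3)+\lambda_iy_4+\mu_iy_5$. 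Here $y_4$ and $y_5$ must come from two different factors, and the third factor cannot supply $y_1y_2y_3$.
\end{itemize}
\end{itemize}

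So your route, once completed, gives a short computer-free certificate for $s=5$. The paper's computation instead reuses the fixed-support test it needs anyway at ranks $6$, $7$, and $8$.
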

The proof is given in Appendix~\ref{app:exp-s5}.

\begin{restatable}[Stabilizer-rank reduction]{lemma}{ExpLevelred}\label{lem:exp-levelred}
Let $G=[K;S]$ satisfy the nine CH conditions with $k=3$, $d_Z\ge3$, and
$\mathcal C_2$ projective (Section~\ref{sec:exp-fg48}) at stabilizer rank $s$,
meaning that $S$ has $s$ independent rows. If
$2^s-1>n+\lfloor\binom{n}{2}/2\rfloor$, then $G$ reduces to
an instance with strictly smaller stabilizer rank
of length at most $n$ with $k=3$, all nine CH conditions,
$d_Z\ge3$, and $\mathcal C_2$ projective.
\end{restatable}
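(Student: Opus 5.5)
The plan is to lower the stabilizer rank by passing to a codimension-one subcode of $\mathcal C_2$. Equivalently, we quotient the syndrome space $\F_2^s$ by a suitably chosen nonzero vector $v$.

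\textbf{Step 1: the quotient construction.} Fix $v\in\F_2^s\setminus\{0\}$ and let $\mathcal C_2'=\{c_a: a\cdot v=0\}\subset\mathcal C_2$, a subcode of dimension $s-1$. Choosing a basis $S'$ of $\mathcal C_2'$, the columns of $S'$ are the images $\bar\beta_j$ of the syndromes $\beta_j$ in $\F_2^s/\langle v\rangle\cong\F_2^{s-1}$.

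\textbf{Step 2: the CH conditions and $k$ survive for every $v$.}
\begin{itemize}
\item The conditions S1--S3 and KS1--KS3 are parities of overlaps among rows drawn from $\mathcal C_2$ and $K$. By the parity identity~\eqref{eq:exp-BH}, they are equivalent to a statement about all codewords of $\mathcal C_1$: weights mod $8$ up to the absorbable corrections, as in Theorem~\ref{thm:exp-NS}. They are therefore basis-independent and inherited by any subcode of $\mathcal C_2$, in particular by $\mathcal C_2'$.
\item K1--K3 involve only $K$ and are untouched.
\item The rows of $K$ remain independent modulo $\mathcal C_2'\subset\mathcal C_2$, so $k=3$ is preserved.
\end{itemize}
The new $G'=[K;S']$ thus satisfies all nine CH conditions at rank $s-1$ and the same length $n$.

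\textbf{Step 3: controlling $d_Z$ and projectivity.} By Lemma~\ref{lem:exp-fiber}, there are two ways the quotient can fail.
\begin{itemize}
\item If $v=\beta_j\in\mathcal P$, that column acquires zero syndrome. This excludes at most $n$ choices of $v$.
\item If $v=\beta_i+\beta_j$, the two columns acquire the same syndrome. When their labels $\kappa(\beta_i)=\kappa(\beta_j)$ agree, this is harmless: we delete the repeated pair by the puncturing of Lemma~\ref{lem:exp-projred}. That keeps all nine conditions, $k=3$ and $d_Z\ge3$, restores projectivity, and only shortens $n$. The dangerous $v$ are those for which some merged pair $\{\beta,\beta+v\}\subseteq\mathcal P$ carries distinct labels.
\end{itemize}
Since the bad $v$ number at most $n+\#\{\text{dangerous sums}\}$, the hypothesis $2^s-1>n+\lfloor\binom n2/2\rfloor$ leaves a good $v$ once we show there are at most $\lfloor\binom n2/2\rfloor$ dangerous sums. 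Iterating the puncturing as needed yields a projective instance of rank $s-1$ and length at most $n$.

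\textbf{Step 4, the main obstacle: halving the count of dangerous sums.} The crude count gives $\binom n2$ sums, so I must show that every dangerous $v$ is realized by at least two pairs in $\mathcal P$. Since $v\notin\mathcal P$, the involution $\beta\mapsto\beta+v$ partitions the affected points of $\mathcal P$ into disjoint pairs. I would try to show the number of pairs with distinct labels is never exactly one, using the parity conditions restricted to functions of the form $m_{T_K}(\kappa(\beta))\,\ell(\beta)$. Concretely:
\begin{itemize}
\item Apply KS1 and KS2 to linear functionals $\ell$ that vanish on $v$ and separate cosets, so that each sums over whole $v$-orbits. Summed over an orbit $\{\beta,\beta+v\}$, $m_{T_K}(\kappa(\beta))+m_{T_K}(\kappa(\beta+v))$ is the mod-$2$ difference of the two labels under $m_{T_K}$.
\item By taking suitable combinations of such $\ell$ (isolating one coset through a product of linear forms of degree at most two, which KS2/KS3 permit), a single mismatched pair would give an odd parity and contradict the conditions.
\end{itemize}
If isolating one orbit needs more degrees than the CH conditions provide, the fallback is to count unordered pairs $\{\{i,j\},\{i',j'\}\}$ with equal sums directly. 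In that case I would aim instead for the weaker bound obtained from the disjointness of orbit pairs, and check that it still suffices for $n\le38$.
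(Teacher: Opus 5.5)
Your Steps 1--3 are sound. Passing to $\mathcal C_2'=\{c_a:a\cdot v=0\}$ keeps all nine CH conditions and $k=3$, and pairs with equal labels can be punctured by Lemma~\ref{lem:exp-projred}. The gap is Step~4, which carries the whole content of the bound. Because you keep $K$ fixed, a $v\notin\mathcal P$ is fatal as soon as one pair $\{\beta,\beta+v\}\subseteq\mathcal P$ has $\kappa(\beta)\ne\kappa(\beta+v)$. You then need the unproved claim that such a $v$ is never realized by a single pair. The parity argument you sketch cannot deliver this, for two reasons:
\begin{itemize}
\item $\mathcal P$ is not a union of $v$-orbits. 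The CH sums $\sum_{\beta\in\mathcal P}m_{T_K}(\kappa(\beta))\ell(\beta)$ therefore mix paired points with unpaired ones. The count of mismatched pairs is governed by $\chi_S(\beta)\chi_S(\beta+v)$, which is not among the CH test functions.
\item Isolating a single orbit $\{\beta_0,\beta_0+v\}$ needs a test function of degree $s-1$ in $\beta$. The label-carrying CH conditions only test degree at most $2$ in $\beta$.
\end{itemize}
You give no reason the CH conditions should forbid a lone pair whose labels differ by some $\delta\ne0$. The fallback without halving also fails exactly where the lemma is used: for $n=38$ and $s=9$, $n+\binom{n}{2}=741>511=2^9-1$, so rank $9$ would not be reduced.

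The missing idea is to let the logical rows absorb a multiple of the stabilizer row you delete, so that pairs need only a \emph{common} label difference rather than equal labels. Choose $m$ with $v_m=1$ and replace $K_i$ by $K_i+\delta_iS_m$ for some $\delta\in\F_2^3$. This keeps $K_i$ in $K_i+\mathcal C_2$, so CH is preserved by Lemma~\ref{lem:exp-generator-operations}(a). Then project along $v$, so each full column maps as $(\alpha,\beta)\mapsto(\alpha+\beta_m\delta,\pi(\beta))$. Since $\beta_m\ne(\beta+v)_m$, the two columns of a pair become identical exactly when $\kappa(\beta+v)+\kappa(\beta)=\delta$.

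The bad $v\notin\mathcal P$ are then only those whose set of label differences $\Delta_v$ contains two distinct values. Such a $v$ is realized by at least two pairs, and each pair determines its $v$. So there are at most $\lfloor\binom n2/2\rfloor$ bad $v$, and the halving comes for free with no parity argument. Together with the $n$ vectors in $\mathcal P$, the hypothesis leaves some $v\notin\mathcal P$ with $|\Delta_v|\le1$. Take $\delta$ to be its unique label difference (or $0$), and puncture the now-identical pairs by Lemma~\ref{lem:exp-projred}. This is the paper's Lemma~\ref{lem:exp-projection} combined with the counting in its proof.
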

The proof is given in Appendix~\ref{app:exp-levelred-proof}.

For $n\le38$, Proposition~\ref{prop:exp-lb} excludes $s\le5$, and
Lemma~\ref{lem:exp-levelred} reduces $s\ge9$, leaving $s=6,7,8$ for the finite
support and logical-label analysis in Appendix~\ref{app:exp-decide-one}.

The low-weight results of~\cite{kasamiTokura1970,kasamiTokuraAzumi1976}
provide normal forms for the Reed--Muller codeword $c$ below the bound
$\lvert c\rvert<2.5d=40$, where $d=16$ is the minimum distance of $\RM(s-4,s)$.
To apply these normal forms to every candidate with $n\le38$, we must check
that the corresponding codeword $c$ lies below this weight bound.
Theorem~\ref{thm:exp-spectrum} gives $\lvert c\rvert=n$ or $n+1$, so either choice
has weight at most $39$ and is covered by the classification.
Appendix~\ref{app:exp-decide-one} uses the applicable normal forms to cover the possible supports,
then rules out logical rows satisfying K1--K3 and KS1--KS3 on each support.

\begin{restatable}[Length lower bound]{theorem}{ExpFloor}\label{thm:exp-floor38}
No CSS code with $k=3$ and $d_Z\ge3$ satisfying the nine CH conditions
(Theorem~\ref{thm:exp-NS}) exists with $n\le38$, at any stabilizer rank. \end{restatable}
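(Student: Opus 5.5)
The plan is to assemble the reductions already stated into a case analysis that ends in a finite, explicit check. Suppose, for contradiction, that $G=[K;S]$ has $k=3$, $d_Z\ge3$, satisfies the nine CH conditions, and has $n\le38$. First we puncture: by Lemma~\ref{lem:exp-projred} we delete zero columns and repeated column pairs until $\mathcal C_2$ is projective. The resulting code keeps $k=3$, $d_Z\ge3$ and all nine CH conditions, and its length is still at most $38$. So we may assume $\mathcal C_2$ is projective.

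Next we bound the stabilizer rank. If $2^s-1>n+\lfloor\binom{n}{2}/2\rfloor$, Lemma~\ref{lem:exp-levelred} produces an instance of strictly smaller rank and no greater length, with the same properties. Iterating this terminates. For $n\le38$ the right-hand side is at most $38+351=389<511$, so every instance with $s\ge9$ descends to some $s\le8$. Proposition~\ref{prop:exp-lb} excludes $s\le5$. The only ranks left are $s\in\{6,7,8\}$.

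For these ranks, Theorem~\ref{thm:exp-spectrum} forces $\chi_S\in\RM(s-4,s)\cup(\delta_0+\RM(s-4,s))$, with $n=\lvert c\rvert-c(0)$ and $\lvert c\rvert\le39$. The weights of $\RM(s-4,s)$ below $40$ follow Table~\ref{tab:exp-spectrum} for $s=6,7$. For $s=8$ the relevant code is $\RM(4,8)$, and the same Kasami--Tokura(--Azumi) results apply since they concern $\RM(s-4,s)$ with minimum distance $16$. The candidate lengths are therefore $15,16,23,24,27,28,31,32,35,36$. Lemma~\ref{lem:exp-flatobstruction} removes $n=15,16$, because minimum-weight codewords of $\RM(s-4,s)$ are $4$-flats. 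For the remaining weights $24,28,32,36$ we use the Kasami--Tokura--Azumi normal forms for codewords of weight $<2.5d$. Up to affine equivalence, these express $c$ as a sum or difference of two $4$-flats, or in a small number of listed forms. We further require the support minus the origin to span $\F_2^s$ and to be realizable with $\chi_S(0)=0$.

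For each surviving support up to $\mathrm{GL}(s,2)$ equivalence we then decide whether a label map $\kappa$ satisfies K1--K3 and KS1--KS3. This is a system of parity constraints in the $3n$ label bits, as in \eqref{eq:exp-clause}. We will follow Algorithm~\ref{alg:exp-search}: enumerate pairs $(\kappa_1,\kappa_2)$ satisfying the linear constraints KS1 and K1 on each row, apply the pair tests, and then solve the linear system for $\kappa_3$. Infeasibility of every such system, certified by SAT/linear-algebra computation, completes the proof.

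The main obstacle is making the support enumeration complete. The normal forms classify codewords only up to the affine group $\mathrm{AGL}(s,2)$, while our supports must avoid the origin and are only $\mathrm{GL}$-invariant. Each affine class must therefore be translated to all inequivalent positions relative to $0$. I would try to organize this by the orbits of $\mathrm{AGL}$ on pairs (normal form, point), keeping the count of cases small. Where possible, I would also use the structural obstruction of Lemma~\ref{lem:exp-flatobstruction}, for instance via a K3 parity argument on flat-decomposable supports, to dispose of whole families before resorting to computation.
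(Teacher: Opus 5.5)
Your outline matches the paper up to the rank restriction: projective puncturing, Lemma~\ref{lem:exp-levelred} for $s\ge9$, Proposition~\ref{prop:exp-lb} for $s\le5$, Theorem~\ref{thm:exp-spectrum}, Lemma~\ref{lem:exp-flatobstruction}, then Algorithm~\ref{alg:exp-search} on the survivors. However, your rank-$8$ case list is wrong, so the enumeration is not exhaustive.

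You claim that $\RM(4,8)$ has the same weights below $40$ as $\RM(3,7)$ because both have minimum distance $16$. The weight structure depends on $(r,m)$, not only on $d$. The divisibility that makes every weight of $\RM(2,6)$ and $\RM(3,7)$ a multiple of $4$ drops to $2$ for $\RM(4,8)$. For instance, $c=y_1y_2y_3y_4+y_5y_6y_7y_8\in\RM(4,8)$ has weight $16+16-2=30$, vanishes at the origin, and its support spans $\F_2^8$. Weights $34$ and $38$ also occur; see the corresponding rows of Table~\ref{tab:exp-floorcells}. Rank-$8$ supports of lengths $29,30,33,34,37,38$ therefore never enter your list.

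The paper closes these cases in three moves:
\begin{itemize}
\item The counting inequality of Lemma~\ref{lem:exp-levelred} still applies at $s=8$ when $|c|\le30$, since $n+\lfloor\binom n2/2\rfloor\le247<255$.
\item If $\deg c\le3$, then $c\in\RM(3,8)$, whose only weight in $[32,40)$ is $32$. That support is an affine $5$-flat of rank at most $6$.
\item For $\deg c=4$ and $|c|\in\{32,34,36,38\}$, Kasami--Tokura--Azumi supply a $6$-flat on which $c$ vanishes. This gives $c=y_7f+y_8g+y_7y_8h$ and the weight pairs of Table~\ref{tab:exp-orbitcover}. An orbit enumeration of $g$ and $h$ follows before Algorithm~\ref{alg:exp-search} is run.
\end{itemize}
None of this reduces to ``a small number of listed forms.'' The rank-$8$ cells alone involve several hundred thousand supports, and $s=7$, $|c|=36$ already requires enumerating $9{,}792$ functions.

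There is a second gap. The Kasami--Tokura--Azumi classification at weight exactly $2d=32$ covers only polynomials not divisible by an affine function. At $s=7$, the divisible ones (the class $y_1y_2$ and every cubic of the form $\ell c'$) are therefore absent from any normal-form list you would test. The paper removes them, and also the vanishing-component cases at $s=8$, using the hyperplane reduction of Corollary~\ref{cor:exp-hyperplane-reduction} together with the span bound for supports inside a flat. Your plan never invokes either tool.

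By contrast, the obstacle you single out, placing each affine class relative to the origin, is not a real difficulty. Because $|c|$ is even, Corollary~\ref{cor:exp-ch-codes}(b) lets one append the all-ones row, so translating syndromes is a row operation (Lemma~\ref{lem:exp-affine-support}). Corollary~\ref{cor:exp-one-support} then shows that one translate $\mathcal P_{a_0}$ with $c(a_0)=0$ per class suffices, and this covers the punctured supports of length $|c|-1$ too. Your orbit enumeration on (normal form, point) pairs would also work, but at higher cost.
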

The proof is given in Appendix~\ref{app:exp-decide-one}.

Figure~\ref{fig:exp-floor-parameters} summarizes the exclusions by block
length and stabilizer rank.

\begin{figure}[tbp]
\centering
\begingroup
\tikzset{every picture/.style={}}
\begin{tikzpicture}[x=.46cm,y=.82cm,font=\small]
  \fill[black!8] (14.5,-.42) rectangle (38.5,.42);
  \fill[blue!12] (14.5,3.58) rectangle (38.5,4.42);
  \fill[blue!12] (14.5,2.58) rectangle (30.5,3.42);
  \fill[orange!10] (30.5,2.58) rectangle (38.5,3.42);
  \foreach \y in {1,2} {
    \foreach \a/\b in {16.5/22.5,24.5/26.5,28.5/30.5,32.5/34.5,36.5/38.5}
      \fill[black!8] (\a,\y-.32) rectangle (\b,\y+.32);
    \draw[black!20] (14.5,\y) -- (38.5,\y);
    \foreach \n in {24,28,32,36}
      \fill (\n,\y) circle[radius=1.8pt];
    \foreach \n in {23,27,31,35}
      \draw[fill=white,line width=.65pt] (\n,\y) circle[radius=1.8pt];
    \foreach \n in {15,16} {
      \draw[red!75!black,line width=1pt] (\n-.18,\y-.09) -- (\n+.18,\y+.09);
      \draw[red!75!black,line width=1pt] (\n-.18,\y+.09) -- (\n+.18,\y-.09);
    }
  }
  \foreach \n in {32,34,36,38}
    \fill[orange!75!black] (\n,3) circle[radius=1.8pt];
  \foreach \n in {31,33,35,37}
    \draw[orange!75!black,fill=white,line width=.65pt] (\n,3) circle[radius=1.8pt];
  \node at (26.5,0) {Excluded: Proposition~\ref{prop:exp-lb}};
  \node at (26.5,4) {Reduce rank: Lemma~\ref{lem:exp-levelred} (Step~1)};
  \node at (22.5,3) {Reduce rank: Lemma~\ref{lem:exp-levelred} (Step~3)};
  \draw[-{Stealth[length=2mm]}] (14.5,-.65) -- (40,-.65) node[right] {$n$};
  \foreach \n in {15,20,25,30,35,38} {
    \draw (\n,-.65) -- (\n,-.76);
    \node[below] at (\n,-.76) {$\n$};
  }
  \node[anchor=east] at (14,4.8) {$s$};
  \foreach \y/\r in {0/{\le5},1/6,2/7,3/8,4/{\ge9}}
    \node[anchor=east] at (14,\y) {$\r$};
\end{tikzpicture}
\par\smallskip
{\footnotesize
\begin{tabular}{@{}>{\raggedright\arraybackslash}p{3.1cm}>{\raggedright\arraybackslash}p{10.4cm}@{}}
Gray band at $s\le5$ & Excluded by Proposition~\ref{prop:exp-lb}.\\
Blue band at $s\ge9$ & Reduced by Lemma~\ref{lem:exp-levelred} after
Equation~\eqref{eq:exp-floor-ranks} (Step~1).\\
Blue band at $s=8$ & Reduced by Lemma~\ref{lem:exp-levelred} after
Equation~\eqref{eq:exp-rank8-small} (Step~3).\\
$s=6,7$ gray intervals & Excluded by Theorem~\ref{thm:exp-spectrum} and
Table~\ref{tab:exp-spectrum}.\\
Filled / open circles & $n=|c|$ / $n=|c|-1$ by Theorem~\ref{thm:exp-spectrum},
before rank, geometry, and label checks.\\
Crosses ($n=15,16$) & Excluded by
Lemma~\ref{lem:exp-flatobstruction} (Step~2).\\
Remaining $s=6,7$ circles & Table~\ref{tab:exp-floorcells} records their dispositions
in its $s=6,7$ rows (Step~2).\\
Orange region and circles & The $s=8$ rows of Table~\ref{tab:exp-floorcells} cover the
weights retained by Equation~\eqref{eq:exp-rank8-small} and Steps~3--7.\\
\end{tabular}}
\endgroup
\caption{Arguments excluding $15\le n\le38$ at each stabilizer rank.
Step numbers refer to
Appendix~\ref{app:exp-decide-one}.}
\label{fig:exp-floor-parameters}
\end{figure}

Our argument proves the lower bound through $n=38$ independently of the concurrent
classification of Wills, Jain, and Singh~\cite{willsClassificationGeneralised2026}.
Their classification excludes the remaining lengths $39\le n\le46$ for quasi-transversal CCZ
with $k=3$ and $d_Z\ge3$.

\section{Discussion and open problems}\label{sec:exp-conclusion}
In this work, we set out to construct a small CSS code whose non-Clifford logical
CCZ is realized by physical $T/T^\dagger$ gates.
Applying CCZ directly to the encoded block provides an alternative to magic-state injection.
Projectivity of $\mathcal C_2$ guarantees $d_Z\ge3$, and $8$-divisibility enforces
the stabilizer-only CH conditions. The logical rows must satisfy the remaining CH conditions, while
Equation~\eqref{eq:exp-signed-overlap-conditions} constrains the physical signs.

The same codes also give magic-state distillation
protocols~\cite{bravyiMagicStateDistillation2012,campbellUnifiedFrameworkMagic2017}. Applying the
physical rotations to an encoded $|{+}\rangle^{\otimes3}$ with $n$ noisy $T$ gates, then measuring the
stabilizers and decoding, yields a CCZ state. Because $d_Z\ge3$, an error from one or two faulty
$T$ gates is either detected or acts trivially. Distillation permits a diagonal Clifford
correction, so the nine CH conditions suffice there. Jacinto et al.\ use the $\nkd{47}{3}{3}$ code
for a distance-$3$, $47$-to-$1$ CCZ distillation
protocol~\cite{jacintoCompactFactories2026}.

At length $48$, Proposition~\ref{prop:exp-uniqueness} identifies FG48 as the unique
projective $8$-divisible $[48,6]$ code up to a permutation of coordinates.
The code $\Qcode$ completes this support to a $\nkd{48}{3}{3}$ code with a single-qubit transversal logical CCZ.
Proposition~\ref{prop:exp-q48-tpattern} verifies the $26T+22T^\dagger$ pattern from its explicit
polynomial formula.

To determine how small such codes can be, Theorem~\ref{thm:exp-floor38} excludes
every CSS code with three logical qubits, $d_Z\ge3$, and the nine CH conditions below length $39$.
The bound applies beyond subspace-complement supports. The concurrent classification of Wills, Jain, and Singh~\cite{willsClassificationGeneralised2026}
excludes lengths $39\le n\le46$ in the nine-CH-condition class.
Together with the code of Jacinto et al., the classification establishes the minimum length
$47$ when a diagonal Clifford correction is allowed.

Several problems remain open. The subspace-complement supports in
Corollary~\ref{cor:exp-blockform} with $n\ge48$ that admit a single-qubit transversal CCZ
implementation are not yet characterized. A further direction is to adapt the construction of $\Qcode$ and the arguments
for Theorem~\ref{thm:exp-floor38} to other non-Clifford gates and to more than three logical
qubits.

\section*{Code}
The companion repository is \url{https://github.com/jerrylvx/gate2code-public}.

\section*{Acknowledgements}
This work was supported by the National Science Foundation (NSF) Software-Tailored Architecture for Quantum codesign (STAQ) Project (Phy-2325080) and the NSF Quantum Leap Challenge Institute on Robust Quantum Simulation (MCP-2120757 and MCP-2553574).

The authors want to thank Robert Calderbank for pointing us to the classical coding theory literature on two-weight and finite geometry codes and providing feedback on an earlier version of the draft. The authors also want to thank June Vanlerberghe, Arianna Meinking, and Eren Guttentag for feedback during preparation of the draft.

The authors used AI
assistants for literature review, exposition, computer simulations and proofreading. The authors independently verified all mathematical claims, constructions, and
reported numbers. The authors take full
responsibility for the content.

\appendix
\section{Shared algebra and coding theory facts}\label{app:exp-facts}

\subsection{Boolean functions and Reed--Muller codes}\label{app:exp-facts-boolean}

This subsection collects the facts about Boolean functions on $\F_2^m$ and
Reed--Muller codes that Appendices~\ref{app:exp-search}--\ref{app:exp-floor} use.

\begin{lemma}[XOR expansion]\label{lem:exp-xorlift}
For bits $a_1,\dots,a_m\in\{0,1\}$,
\begin{equation}\label{eq:exp-xorlift}
a_1\oplus\cdots\oplus a_m
\;=\;\sum_{\emptyset\ne T\subseteq\{1,\dots,m\}}(-2)^{|T|-1}\prod_{i\in T}a_i .
\end{equation}
\end{lemma}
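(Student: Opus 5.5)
We prove the identity by induction on $m$, using the integer formula for XOR of two bits. For bits $a,b\in\{0,1\}$ one has $a\oplus b=a+b-2ab$ exactly over $\Z$. This follows by checking the four cases: $(0,0)\mapsto0$, $(1,0),(0,1)\mapsto1$, and $(1,1)\mapsto 1+1-2=0$.

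For the base case $m=1$, the right-hand side of~\eqref{eq:exp-xorlift} has the single term $T=\{1\}$ with coefficient $(-2)^0=1$, so it equals $a_1$.

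For the inductive step, write $s_{m-1}=a_1\oplus\cdots\oplus a_{m-1}\in\{0,1\}$ and assume~\eqref{eq:exp-xorlift} holds for $s_{m-1}$. Applying the two-bit formula gives
\[
s_m=s_{m-1}+a_m-2a_m s_{m-1}=(1-2a_m)\,s_{m-1}+a_m .
\]
Substituting the inductive expression for $s_{m-1}$, the nonempty subsets $T\subseteq\{1,\dots,m\}$ split into three types:
\begin{itemize}
\item[] $T\subseteq\{1,\dots,m-1\}$: the term comes from $1\cdot s_{m-1}$ with coefficient $(-2)^{|T|-1}$.
\item[] $T=\{m\}$: the term comes from $a_m$ with coefficient $1$.
\item[] $T=T'\cup\{m\}$ with $T'\neq\emptyset$: the term comes from $-2a_m\,s_{m-1}$ with coefficient $-2\cdot(-2)^{|T'|-1}=(-2)^{|T|-1}$.
\end{itemize}
These three types partition the nonempty subsets of $\{1,\dots,m\}$, and in every case the coefficient is $(-2)^{|T|-1}$. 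This is exactly the claimed expansion for $s_m$.

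The only step needing care is this bookkeeping of coefficients when the factor $-2a_m$ multiplies the previous sum; the rest is routine.

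As the footnote to~\eqref{eq:exp-BH} notes, applying the identity coordinatewise to $f_j=\bigoplus_i z_i g^i_j$ and summing over $j$ recovers the Bravyi--Haah expansion. The terms with $|T|\ge4$ carry a factor $(-2)^{|T|-1}$, which is divisible by $8$, so they vanish modulo $8$.
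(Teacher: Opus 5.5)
Your proof is correct and takes the same approach as the paper: induction on $m$, applying the two-bit identity $a\oplus b=a+b-2ab$ with $b=a_1\oplus\cdots\oplus a_{m-1}$ and $a=a_m$. Your three-way partition of the nonempty subsets $T$ spells out the coefficient bookkeeping that the paper's proof leaves implicit.
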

\begin{proof}
Induct on $m$, starting from the two-bit case $a\oplus b=a+b-2ab$, with the induction hypothesis
\begin{equation*}
a_1\oplus\cdots\oplus a_{m-1}
\;=\;\sum_{\emptyset\ne T\subseteq\{1,\dots,m-1\}}(-2)^{|T|-1}\prod_{i\in T}a_i .
\end{equation*}
Let $a=a_m,\ b = a_1\oplus\cdots\oplus a_{m-1}$ and apply the two-bit case.
\end{proof}

\begin{corollary}[XOR and Hamming weight]\label{cor:exp-xorweight}
For bits $a,b\in\{0,1\}$, $2ab=a+b-(a\oplus b)$.
Consequently, if $\xi$ and $\eta$ are binary strings of the same length, $\xi\oplus\eta$
denotes coordinatewise XOR, $\lvert \xi\rvert$ denotes Hamming weight, and $\wedge$
denotes entrywise product, then
\begin{equation}\label{eq:exp-weightxor}
\lvert \xi\oplus\eta\rvert=\lvert \xi\rvert+\lvert \eta\rvert-2\lvert\xi\wedge\eta\rvert.
\end{equation}
For binary strings $\xi,\xi',\eta,\eta'$ of the same length,
\begin{equation*}
\lvert(\xi\oplus\xi')\wedge\eta\wedge\eta'\rvert\equiv
\lvert\xi\wedge\eta\wedge\eta'\rvert+\lvert\xi'\wedge\eta\wedge\eta'\rvert\pmod2.
\end{equation*}
Setting $\eta'=\mathbf1$, the all-ones string, gives the same additivity for
$\lvert\xi\wedge\eta\rvert$, and setting $\eta=\eta'=\mathbf1$ gives it for $\lvert\xi\rvert$.
\end{corollary}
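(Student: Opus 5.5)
The first identity is the case $m=2$ of Lemma~\ref{lem:exp-xorlift}: $a\oplus b=a+b-2ab$, rearranged as $2ab=a+b-(a\oplus b)$. It can also be checked directly on the four pairs $(a,b)\in\{0,1\}^2$.

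For \eqref{eq:exp-weightxor}, apply this identity coordinatewise to $a=\xi_j$ and $b=\eta_j$, and sum over $j$ as integers. By definition, $\sum_j\xi_j=\lvert\xi\rvert$, $\sum_j\eta_j=\lvert\eta\rvert$, $\sum_j(\xi_j\oplus\eta_j)=\lvert\xi\oplus\eta\rvert$, and $\sum_j\xi_j\eta_j=\lvert\xi\wedge\eta\rvert$. Summing gives $2\lvert\xi\wedge\eta\rvert=\lvert\xi\rvert+\lvert\eta\rvert-\lvert\xi\oplus\eta\rvert$, which is \eqref{eq:exp-weightxor}.

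For the triple-overlap additivity, set $\zeta=\eta\wedge\eta'$. Since $\wedge$ distributes over coordinatewise XOR, $(\xi\oplus\xi')\wedge\zeta=(\xi\wedge\zeta)\oplus(\xi'\wedge\zeta)$. Apply \eqref{eq:exp-weightxor} to the strings $\xi\wedge\zeta$ and $\xi'\wedge\zeta$. The correction term $-2\lvert\xi\wedge\xi'\wedge\zeta\rvert$ is even, so reducing mod $2$ gives the claim. The specializations follow by substitution: with $\eta'=\mathbf 1$ we have $\zeta=\eta$, and with $\eta=\eta'=\mathbf 1$ we have $\zeta=\mathbf 1$.

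No step is a real obstacle. The only point needing care is the distributivity of entrywise product over XOR, which holds because $\F_2$ is a ring and both operations act coordinatewise.
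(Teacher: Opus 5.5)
Your proposal is correct and follows the paper's proof. The paper applies $2ab=a+b-(a\oplus b)$ coordinatewise to get \eqref{eq:exp-weightxor} and then uses distributivity of $\wedge$ over $\oplus$ for the parity additivity. You spell out the step the paper leaves implicit: apply \eqref{eq:exp-weightxor} to $\xi\wedge\zeta$ and $\xi'\wedge\zeta$, and drop the even cross term modulo $2$.
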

\begin{proof}
Applying $2ab=a+b-(a\oplus b)$
coordinatewise gives~\eqref{eq:exp-weightxor}, and distributivity of $\wedge$ over $\oplus$
gives the overlap-parity claim.
\end{proof}

\begin{lemma}[Coefficient extraction and support parity]\label{lem:exp-coefficient-parity}
Let $\mathbb K$ be $\Z$ or $\F_2$, and let $F\colon\F_2^m\to\mathbb K$ have the multilinear expansion
$F=\sum_{R\subseteq\{1,\ldots,m\}}c_Rm_R$ with $c_R\in\mathbb K$, where
$m_R(v)=\prod_{i\in R}v_i$. If $\mathbf1_{T'}$ is one on $T'$ and zero elsewhere, then
for every $T\subseteq\{1,\ldots,m\}$
\begin{equation}\label{eq:exp-topcoeff}
c_T=\sum_{T'\subseteq T}(-1)^{|T|-|T'|}F(\mathbf1_{T'}).
\end{equation}
In particular, for $\mathbb K=\F_2$ and $T=\{1,\ldots,m\}$, every sign equals $1$ and
$\mathbf1_{T'}$ runs over all of $\F_2^m$, so
\begin{equation}\label{eq:exp-topcoeff-parity}
c_{\{1,\ldots,m\}}=\sum_{v\in\F_2^m}F(v).
\end{equation}
\end{lemma}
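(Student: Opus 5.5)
The formula is M\"obius inversion on the Boolean lattice of subsets of $\{1,\ldots,m\}$. First, evaluate the expansion at an indicator vector. For each $T'\subseteq\{1,\ldots,m\}$, the monomial $m_R(\mathbf1_{T'})$ equals $1$ exactly when $R\subseteq T'$ and equals $0$ otherwise. Hence
\[
F(\mathbf1_{T'})=\sum_{R\subseteq T'}c_R
\]
for every $T'$. This identity holds over either ring $\mathbb K$, because it uses only the fact that a product of bits is $1$ precisely when every factor is $1$.

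Second, substitute this into the right-hand side of \eqref{eq:exp-topcoeff} and exchange the order of summation. This gives
\[
\sum_{R\subseteq T}c_R\sum_{R\subseteq T'\subseteq T}(-1)^{|T|-|T'|}.
\]
Write $T'=R\cup U$ with $U\subseteq T\setminus R$. The inner sum then becomes $\sum_{U\subseteq T\setminus R}(-1)^{|T\setminus R|-|U|}=(1-1)^{|T\setminus R|}$. This is $1$ when $R=T$ and $0$ otherwise, so only $c_T$ survives.

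The step that needs care is justifying the binomial identity in $\mathbb K$. Over $\F_2$ the signs are all $+1$, and the inner sum counts the $2^{|T\setminus R|}$ subsets modulo $2$. That count is $0$ unless $T\setminus R=\emptyset$, so the same conclusion holds.

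Third, specialize to $\mathbb K=\F_2$ and $T=\{1,\ldots,m\}$. Since $-1=1$ in $\F_2$, every sign in \eqref{eq:exp-topcoeff} equals $1$. The map $T'\mapsto\mathbf1_{T'}$ is a bijection from subsets of $\{1,\ldots,m\}$ onto $\F_2^m$, so the sum over $T'$ becomes the sum of $F(v)$ over all $v\in\F_2^m$. This is \eqref{eq:exp-topcoeff-parity}.

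If one also wants uniqueness of the expansion, the inversion formula supplies it: the coefficients $c_T$ are determined by the values of $F$. No step here should be a real obstacle. The only point to state explicitly is that the evaluation identity in the first step is valid over both $\Z$ and $\F_2$.
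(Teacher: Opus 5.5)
Your proof is correct and follows essentially the same route as the paper. You substitute the multilinear expansion into the alternating sum, reduce the coefficient of each $c_R$ to $(1-1)^{|T|-|R|}$, and then specialize to $\mathbb K=\F_2$ and $T=\{1,\ldots,m\}$. The only difference is that you state the evaluation identity $F(\mathbf1_{T'})=\sum_{R\subseteq T'}c_R$ and the $\F_2$ form of the binomial cancellation explicitly, while the paper leaves both implicit.
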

\begin{proof}
Substitution of the multilinear expansion into the first sum makes the coefficient of $c_R$
equal to
$\sum_{R\subseteq T'\subseteq T}(-1)^{|T|-|T'|}=(1-1)^{|T|-|R|}$, which is one for $R=T$ and zero
otherwise. The case $\mathbb K=\F_2$ and $T=\{1,\ldots,m\}$
gives~\eqref{eq:exp-topcoeff-parity}.
\end{proof}

\begin{lemma}[Origin indicator]\label{lem:exp-origin-indicator}
\phantomsection\label{fact:exp-origin-degree}
On $\F_2^m$, let $\delta_0(v)=\prod_{i=1}^m(1+v_i)$ be the indicator of the origin.
The indicator has degree $m$, and
\begin{equation}\label{eq:exp-deltainner}
\langle\delta_0,\mathbf 1\rangle=1,
\qquad
\langle\delta_0,m_T\rangle=m_T(0)=0
\quad\text{for every nonempty }T\subseteq\{1,\ldots,m\}.
\end{equation}
\end{lemma}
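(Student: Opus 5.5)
The statement to prove is Lemma~\ref{lem:exp-origin-indicator}. It has three parts: the degree of $\delta_0$, the value of $\langle\delta_0,\mathbf 1\rangle$, and the value of $\langle\delta_0,m_T\rangle$ for nonempty $T$. All three follow directly from the definition, so the plan is simply to expand the product and evaluate the inner products pointwise.

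\textbf{Degree.} Expand
\[
\delta_0=\prod_{i=1}^m(1+v_i)=\sum_{R\subseteq\{1,\ldots,m\}}m_R(v)
\]
over $\F_2$. This is already a multilinear expansion in which every coefficient $c_R$ equals $1$. In particular the top monomial $v_1\cdots v_m$ has coefficient $1$. By uniqueness of the multilinear expansion (Definition~\ref{def:exp-indicator}), $\deg\delta_0=m$.

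As a cross-check, the same conclusion follows from Lemma~\ref{lem:exp-coefficient-parity}. By \eqref{eq:exp-topcoeff-parity}, the top coefficient is $\sum_{v}\delta_0(v)$. Since $\delta_0$ is one only at $v=0$, this sum equals $1$.

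\textbf{Inner products.} Compute them from the definition $\langle f,g\rangle=\sum_v f(v)g(v)\bmod 2$. Because $\delta_0(v)=0$ unless $v=0$, the sum collapses to the single term $g(0)$, so $\langle\delta_0,g\rangle=g(0)$ for any $g$.
\begin{itemize}
\item For $g=\mathbf 1$ this gives $\langle\delta_0,\mathbf 1\rangle=1$.
\item For $g=m_T$ with $T$ nonempty, $m_T(0)=\prod_{i\in T}0=0$, so $\langle\delta_0,m_T\rangle=0$.
\end{itemize}

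There is no real obstacle in this argument. The only point needing care is justifying the degree claim through uniqueness of the multilinear representation, and the expansion above (or Lemma~\ref{lem:exp-coefficient-parity}) supplies that directly.
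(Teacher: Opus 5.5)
Your proof is correct and follows essentially the same route as the paper: you read off the top monomial $v_1\cdots v_m$ from the product to get degree $m$, then collapse $\langle\delta_0,g\rangle$ to $g(0)$ because $\delta_0$ is supported only at the origin. The paper does exactly this, and your cross-check via Lemma~\ref{lem:exp-coefficient-parity} is a harmless extra.
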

\begin{proof}
The product defining $\delta_0$ contains the degree-$m$ monomial $v_1\cdots v_m$ and
equals one only at the origin. Hence its inner product with a function $f$ is $f(0)$. Equation~\eqref{eq:exp-deltainner}
follows for $f=1$ and for every nonconstant monomial.
\end{proof}

\begin{lemma}[Affine changes of Boolean functions]\label{lem:exp-affine-invariance}
An \emph{invertible affine map} on $\F_2^m$ is a map
$L\colon\F_2^m\to\F_2^m$ of the form
\begin{equation}\label{eq:exp-affine-map}
L(x)=Ax+b,
\end{equation}
where $A\in\operatorname{GL}(m,2)$ and $b\in\F_2^m$. For a set
$E\subseteq\F_2^m$, write $L(E)=\{L(x):x\in E\}$ for its direct image.
Boolean functions $c,c'\colon\F_2^m\to\F_2$ are \emph{affine-equivalent}
when $c'=c\circ L^{-1}$ for such a map $L$. The induced transformation satisfies
\begin{equation*}
\supp(c')=L(\supp(c)),\qquad |c'|=|c|,\qquad \deg c'=\deg c.
\end{equation*}
\end{lemma}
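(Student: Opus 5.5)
The three claims are direct consequences of $c'=c\circ L^{-1}$. The plan is to treat the support, the weight, and the degree in that order, using only that $L$ is a bijection of $\F_2^m$ whose inverse $L^{-1}(y)=A^{-1}(y+b)$ is again invertible affine.

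\emph{Support.} For $y\in\F_2^m$ we have $c'(y)=1$ if and only if $c(L^{-1}(y))=1$, that is, if and only if $L^{-1}(y)\in\supp(c)$. Equivalently, $y=L(x)$ for some $x\in\supp(c)$. Hence $\supp(c')=L(\supp(c))$.

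\emph{Weight.} Because $L$ is injective, $\lvert L(E)\rvert=\lvert E\rvert$ for every set $E$. Taking $E=\supp(c)$ gives $\lvert c'\rvert=\lvert c\rvert$.

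\emph{Degree.} First show $\deg(c\circ M)\le\deg c$ for every affine map $M(y)=A'y+b'$. Write $c$ in its multilinear expansion from Definition~\ref{def:exp-indicator}. Each coordinate $(M(y))_i$ is an affine form in $y$, of degree at most one. A monomial $\prod_{i\in T}x_i$ therefore becomes a product of $\lvert T\rvert$ affine forms. Expanding this product over $\F_2$ and reducing with $y_j^2=y_j$ gives a polynomial of degree at most $\lvert T\rvert$, because reduction never raises degree. Summing over the monomials of $c$ yields $\deg(c\circ M)\le\deg c$. Apply this with $M=L^{-1}$ to get $\deg c'\le\deg c$. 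Then apply it with $M=L$ to $c'$, using $c=c'\circ L$, to get the reverse inequality. Thus $\deg c'=\deg c$.

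The only step needing care is the degree bound. The point to state explicitly is that the multilinear representative is unique (Definition~\ref{def:exp-indicator}), so the reduced polynomial computed above is exactly the expansion of $c\circ M$. Its degree is therefore $\deg(c\circ M)$, and so it is bounded by $\deg c$.
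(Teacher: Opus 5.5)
Your proposal is correct and follows the paper's proof essentially step for step: the support identity comes from $c'(y)=1\iff c(L^{-1}(y))=1$, the weight equality from bijectivity of $L$, and the degree equality from substituting affine forms in both directions ($c'=c\circ L^{-1}$ and $c=c'\circ L$). You also justify, via uniqueness of the multilinear expansion, that the reduced substituted polynomial is exactly the expansion of $c\circ M$, a detail the paper leaves implicit.
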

\begin{proof}
For $y\in\F_2^m$, $c'(y)=1$ if and only if $c(L^{-1}(y))=1$.
Hence $\supp(c')=\{L(x):c(x)=1\}=L(\supp(c))$, and bijectivity gives the weight
identity. Substitution of affine linear forms gives
$\deg(c\circ L^{-1})\le\deg c$, and substituting into $c=c'\circ L$ gives the
opposite inequality.
\end{proof}

\phantomsection\label{app:exp-quadratic-normal-forms}
\paragraph{Quadratic normal forms.}
Dickson's classification~\cite[Ch.~15, Thms.~4--5]{macwilliams1977theory} gives the
following representatives of quadratic Boolean functions under affine equivalence:
\begin{equation}\label{eq:exp-dickson-general}
c(L(y))=\sum_{i=1}^{t}y_{2i-1}y_{2i}+\ell(y)+\varepsilon.
\end{equation}
Here $0\le2t\le m$, $\varepsilon\in\F_2$, and $\ell$ is either zero or
$y_{2t+1}$, with the latter choice available only when $2t<m$. A nonconstant
degree-one function has the form $c(y)=a\cdot y+\varepsilon$, with $a\ne0$, and
\begin{equation}\label{eq:exp-affine-hyperplane-support}
\supp(c)=\{y\in\F_2^m:a\cdot y=1+\varepsilon\}.
\end{equation}
The support in~\eqref{eq:exp-affine-hyperplane-support} is an affine hyperplane. When the
support contains the origin, its projective
support is obtained by deleting the origin.

\paragraph{Reed--Muller duality.}
For binary Reed--Muller codes~\cite[Ch.~13, Thm.~4, p.~375]{macwilliams1977theory},
\begin{equation}\label{eq:exp-rmdual}
\RM(r,m)^\perp=\RM(m-r-1,m),\qquad\RM(r,m)=\{0\}\ \text{for }r<0.
\end{equation}

\paragraph{Minimum weight.}
\phantomsection\label{fact:exp-rm-minweight}
The minimum weight of $\RM(r,m)$ is $2^{m-r}$, and its minimum-weight codewords are
exactly the indicators of the $(m-r)$-dimensional affine
flats~\cite[Cor.~5.3.3]{assmusKeyDesigns1992}.

\subsection{Operations on codes}\label{app:exp-facts-generators}

Row operations may preserve both code spaces, whereas puncturing can change their dimensions
and distance. The results below specify the effects of these operations on the code parameters
and the CH conditions.

In Appendices~\ref{app:exp-search}--\ref{app:exp-floor}, we write a generator matrix as
$G=[K;S]\in\F_2^{(3+\widetilde s)\times n}$, with logical rows $K_1,K_2,K_3$ and stabilizer rows
$S_1,\ldots,S_{\widetilde s}$. The rows of $G$ span the code $\mathcal C_1$, and the rows of $S$
span the code $\mathcal C_2\subseteq\mathcal C_1$, as for a generator
matrix~\cite[{\S}1.2]{huffmanFundamentals2003}, except that the rows need not be independent.
The code parameters are
\begin{equation*}
s=\dim\mathcal C_2\le\widetilde s,\qquad
k=\dim\mathcal C_1-\dim\mathcal C_2.
\end{equation*}
By~\eqref{eq:exp-dZ}, the $Z$-distance $d_Z$ is the least
weight of a vector in $\mathcal C_2^\perp\setminus\mathcal C_1^\perp$. The parameters $s$, $k$,
and $d_Z$ therefore depend only on the codes $\mathcal C_2\subseteq\mathcal C_1$, and not on the
rows chosen to span them.

Column $j$ of $G$ is a full column
$(\alpha,\beta)$ as in Section~\ref{sec:exp-fg48}, with logical label $\alpha\in\F_2^3$ and
syndrome $\beta\in\F_2^{\widetilde s}$. When $\mathcal C_2$ is projective, the label map $\kappa$
of Section~\ref{sec:exp-full-g} writes each full column as $(\kappa(\beta),\beta)$ with
$\beta\in\mathcal P$, and the logical row $K_i$ is the vector
$(\kappa_i(\beta))_{\beta\in\mathcal P}$. Appendix~\ref{app:exp-search} writes this vector as
$\kappa_i$.

Appendix~\ref{app:exp-floor} works at stabilizer rank $s$, meaning that $S$ has $s$
independent rows, so that the syndromes lie in $\F_2^s$. By
Lemma~\ref{lem:exp-generator-operations}(a) and~(b) below, replacing $S$ by a row basis changes
neither the nine CH conditions nor $n$, $s$, $k$, and $d_Z$.

\begin{corollary}\label{cor:exp-ch-codes}
Let $G=[K;S]$ be the generator matrix for the codes $\mathcal C_2\subseteq\mathcal C_1$. Then $G$ satisfies the nine
CH conditions if and only if
\begin{enumerate}
\item[(i)] $|a\wedge b|$ is even for all $a,b\in\mathcal C_1$,
\item[(ii)] $|a\wedge b\wedge u|$ is even for all $a,b\in\mathcal C_1$ and
$u\in\mathcal C_2$, and
\item[(iii)] $|K_1\wedge K_2\wedge K_3|$ is odd.
\end{enumerate}
In that case every codeword of $\mathcal C_1$ has even weight, by~(i) with $a=b$, and
the following hold.
\begin{enumerate}
\item[(a)] For $x\in\F_2^3$, $y\in\F_2^{\widetilde s}$, and
$\{i,i',i''\}=\{1,2,3\}$,
\begin{equation*}
w=\sum_{h=1}^{3}x_hK_h+\sum_{l=1}^{\widetilde s}y_lS_l
\quad\Longrightarrow\quad
|w\wedge K_{i'}\wedge K_{i''}|\equiv x_i\pmod2.
\end{equation*}
\item[(b)] If $n$ is even, then the matrix $[K;S;\mathbf1^{\mathsf T}]$, with the
all-ones row appended as a stabilizer row, satisfies the nine CH conditions.
\end{enumerate}
\end{corollary}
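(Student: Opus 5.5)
The plan is to prove the equivalence first, then derive (a) and (b) from it. The main tool is the additivity of overlap parities in Corollary~\ref{cor:exp-xorweight}. It makes $(a,b)\mapsto|a\wedge b|\bmod2$ a symmetric bilinear form over $\F_2$. It also makes $(a,b,u)\mapsto|a\wedge b\wedge u|\bmod2$ a symmetric trilinear form.

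For the direction from the CH conditions to (i)--(iii), I would proceed in three steps.
\begin{enumerate}[(1)]
\item Condition (iii) is K3 itself.
\item For (i), expand $a=\sum x_hK_h+\sum y_lS_l$ and $b$ similarly. Bilinearity reduces $|a\wedge b|$ mod 2 to a sum of generator overlaps $|g\wedge g'|$ with $g,g'$ rows of $G$. When $g\ne g'$ these are even by K2, KS1 and S2. When $g=g'$ we have $|g\wedge g|=|g|$, which is even by K1 and S1.
\item For (ii), trilinearity reduces $|a\wedge b\wedge u|$ to terms $|g\wedge g'\wedge S_l|$, where the stabilizer row $S_l$ comes from $u$. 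Repeated rows collapse: $g\wedge g=g$. So each term is one of the following, and all are even.
\begin{itemize}
\item A single weight or a pairwise overlap, even by K1, S1, KS1, S2 or K2.
\item A triple overlap containing at least one stabilizer row, even by KS2, KS3 or S3.
\end{itemize}
The triple $K_1\wedge K_2\wedge K_3$ never occurs, since $u\in\mathcal C_2$ contributes a stabilizer row.
\end{enumerate}

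The converse is immediate. Each of the nine conditions is a special case of (i), (ii) or (iii), taking $a,b,u$ to be generator rows with $a=b$ allowed. For example, K1 is (i) with $a=b=K_i$. KS2 is (ii) with $a=K_i$, $b=K_{i'}$, $u=S_l$. S3 is (ii) with all three arguments stabilizer rows. The evenness of every codeword of $\mathcal C_1$ is (i) with $a=b$.

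For (a), I would use trilinearity in the first slot. This gives $|w\wedge K_{i'}\wedge K_{i''}|\equiv\sum_h x_h|K_h\wedge K_{i'}\wedge K_{i''}|+\sum_l y_l|S_l\wedge K_{i'}\wedge K_{i''}|$. The $S_l$ terms vanish by KS2. For $h\in\{i',i''\}$ the term collapses to the pairwise overlap $|K_{i'}\wedge K_{i''}|$, which is even by K2. The only surviving term is $h=i$, which contributes $x_i$ by K3.

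For (b), the only new stabilizer row is $\mathbf1$, and intersecting with $\mathbf1$ is the identity. The new conditions involving $\mathbf1$ therefore reduce to conditions already covered:
\begin{itemize}
\item $|\mathbf1|=n$, which is even by hypothesis;
\item $|K_i\wedge\mathbf1|=|K_i|$ and $|S_l\wedge\mathbf1|=|S_l|$, even by K1 and S1;
\item $|K_i\wedge K_{i'}\wedge\mathbf1|$, $|K_i\wedge S_l\wedge\mathbf1|$ and $|S_l\wedge S_{l'}\wedge\mathbf1|$, which are pairwise overlaps, even by K2, KS1 and S2.
\end{itemize}
K3 is unaffected.

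I expect no serious obstacle. The only care needed is in the multilinear expansion: diagonal terms such as $a=b$ or repeated rows collapse to lower-order overlaps, and one must check that no expansion term ever reduces to $K_1\wedge K_2\wedge K_3$ except in (a), where it is exactly the intended term.
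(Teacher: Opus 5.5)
Your proof is correct and follows essentially the same route as the paper: additivity of overlap parities (Corollary~\ref{cor:exp-xorweight}) reduces (i)--(ii) to generator rows, where $g\wedge g=g$ collapses repeated rows onto the nine CH conditions, and part~(a) uses the same first-slot expansion. The only cosmetic difference is in part~(b). There the paper argues via the enlarged codes $\mathcal C_1\cup(\mathbf1+\mathcal C_1)$ and $\mathcal C_2\cup(\mathbf1+\mathcal C_2)$, whereas you check the new row conditions containing $\mathbf1$ one at a time; both reduce each new overlap to $n$ or to a lower-order overlap that is already even.
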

\begin{proof}
By Corollary~\ref{cor:exp-xorweight}, the parities of $|a\wedge b|$ and $|a\wedge b\wedge u|$ are additive in each argument, so (i)
and~(ii) hold for all codewords exactly when they hold for rows. On rows, $R\wedge R=R$ reduces a
repeated row, (i) becomes K1, K2, KS1, S1, and S2, (ii) becomes KS2, KS3, and S3, and (iii) is
K3.

For part~(a), additivity expresses $|w\wedge K_{i'}\wedge K_{i''}|$ as
$\sum_hx_h|K_h\wedge K_{i'}\wedge K_{i''}|$ plus the overlap of
$\sum_ly_lS_l\in\mathcal C_2$ with $K_{i'}\wedge K_{i''}$. The terms with $h=i',i''$ are even
by~(i), the stabilizer term is even by~(ii), and~(iii) makes the term with $h=i$ equal to $x_i$
mod $2$.

For part~(b), the codes become $\mathcal C_1\cup(\mathbf1+\mathcal C_1)$ and
$\mathcal C_2\cup(\mathbf1+\mathcal C_2)$, and additivity leaves only overlaps containing
$\mathbf1$. Each equals $n$ or an overlap $|a\wedge b|$ with $a,b\in\mathcal C_1$, which is even
by hypothesis or by~(i). Condition~(iii) is unchanged.
\end{proof}

\begin{definition}[Puncturing]\label{def:exp-puncturing}
Let $\mathcal C\subseteq\F_2^n$ be a linear code, and let
$J\subseteq\{1,\ldots,n\}$ be a set of coordinate indices. The \emph{puncture} of
$\mathcal C$ on $J$ is the linear code $\mathcal C'\subseteq\F_2^{n-\lvert J\rvert}$ obtained by
deleting the coordinates in $J$ from every codeword of $\mathcal C$
~\cite[{\S}1.5.1 and Thm.~1.5.1]{huffmanFundamentals2003}.
\end{definition}

\begin{lemma}[Row operations and puncturing]\label{lem:exp-generator-operations}
\begin{enumerate}
\item[(a)] \emph{Left multiplication.} For a matrix $Q$ with $3+\widetilde s$ columns,
\begin{equation*}
\begin{aligned}
\operatorname{rowspan}QG&\subseteq\operatorname{rowspan}G,\\
\operatorname{rank}QG&=\operatorname{rank}Q
&&\text{if the rows of $G$ are independent.}
\end{aligned}
\end{equation*}
The row-space inclusion is an equality when $Q$ is invertible. Adding
one row to another, deleting a row, and replacing $S$ by $AS$ for an invertible
$\widetilde s\times\widetilde s$ matrix $A$ are left multiplications.

Suppose that $G$ satisfies
the nine CH conditions, and let $G'=[K';S']$ have logical rows $K_i'\in K_i+\mathcal C_2$ and
stabilizer rows in $\mathcal C_2$. Then $G'$ satisfies the nine CH conditions. If moreover the
rows of $S'$ span $\mathcal C_2$, then $G'$ and $G$ generate the same codes, and hence the same $n$,
$s$, $k$, and $d_Z$.
\item[(b)] \emph{Row basis.} Let the rows of $S$ indexed by an ordered set $I$ form a
basis of $\mathcal C_2$, and let $S_I$ be the $s\times n$ matrix they form. Restricting a syndrome
$\beta$ to its entries in $I$ gives the corresponding column $\beta_I\in\F_2^s$ of $S_I$.
The restriction preserves distinct and nonzero syndromes, and the columns $\beta_I$
span $\F_2^s$. Thus $S_I$ has the same actual rank as $S$ and is projective with $n$ columns
whenever $S$ is projective.
\item[(c)] \emph{Puncturing.} Let $\mathcal C'$ be the code obtained by puncturing the
set of coordinates $J$ from a code $\mathcal C\subseteq\F_2^n$. Deleting the columns indexed by
$J$ from a matrix whose rows span $\mathcal C$ gives a matrix whose rows span $\mathcal C'$, and
\begin{equation}\label{eq:exp-puncture-rank}
\dim\mathcal C-\dim\mathcal C'
=\dim\{\,c\in\mathcal C:\supp(c)\subseteq J\,\}
\le \lvert J\rvert.
\end{equation}
Write $\iota_J\colon\F_2^{n-|J|}\to\F_2^n$ for insertion of zeros
at each coordinate in $J$. Then $\iota_J(\mathcal C'^\perp)\subseteq\mathcal C^\perp$ and
$|\iota_Jv'|=|v'|$. Consequently, a generator matrix punctured on $J$ has $Z$-distance at least
$d_Z$ of the original.

Moreover, if $\sigma\in\F_2^n$ equals $e_j$ or $e_j+e_{j'}$ with $j\ne j'$
and $G\sigma=0$, then $G$ punctured on $\supp(\sigma)$ satisfies the nine CH conditions exactly
when $G$ does.
\end{enumerate}
\end{lemma}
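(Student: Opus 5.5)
Parts (a), (b), (c) are proved in turn, each directly from the definitions together with Corollary~\ref{cor:exp-ch-codes}.

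For (a), the row-space inclusion is immediate because every row of $QG$ is a linear combination of rows of $G$. When $G$ has independent rows, the map $v\mapsto v^{\mathsf T}G$ is injective, so $\operatorname{rank}QG=\operatorname{rank}Q$. If $Q$ is invertible, applying the inclusion to $G=Q^{-1}(QG)$ gives equality. The three listed operations are left multiplication by an elementary matrix, by a row-deletion matrix, and by $\operatorname{diag}(I_3,A)$.

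For the CH statement in (a), I would use the code-level form of Corollary~\ref{cor:exp-ch-codes}. First, $G'$ spans a code $\mathcal C_1'\subseteq\mathcal C_1$ and its stabilizer rows span $\mathcal C_2'\subseteq\mathcal C_2$, so conditions (i) and (ii) are inherited. For (iii), write $K_i'=K_i+u_i$ with $u_i\in\mathcal C_2$. Expanding $|K_1'\wedge K_2'\wedge K_3'|$ by trilinear additivity (Corollary~\ref{cor:exp-xorweight}), every term other than $|K_1\wedge K_2\wedge K_3|$ contains some $u_i$, hence is even by (ii). If moreover $S'$ spans $\mathcal C_2$, then $\mathcal C_2'=\mathcal C_2$ and $\mathcal C_1'=\operatorname{span}(K_i+\mathcal C_2)=\mathcal C_1$. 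Equality of the codes then gives equality of $n,s,k,d_Z$, since those depend only on the codes.

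For (b), $S=MS_I$ for some matrix $M$, so every syndrome is $\beta=M\beta_I$. Distinct $\beta$ therefore have distinct $\beta_I$, and $\beta_I=0$ forces $\beta=0$. The columns $\beta_I$ span $\F_2^s$ because $S_I$ has $s$ independent rows, so its column rank equals $s$.

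For (c), deleting the coordinates $J$ is a linear map, so it sends a spanning set of $\mathcal C$ to a spanning set of $\mathcal C'$. Its kernel restricted to $\mathcal C$ is $\{c\in\mathcal C:\supp(c)\subseteq J\}$, which gives~\eqref{eq:exp-puncture-rank}. The bound by $|J|$ holds because this kernel embeds in $\F_2^J$.

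Next I would show $\iota_J(\mathcal C'^\perp)\subseteq\mathcal C^\perp$. For $c\in\mathcal C$ with puncture $c'$ we have $\iota_Jv'\cdot c=v'\cdot c'$, and zero-insertion preserves weight. Applying this to both codes, a $Z$-logical $v'\in\mathcal C_2'^\perp\setminus\mathcal C_1'^\perp$ maps to $\iota_Jv'\in\mathcal C_2^\perp$. It stays outside $\mathcal C_1^\perp$ because some $c\in\mathcal C_1$ has $c'\cdot v'\ne0$. Hence the punctured code has $d_Z$ at least that of the original.

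The final claim in (c) is the main obstacle. The condition $G\sigma=0$ means column $j$ is zero, or columns $j$ and $j'$ are equal. Every overlap $|a\wedge b\wedge u|$ of rows is a sum over columns of a product of the entries, so removing a zero column changes no overlap. Removing two equal columns changes each overlap by exactly twice the same product. Either way all nine parities are unchanged, and the equivalence holds in both directions.
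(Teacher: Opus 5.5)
Your proposal is correct and follows essentially the same route as the paper's proof. In part~(a) you inherit conditions (i)--(ii) of Corollary~\ref{cor:exp-ch-codes} and use trilinear additivity for (iii); in part~(b) you write $\beta=M\beta_I$; in part~(c) you use rank--nullity and zero-insertion, and you settle the final parity claim by column-by-column contributions. The differences are only cosmetic: you obtain $\operatorname{rank}QG=\operatorname{rank}Q$ from injectivity of $v\mapsto v^{\mathsf T}G$ rather than from the columns of $G$ spanning, and you phrase the $d_Z$ bound through $\mathcal C_1'^\perp$ rather than through $Sv=0$, $Kv=K'v'\ne0$.
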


\begin{proof}
For part~(a), row $i$ of $QG$ is $\sum_jQ_{ij}g^{\,j}$, where $g^{\,j}$ are the
rows of $G$, so $\operatorname{rowspan}QG\subseteq\operatorname{rowspan}G$, and $G=Q^{-1}(QG)$ gives equality when $Q$ is invertible. If the rows of $G$ are
independent, then the columns of $G$ span $\F_2^{3+\widetilde s}$, so the columns of $QG$ span the
column space of $Q$.

For CH preservation, the codes of $G'$ satisfy $\mathcal C_2'\subseteq\mathcal C_2$ and
$\mathcal C_1'\subseteq\mathcal C_1$, so conditions (i) and~(ii) of
Corollary~\ref{cor:exp-ch-codes} for $G'$ are instances of the same conditions for $G$. Writing
$K_i'=K_i+u_i$ with $u_i\in\mathcal C_2$, additivity writes $|K_1'\wedge K_2'\wedge K_3'|$ as
$|K_1\wedge K_2\wedge K_3|$ plus overlaps with a factor $u_i$ and two factors in $\mathcal C_1$,
which are even by~(ii). When the rows of $S'$ span $\mathcal C_2$, the codes of $G'$ equal
$\mathcal C_2\subseteq\mathcal C_1$.

For part~(b), every omitted row is a linear combination of the selected rows,
so $\beta_I$ determines every entry of $\beta$. If $\beta_I=0$, every entry of $\beta$ is zero as well. Finally,
$\operatorname{rowspan}S_I=\mathcal C_2$ gives $\operatorname{rank}S_I=s$, so its columns span
$\F_2^s$.

For part~(c), coordinate deletion is a linear map from $\mathcal C$ onto
$\mathcal C'$ whose kernel consists of the codewords supported inside $J$, and rank--nullity
gives~\eqref{eq:exp-puncture-rank}.

For $v'\in\mathcal C'^\perp$ and $c\in\mathcal C$, let $c'\in\mathcal C'$ be the punctured codeword of
$c$ on $J$, the word obtained by deleting the coordinates in $J$ from $c$. Then the coordinates in $J$ contribute nothing to $c\cdot\iota_Jv'$, so
$c\cdot\iota_Jv'=c'\cdot v'=0$. For the
distance, let $v'$ satisfy $S'v'=0$ and $K'v'\ne0$ for the punctured generator matrix $[K';S']$,
and set $v=\iota_Jv'$. Then $Sv=0$, $Kv=K'v'\ne0$, and $|v|=|v'|$, so~\eqref{eq:exp-dZ} gives
the bound.

Finally, $G\sigma=0$ says that column $j$ is zero or that columns $j$ and $j'$ are
equal. The deleted columns therefore contribute $0$, or the same bit twice, to every overlap of
rows, so every CH parity is the same with and without them.
\end{proof}

\section{SAT search and exclusion}\label{app:exp-search}
Fix a support $\mathcal P$ satisfying S1--S3.
Order the $n$ columns of $\mathcal P$ as they appear in $S$. Each logical row is the vector
\begin{equation*}
\kappa_i=\bigl(\kappa_i(\beta)\bigr)_{\beta\in\mathcal P}
\in\F_2^{\mathcal P}\cong\F_2^n,
\qquad i\in\{1,2,3\}.
\end{equation*}
As in~\eqref{eq:exp-bridge}, all inner products below are over $\mathcal P$, as indicated by the
subscript. For each logical label $\kappa_i$, K1, KS1, and KS3 give, respectively,
\begin{equation}\label{eq:exp-single-label-tests}
\langle\kappa_i,\mathbf1\rangle_{\mathcal P}=0,\qquad
\langle\kappa_i,m_{\{l\}}\rangle_{\mathcal P}=0,\qquad
\langle\kappa_i,m_{\{l,l'\}}\rangle_{\mathcal P}=0.
\end{equation}
The indices are $i\in\{1,2,3\}$,
$1\le l\le s$, and $1\le l<l'\le s$ in the last constraint.
Here $m_{\{l,l'\}}(\beta)=\beta_l\beta_{l'}$ is the product of coordinates $l$ and $l'$ of the
column $\beta\in\mathcal P$. The functions $\mathbf1$, $m_{\{l\}}$, and
$m_{\{l,l'\}}$ span $\RM(2,s)$.
Their restrictions
to $\mathcal P$ therefore span $\operatorname{ev}_{\mathcal P}(\RM(2,s))$, where
$\operatorname{ev}_{\mathcal P}(p)=(p(\beta))_{\beta\in\mathcal P}$ is the evaluation map of
Proposition~\ref{prop:exp-q48-tpattern}.
Consequently, for each $i\in\{1,2,3\}$,
\begin{equation}\label{eq:exp-VA}
\kappa_i\text{ satisfies K1, KS1, and KS3}
\quad\Longleftrightarrow\quad
\kappa_i\in V_{\mathcal P}:=
\bigl(\operatorname{ev}_{\mathcal P}(\RM(2,s))\bigr)^{\perp}
\subseteq\F_2^n.
\end{equation}
Call $(\kappa_1,\kappa_2,\kappa_3)$ a \emph{CH triple} when the three logical rows
satisfy all CH conditions on $\mathcal P$.
Equation~\eqref{eq:exp-VA}
places both in $V_{\mathcal P}$, and K2 and KS2 impose
\begin{equation}\label{eq:exp-admissible-pair}
\langle\kappa_1\kappa_2,\mathbf1\rangle_{\mathcal P}=0,\qquad
\langle\kappa_1\kappa_2,m_{\{l\}}\rangle_{\mathcal P}=0.
\end{equation}
The second constraint in~\eqref{eq:exp-admissible-pair} ranges over $1\le l\le s$.
The first two labels of every CH triple therefore form an ordered pair
$(\kappa_1,\kappa_2)\in(V_{\mathcal P}\setminus\{0\})\times(V_{\mathcal P}\setminus\{0\})$ satisfying
these equations, so enumerating all such pairs covers every possible CH triple. For a fixed pair,
the remaining conditions on $\kappa_3\in V_{\mathcal P}$ are
\begin{equation}\label{eq:exp-floor-third-label}
\begin{aligned}
\langle\kappa_3,\kappa_i\rangle_{\mathcal P}&=0 &&\text{(K2)},\\
\langle\kappa_3,m_{\{l\}}\kappa_i\rangle_{\mathcal P}&=0 &&\text{(KS2)},\\
\langle\kappa_3,\kappa_1\kappa_2\rangle_{\mathcal P}&=1 &&\text{(K3)}.
\end{aligned}
\end{equation}
The indices satisfy $i\in\{1,2\}$ and $1\le l\le s$.
Membership in $V_{\mathcal P}$ imposes K1, KS1, and KS3 on $\kappa_3$,
while~\eqref{eq:exp-floor-third-label} imposes K2, KS2, and K3.
Let $d=\dim V_{\mathcal P}=n-\dim\operatorname{ev}_{\mathcal P}(\RM(2,s))$, let
$v_1,\ldots,v_d$ be a basis of $V_{\mathcal P}$, and write the vector
$\kappa_3=\sum_{a=1}^dt_av_a$ with $t_a\in\F_2$. By
Corollary~\ref{cor:exp-xorweight}, substitution into
\eqref{eq:exp-floor-third-label} gives the binary linear system
\begin{equation}\label{eq:exp-third-label-coefficients}
\begin{aligned}
\sum_{a=1}^d t_a\langle v_a,\kappa_i\rangle_{\mathcal P}&=0,
&&i\in\{1,2\},\\
\sum_{a=1}^d t_a\langle v_a,m_{\{l\}}\kappa_i\rangle_{\mathcal P}&=0,
&&i\in\{1,2\},\ 1\le l\le s,\\
\sum_{a=1}^d t_a\langle v_a,\kappa_1\kappa_2\rangle_{\mathcal P}&=1.&
\end{aligned}
\end{equation}
Since $\kappa_1$, $\kappa_2$, and the basis vectors $v_a$ are fixed, every inner product
in~\eqref{eq:exp-third-label-coefficients} is a known bit, so~\eqref{eq:exp-third-label-coefficients}
is a linear system over $\F_2$ in $t_1,\ldots,t_d$. Gaussian elimination then decides whether the pair extends to a CH triple.
Failure to extend to a CH triple for every possible pair $\kappa_1,\kappa_2$ excludes $\mathcal P$.
When Equation~\eqref{eq:exp-third-label-coefficients} admits a solution, its solutions form an affine space, a particular solution
plus the kernel of the coefficient matrix, and each solution gives a label $\kappa_3$ that extends
the pair to a CH triple. Deciding whether $\mathcal P$ carries a CH triple needs only one solution.

\begin{algorithm}[htbp]
\caption{Finding CH label maps by SAT or testing all label maps on a specified support, followed by binary flip-vector feasibility ($k=3$, $d_Z\ge3$)}\label{alg:exp-search}
\begin{algorithmic}[1]
\Statex \textit{Find a code by SAT:}
\State For each of the $2^s-1$ nonzero $\beta$ introduce the support bit $\chi_S(\beta)$ and the
three label bits $\kappa_i(\beta)$, $i\in\{1,2,3\}$, for $4(2^s-1)$ support and logical label variables,
before encoding auxiliaries
\State For each nonzero $\beta$ and each nonempty $T_K\subseteq\{1,2,3\}$, introduce an
auxiliary variable with clauses that force it to equal
$\chi_S(\beta)\prod_{i\in T_K}\kappa_i(\beta)$. Impose K1--K3, KS1--KS3, and S1--S3 as the XOR
equations~\eqref{eq:exp-clause}, whose terms are support bits and these auxiliary variables.
Conditions S1--S3 restrict the support to
$\RM(s-4,s)\cup\bigl(\delta_0+\RM(s-4,s)\bigr)$, where $\delta_0$ is the origin indicator
\State Solve the constraints globally with a SAT solver. Each satisfying assignment
supplies $\chi_S$ and $\kappa$, and hence a generator matrix such as $\Qcode$
\Statex \textit{Alternatively, test a candidate support $\mathcal P$}:
\State For a support $\mathcal P$ satisfying S1--S3, enumerate every
ordered pair of nonzero $\kappa_1,\kappa_2\in V_{\mathcal P}$
satisfying~\eqref{eq:exp-admissible-pair}. Solve the
system~\eqref{eq:exp-third-label-coefficients} by Gaussian elimination. Each solution
$t_1,\ldots,t_d$ gives a label $\kappa_3\in V_{\mathcal P}$ and a generator matrix. Failure to
extend to a CH triple for every possible pair $\kappa_1,\kappa_2$ excludes $\mathcal P$. Exhausting all supports
satisfying S1--S3 proves non-existence
\Statex \textit{For either route, test the physical signs:}
\State For each generator matrix obtained from SAT or a consistent
fixed-support system, solve~\eqref{eq:exp-Z4half} for a binary $\gamma$.
A solution gives a $T/T^\dagger$ implementation of Definition~\ref{def:exp-CI}
\end{algorithmic}
\end{algorithm}
\FloatBarrier

\section{Proofs for the column indicators and CCZ conditions}\label{app:exp-criterion-proof}

\subsection{Proof of Lemma~\ref{lem:exp-coset}}\label{app:exp-coset-proof}
\begin{proof}

The nonconstant monomials of degree at most three are the $m_T$ with $1\le|T|\le3$.
By~\eqref{eq:exp-bridge}, $\langle\chi_G,m_T\rangle$ equals $1$ for $T=\{1,2,3\}$ and $0$ for
every other such $T$. The representative $\chi_{G,0}$ has the same inner products by its
definition in Lemma~\ref{lem:exp-coset}. Hence $\chi_G-\chi_{G,0}$ is orthogonal to every
nonconstant monomial of degree at most three. Since
$\langle\chi_{G,0},\mathbf1\rangle=\lvert\chi_{G,0}\rvert\bmod2=0$
and $\chi_G$ has weight $n$, set
\[
\varepsilon=\langle\chi_G-\chi_{G,0},\mathbf1\rangle=n\bmod2.
\]
Equation~\eqref{eq:exp-deltainner} then gives
\[
\chi_G-\chi_{G,0}-\varepsilon\delta_0
\in\RM(3,r)^\perp=\RM(r-4,r),
\]
which proves~\eqref{eq:exp-fullcoset}.
Any other admissible $\chi'_{G,0}$ has the same inner products against all monomials of
degree at most three, including the constant, so
$\chi'_{G,0}-\chi_{G,0}\in\RM(r-4,r)$ by~\eqref{eq:exp-rmdual}.
The coset is therefore independent of the representative.
\end{proof}
\subsection{Proof of Theorem~\ref{thm:exp-NS}}\label{app:exp-ch-proof}
\begin{proof}
This proof covers the necessity direction~(ii).
Assume that $C\,T^{\otimes n}$ realizes logical CCZ and set
$\vec z=(\vec x,\vec y)$, $\vec f=G^{\!\top}\vec z$, and $\omega=e^{\pi i/4}$. Every diagonal
Clifford is, up to a global phase, expressed as a product of $Z$, $S$, and $CZ$ gates, so $C$
may be taken in that form. Physical $T$ gates contribute $\omega^{\lvert \vec f\rvert}$, while $C$
contributes $\omega^{2h(\vec f)}$. Equating the combined phase
$\omega^{\lvert \vec f\rvert+2h(\vec f)}$ to $\omega^{4x_1x_2x_3}$ gives
\begin{equation*}
\lvert \vec f\rvert+2h(\vec f)\equiv4x_1x_2x_3\pmod8,
\qquad
h(\vec f)=\sum_j\lambda_j f_j+\sum_{j<j'}\lambda_{jj'}f_jf_{j'},
\end{equation*}
where $\vec f=K^{\!\top}\vec x+S^{\!\top}\vec y$,
$\lambda_j\in\{0,1,2,3\}$, and $\lambda_{jj'}\in\{0,2\}$, as in
\eqref{eq:exp-FQT-CCZ} and~\eqref{eq:exp-correction}.
Logical CCZ up to a global phase would add a constant $c$ to the right-hand side.
The constant changes only the constant term of the expansion in $\vec z$, which the conditions below do
not use. At $\vec z=0$ the left-hand side is $0$, so $c\equiv0\pmod8$ in any case.

List the rows of $G$ as $g^{\,1},\ldots,g^{\,3+s}$, with $g^{\,i}=K_i$ for
$1\le i\le3$ and $g^{\,3+l}=S_l$ for $1\le l\le s$. For a nonempty index set
$T\subseteq\{1,\ldots,3+s\}$, use the monomial $m_T(\vec z)$ defined before Equation~\eqref{eq:exp-bridge}.

Apply Lemma~\ref{lem:exp-xorlift} to expand $f_j=\bigoplus_i g^i_jz_i$ and group its terms by $T$:
\begin{equation}\label{eq:exp-app-xor}
f_j=
\sum_{\emptyset\ne T\subseteq\{1,\ldots,3+s\}}
(-2)^{\lvert T\rvert-1}
\left(\prod_{i\in T}g^i_j\right)m_T(\vec z).
\end{equation}
To determine which coefficients the Clifford correction can change, use
Corollary~\ref{cor:exp-xorweight} to write the following integer identity
\[
2h(\vec f)=\sum_j2\lambda_jf_j+
\sum_{j<j'}\lambda_{jj'}\bigl(f_j+f_{j'}-(f_j\oplus f_{j'})\bigr).
\]
Each of $f_j$, $f_{j'}$, and $f_j\oplus f_{j'}$ is an XOR of selected variables $z_i$.
By~\eqref{eq:exp-app-xor}, the coefficient of $m_T$ in each XOR is
divisible by $2^{|T|-1}$.
Both $2\lambda_j$ and $\lambda_{jj'}$ are even, so every $m_T$ coefficient in
$2h(\vec f)$ is divisible by $2^{|T|}$ for every nonempty $T$.
In particular, degrees one, two, and three give
\begin{equation}\label{eq:exp-app-budget}
\text{coefficient of $m_T$ in $2h(\vec{f})$}\ \in\
\begin{cases}
\{0,2,4,6\},&\lvert T\rvert=1,\\
\{0,4\},&\lvert T\rvert=2,\\
\{0\},&\lvert T\rvert=3,
\end{cases}
\qquad(\bmod 8)
\end{equation}

Equation~\eqref{eq:exp-BH}, with $\vec f=G^{\!\top}\vec z$, gives the coefficients of
$z_i$, $z_iz_{i'}$, and $z_iz_{i'}z_{i''}$ as
$\lvert g^{\,i}\rvert$, $-2\lvert g^{\,i}\wedge g^{\,i'}\rvert$, and
$4\lvert g^{\,i}\wedge g^{\,i'}\wedge g^{\,i''}\rvert$, respectively.
Apply Equation~\eqref{eq:exp-topcoeff} to both sides of
Equation~\eqref{eq:exp-FQT-CCZ}.
The coefficient restrictions~\eqref{eq:exp-app-budget} and these coefficients then give
\begin{equation}\label{eq:exp-app-coefficients}
\begin{aligned}
\lvert g^{\,i}\rvert\bmod8&\in\{0,2,4,6\},\\
-2\lvert g^{\,i}\wedge g^{\,i'}\rvert\bmod8&\in\{0,4\},\\
4\lvert g^{\,i}\wedge g^{\,i'}\wedge g^{\,i''}\rvert
&\equiv
\begin{cases}
4\pmod8,&\{i,i',i''\}=\{1,2,3\},\\
0\pmod8,&\text{otherwise}.
\end{cases}
\end{aligned}
\end{equation}
Together, the three lines give the nine CH conditions in Table~\ref{tab:exp-conditions}.
\end{proof}

\subsection{Proof of Proposition~\ref{prop:exp-signed-overlaps}}\label{app:exp-signed-overlaps-proof}
\begin{proof}
Use $\vec z=(\vec x,\vec y)$, $\vec f=G^{\!\top}\vec z$, the rows
$g^{\,i}$, and the monomials $m_T(\vec z)$ from the preceding proof.
Multiplying~\eqref{eq:exp-app-xor} by $\Gamma_j$ and summing over $j$ gives the exact
integer expansion
\[
\mathrm{sw}_\Gamma(\vec f)=
\sum_{\emptyset\ne T}(-2)^{|T|-1}
\mathrm{sw}_\Gamma\!\left(\bigwedge_{i\in T}g^{\,i}\right)m_T(\vec z).
\]
By the coefficient identity~\eqref{eq:exp-topcoeff}, the equation
$\mathrm{sw}_\Gamma(\vec f)\equiv4x_1x_2x_3\pmod8$ for every $\vec z$
is equivalent to
\[
(-2)^{|T|-1}\mathrm{sw}_\Gamma\!\left(\bigwedge_{i\in T}g^{\,i}\right)
\equiv
\begin{cases}
4,&T=\{1,2,3\},\\
0,&T\ne\{1,2,3\}
\end{cases}
\pmod8
\qquad(T \ne \emptyset).
\]
Terms with $|T|\ge4$ vanish automatically. For $|T|=1,2,3$, the factor
$(-2)^{|T|-1}$ is $1$, $-2$, or $4$.
The three cases reduce to congruences mod $8$, $4$, and $2$, which are the three
lines of Equation~\eqref{eq:exp-signed-overlap-conditions}, proving both directions.
Finally, subtracting the expansion at $\vec y=0$ leaves only terms containing a
stabilizer row. The conditions that contain a stabilizer row make all such coefficients zero
mod 8, proving~\eqref{eq:exp-signed-stabilizer-cancellation}.

\end{proof}

\subsection{Proof of Proposition~\ref{prop:exp-blockcriterion}}\label{app:exp-blockcriterion-proof}

\begin{proof}[Proof of Proposition~\ref{prop:exp-blockcriterion}]

On each block, Corollary~\ref{cor:exp-xorweight} gives
$\lvert \gamma_b+c_b\rvert-\lvert \gamma_b\rvert=\lvert c_b\rvert-2\lvert \gamma_b\wedge c_b\rvert$. Summing the blocks gives
\begin{equation}\label{eq:exp-tpsplit}
\mathrm{sw}_\Gamma(c)=\lvert c\rvert-2(\gamma\cdot c)
=\sum_b\bigl(\lvert \gamma_b+c_b\rvert-\lvert \gamma_b\rvert\bigr).
\end{equation}
Substituting~\eqref{eq:exp-tpsplit} into~\eqref{eq:exp-CI} gives
Equation~\eqref{eq:exp-blockcriterion-mod16}. Under the conditions of
Proposition~\ref{prop:exp-signed-overlaps} involving a stabilizer row,
Equation~\eqref{eq:exp-signed-stabilizer-cancellation} makes the signed weight independent
of $\vec y$ modulo $8$. Setting $\vec y=0$ gives
Equation~\eqref{eq:exp-block-logical-criterion}, with eight logical inputs $\vec x\in\F_2^3$.

\end{proof}

\subsection{Proof of Proposition~\ref{prop:exp-q48-tpattern}}\label{app:exp-tpattern-proof}

{\renewcommand{\thelemma}{\ref{prop:exp-q48-tpattern}}\ExpQFortyEightPattern*}

\begin{proof}[Proof of Proposition~\ref{prop:exp-q48-tpattern}]

Equation~\eqref{eq:exp-Z4half} was used to find the pattern~\eqref{eq:exp-tpattern}.
The calculation below verifies the pattern independently of the search.

Proposition~\ref{prop:exp-signed-overlaps} reduces the check to $\vec y=0$ once its conditions
involving at least one stabilizer row hold.
The pattern~\eqref{eq:exp-tpattern} satisfies
\begin{equation*}
\begin{aligned}
\mathrm{sw}_\Gamma(S_l)&\equiv0\pmod8
&& (1\le l\le6),\\
\mathrm{sw}_\Gamma(K_i\wedge S_l)&\equiv0\pmod4
&& (1\le i\le3,\ 1\le l\le6),\\
\mathrm{sw}_\Gamma(S_l\wedge S_{l'})&\equiv0\pmod4
&& (1\le l<l'\le6).
\end{aligned}
\end{equation*}
The corresponding integer signed weights are recorded below. The row and column indices
are $l,l'\in\{1,\ldots,6\}$ and $i\in\{1,2,3\}$.
\begin{equation*}
\begin{array}{c|rrrrrr}
l&1&2&3&4&5&6\\ \hline
\mathrm{sw}_\Gamma(S_l)&0&0&8&0&0&0
\end{array}
\qquad
\begin{gathered}
\mathrm{sw}_\Gamma(K_i\wedge S_l)\\
\begin{array}{c|rrrrrr}
i\backslash l&1&2&3&4&5&6\\ \hline
1&0&0&4&-4&0&0\\
2&0&0&0&-4&0&-4\\
3&4&4&8&4&4&4
\end{array}
\end{gathered}
\end{equation*}
\begin{equation*}
\begin{gathered}
\mathrm{sw}_\Gamma(S_l\wedge S_{l'})\quad(l<l')\\
\begin{array}{c|rrrrrr}
l\backslash l'&1&2&3&4&5&6\\ \hline
1&-&0&4&0&0&0\\
2&&-&4&0&0&0\\
3&&&-&4&4&4\\
4&&&&-&0&0\\
5&&&&&-&-4\\
6&&&&&&-
\end{array}
\end{gathered}
\end{equation*}
The signed weights $\mathrm{sw}_\Gamma(S_l)$ of single stabilizer rows are divisible by
$8$, and the signed weights $\mathrm{sw}_\Gamma(K_i\wedge S_l)$ and
$\mathrm{sw}_\Gamma(S_l\wedge S_{l'})$ of pairs of rows are divisible by $4$.
Every triple of distinct rows containing a stabilizer row has even overlap by S3, KS2, or KS3.
Since $\Gamma_j\equiv1\pmod2$, its signed overlap is even as well.
Together with the displayed one-row and two-row equations, Equation~\eqref{eq:exp-signed-stabilizer-cancellation} implies
\[
\mathrm{sw}_\Gamma(K^{\!\top}\vec x+S^{\!\top}\vec y)
\equiv\mathrm{sw}_\Gamma(K^{\!\top}\vec x)\pmod8.
\]
Proposition~\ref{prop:exp-blockcriterion} now reduces Equation~\eqref{eq:exp-CI} to
$\mathrm{sw}_\Gamma(K^{\!\top}\vec x)\equiv4x_1x_2x_3\pmod8$ for the eight
logical inputs $\vec x\in\F_2^3$. Equation~\eqref{eq:exp-tpsplit} computes this signed
weight from the three block weights of $\gamma+K^{\!\top}\vec x$.

Write $b=1,2,3$ for the blocks with coset labels
$w^{(1)}=(1,0)$, $w^{(2)}=(0,1)$, and $w^{(3)}=(1,1)$. A column in block $b$ has
stabilizer coordinates $(u,w^{(b)})$ and logical labels
$\kappa(u,w^{(b)})$ from Table~\ref{tab:exp-labelmap}.
For $c=K^{\!\top}\vec x$, let $f_b$ be the restriction of $\gamma+c$ to block~$b$, that is,
$f_b=\gamma_b+c_b$ in the notation of Proposition~\ref{prop:exp-blockcriterion}:
\begin{equation}\label{eq:exp-tpblockrestriction}
f_b(u)=\gamma(u,w^{(b)})+
\sum_{i=1}^3x_i\kappa_i(u,w^{(b)}).
\end{equation}
For each logical input $\vec x$, the signed weight to be checked is
$\mathrm{sw}_\Gamma(K^{\!\top}\vec x)=\sum_b\lvert f_b\rvert-\lvert\gamma\rvert$.

Counting the $u\in\F_2^4$ with $f_b(u)=1$ for each of the eight logical
inputs $\vec x$ gives the block weights needed in~\eqref{eq:exp-tpsplit}:
\phantomsection\label{table:exp-block-weights}
\[
\begin{array}{c|rrrrrrrr}
\vec x&(0,0,0)&(0,0,1)&(0,1,0)&(0,1,1)&(1,0,0)&(1,0,1)&(1,1,0)&(1,1,1)\\ \hline
\lvert f_1\rvert&6&10&10&10&6&6&6&6\\
\lvert f_2\rvert&6&10&6&10&6&10&6&6\\
\lvert f_3\rvert&10&10&6&10&10&6&10&6\\ \hline
\sum_b\lvert f_b\rvert&22&30&22&30&22&22&22&18
\end{array}
\]
For example, at $\vec x=0$ the first block has
$f_1(u)=(u_1+u_2)(1+u_3)+u_2u_4$.
The cases $u_2=0$ and $u_2=1$ contribute two and four ones, respectively, giving
$\lvert f_1\rvert=6$.

At $\vec x=0$, $f_b=\gamma_b$, so the first column of the
\hyperref[table:exp-block-weights]{block-weight table} gives
$\lvert\gamma\rvert=6+6+10=22$.

Subtracting $\lvert\gamma\rvert=22$ from the sum row of the
\hyperref[table:exp-block-weights]{block-weight table} gives
$0$ or $8$ for every input except $(1,1,1)$, where it gives $-4$.
Equation~\eqref{eq:exp-tpsplit} therefore yields
\begin{equation}\label{eq:exp-tpsigncount}
\mathrm{sw}_\Gamma(c)=\sum_b\lvert f_b\rvert-22
\equiv4x_1x_2x_3\pmod8.
\end{equation}
The specified $T/T^\dagger$ pattern thus implements logical CCZ.

\end{proof}

\section{Support reductions and length lower bounds}\label{app:exp-floor}

\subsection{Reduction to projective stabilizer supports}\label{app:exp-projred}

\begin{lemma}[Reduction to projective $\mathcal C_2$]\label{lem:exp-projred}
Let $G=[K;S]$ satisfy the nine CH conditions with $k=3$ and $d_Z\ge3$.
If $\mathcal C_2$ is non-projective, then $S$ has a zero column $j$ or two equal columns $j\ne j'$.
In the zero-column case, form $G'$ by deleting column $j$.
In the repeated-column case, form $G'$ by deleting both equal columns $j$ and $j'$.
In either case, $G'$ satisfies:
\begin{enumerate}
\renewcommand{\labelenumi}{(\roman{enumi})}
\item $G'$ satisfies the nine CH conditions,
\item $k'=k$,
\item $d_Z(G')\ge d_Z(G)$.
\end{enumerate}
Iterating until $\mathcal C_2$ is projective terminates and preserves all three properties.
\end{lemma}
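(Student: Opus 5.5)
The plan is to reduce the statement to the puncturing result, Lemma~\ref{lem:exp-generator-operations}(c), and to handle separately the one property that result does not cover, namely $k'=k$.

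\textbf{The deleted columns of $G$ vanish or coincide.} By definition, $\mathcal C_2$ fails to be projective exactly when the columns of $S$ are not all distinct and nonzero. So there is either a zero syndrome column $\beta_j=0$ or a pair $\beta_j=\beta_{j'}$ with $j\ne j'$. I then invoke Lemma~\ref{lem:exp-fiber}: since $d_Z\ge3$, a column with $\beta_j=0$ has $\alpha_j=0$, and two columns with the same nonzero syndrome have the same label. (If the repeated syndrome is $0$, both labels are $0$ by the first case.) Hence the full column of $G$ is zero in the first case, and columns $j,j'$ of $G$ are equal in the second. Equivalently, $G\sigma=0$ with $\sigma=e_j$ or $\sigma=e_j+e_{j'}$. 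The last clause of Lemma~\ref{lem:exp-generator-operations}(c) then gives (i): $G'$, the puncture on $J=\supp(\sigma)$, satisfies the nine CH conditions. The distance clause of the same lemma gives (iii), $d_Z(G')\ge d_Z(G)$.

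\textbf{Showing $k'=k$.} Write $\delta_a=\dim\mathcal C_a-\dim\mathcal C_a'$. By~\eqref{eq:exp-puncture-rank}, $\delta_a$ is the dimension of the space of codewords of $\mathcal C_a$ supported inside $J$, and $k'=k-\delta_1+\delta_2$. Every codeword $c=G^{\!\top}\vec z$ has $c_j=0$ in the zero-column case, and $c_j=c_{j'}$ in the repeated case. Hence the only possible nonzero codeword supported in $J$ is $e_j+e_{j'}$, so $\delta_2\le\delta_1\le1$. It remains to rule out $e_j+e_{j'}\in\mathcal C_1\setminus\mathcal C_2$.

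Suppose instead that $w=e_j+e_{j'}=\sum_h x_hK_h+\sum_l y_lS_l$ with $\vec x\ne0$, say $x_i=1$. Corollary~\ref{cor:exp-ch-codes}(a) gives $|w\wedge K_{i'}\wedge K_{i''}|\equiv x_i=1\pmod 2$. But columns $j$ and $j'$ of $K$ are equal, so this overlap is $2(K_{i'})_j(K_{i''})_j$, which is even. This contradiction gives $\delta_1=\delta_2$, hence $k'=k$.

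\textbf{Iteration.} Each step strictly decreases $n$, so iterating terminates, and it must stop at a projective $\mathcal C_2$. Properties (i)--(iii) hold at every step with respect to the previous code, and the hypotheses $d_Z\ge3$ and the nine CH conditions are inherited, so they compose along the chain.

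I expect the main obstacle to be the $k'=k$ step. Puncturing could in principle kill an $X$-logical if $e_j+e_{j'}$ were itself a logical operator. The tool I rely on to exclude this is the K3-based parity identity of Corollary~\ref{cor:exp-ch-codes}(a).
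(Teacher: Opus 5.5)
Your proposal is correct and follows essentially the same route as the paper. Both proofs use Lemma~\ref{lem:exp-fiber} to get $G\sigma=0$, then Lemma~\ref{lem:exp-generator-operations}(c) for the CH conditions and $d_Z$, and then the puncture-rank formula with Corollary~\ref{cor:exp-ch-codes}(a) to show that $\sigma^{\!\top}\in\mathcal C_1$ forces $\sigma^{\!\top}\in\mathcal C_2$, so the two dimension drops agree and $k'=k$. Your explicit treatment of a repeated zero syndrome is a small extra that the paper leaves implicit.
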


\begin{proof}
Non-projectivity of $\mathcal C_2$ gives a zero column of $S$ or a repeated pair. Let
$e_j\in\F_2^n$ denote the $j$th coordinate vector, and define $\sigma\in\F_2^n$ by
\begin{equation*}
\sigma=
\begin{cases}
e_j,&\text{if column $j$ of $S$ is zero},\\
e_j+e_{j'},&\text{if columns $j\ne j'$ of $S$ are equal}.
\end{cases}
\end{equation*}
Lemma~\ref{lem:exp-fiber}(b) and $d_Z\ge3$ imply that the corresponding full column of $G$
is zero in the first case and that the two corresponding full columns are equal in the second.
Therefore
\begin{equation}\label{eq:exp-proj-defect}
G\sigma=0,\qquad \lvert \sigma\rvert\le2.
\end{equation}
Puncture the coordinates in $\supp(\sigma)$ from both $S$ and $G=[K;S]$, and call the
results $S'$ and $G'$.

By Lemma~\ref{lem:exp-generator-operations}(c), $G'$ satisfies the nine CH conditions.

The logical dimension is preserved if the row-space dimensions of $G$ and $S$ fall by the same
amount. Equation~\eqref{eq:exp-proj-defect} says that every codeword in $\mathcal C_1$, and hence
every codeword in $\mathcal C_2$, has even parity on $\supp(\sigma)$. The only possible nonzero
codeword in either code supported inside $\supp(\sigma)$ is therefore $\sigma^{\!\top}$.
Equation~\eqref{eq:exp-puncture-rank} gives
\begin{equation}\label{eq:exp-proj-rank}
\begin{aligned}
\operatorname{rank}G-\operatorname{rank}G'
&=\begin{cases}1,&\sigma^{\!\top}\in\mathcal C_1,\\0,&\sigma^{\!\top}\notin\mathcal C_1,
\end{cases}
&\qquad
\operatorname{rank}S-\operatorname{rank}S'
&=\begin{cases}1,&\sigma^{\!\top}\in\mathcal C_2,\\0,&\sigma^{\!\top}\notin\mathcal C_2.
\end{cases}
\end{aligned}
\end{equation}

In the zero-column case, every codeword of $\mathcal C_1$ has zero in coordinate $j$, so
$\sigma^{\!\top}=e_j^{\!\top}\notin\mathcal C_1$ and neither rank falls.

For the repeated-column case, $\mathcal C_2\subseteq\mathcal C_1$ means the two lines of
\eqref{eq:exp-proj-rank} can differ only if
$\sigma^{\!\top}\in\mathcal C_1\setminus\mathcal C_2$. Assume
$\sigma^{\!\top}\in\mathcal C_1$ and write
\begin{equation}\label{eq:exp-proj-decomp}
\sigma^{\!\top}=\sum_{i=1}^3x_iK_i+\sum_{l=1}^s y_lS_l,
\qquad x\in\F_2^3,\quad y\in\F_2^s.
\end{equation}
For $\{i,i',i''\}=\{1,2,3\}$, the two deleted full columns are equal, so
$|\sigma^{\!\top}\wedge K_{i'}\wedge K_{i''}|$ is even.
Corollary~\ref{cor:exp-ch-codes}(a), applied to
\eqref{eq:exp-proj-decomp}, therefore gives $x_i=0$.
Thus $x_1=x_2=x_3=0$, and~\eqref{eq:exp-proj-decomp} puts
$\sigma^{\!\top}$ in $\mathcal C_2$.
So $\sigma^{\!\top}\in\mathcal C_1$ forces $\sigma^{\!\top}\in\mathcal C_2$. Together
with $\mathcal C_2\subseteq\mathcal C_1$, the vector $\sigma^{\!\top}$ lies in both codes or in
neither, so the two cases of~\eqref{eq:exp-proj-rank} always agree. The rank drops
in~\eqref{eq:exp-proj-rank} are equal, so $k'=\operatorname{rank}G'-\operatorname{rank}S'=k$.

Lemma~\ref{lem:exp-generator-operations}(c) gives $d_Z(G')\ge d_Z(G)$.

Each puncture reduces the number of columns by one or two. The process therefore stops after
finitely many steps, when the columns of $S$ are distinct and nonzero, which is precisely when
$\mathcal C_2$ is projective.
\end{proof}

\subsection{Admissible supports and affine classification}\label{app:exp-lengths-flat}
\subsubsection{Support polynomials and block lengths}\label{app:exp-support-lengths}

\begin{proof}[Proof of Theorem~\ref{thm:exp-spectrum}]
In this proof, we translate conditions S1--S3 to inner products associated with the
column indicator $\chi_S$, which allows us to compute the block length $n$ directly.

Set $T_K=\emptyset$ in Equation~\eqref{eq:exp-clause}. Then $m_{T_K}(\kappa(\beta))=1$ for every
$\beta$, and conditions S1--S3 become
\begin{equation*}
\bigoplus_{\beta\ne0}\chi_S(\beta)m_{T_S}(\beta)
=\langle\chi_S,m_{T_S}\rangle=0,
\qquad 1\le\lvert T_S\rvert\le3.
\end{equation*}
The cases $\lvert T_S\rvert=1,2,3$ are S1, S2, and S3, respectively.

To characterize the indicators orthogonal to these monomials, let
\begin{equation*}
W=\operatorname{span}\{m_T:1\le |T|\le3\},
\qquad
\RM(3,s)=W\cup(\mathbf1+W).
\end{equation*}
Conditions S1--S3 imply $\chi_S\in W^\perp$. Reed--Muller
duality~\eqref{eq:exp-rmdual} gives
\begin{equation*}
\RM(3,s)^\perp=\RM(s-4,s).
\end{equation*}
Moreover,
Equation~\eqref{eq:exp-deltainner} gives $\delta_0\in W^\perp$ and
$\langle\delta_0,1\rangle=1$.
Since $W\subseteq\RM(3,s)$, the code $\RM(s-4,s)=\RM(3,s)^\perp$ lies in $W^\perp$.
Every element of $\RM(s-4,s)$ is orthogonal to $1\in\RM(3,s)$, so
$\langle\delta_0,1\rangle=1$ gives $\delta_0\notin\RM(s-4,s)$. Finally,
\begin{equation*}
\dim W^\perp
=2^s-\bigl(\dim\RM(3,s)-1\bigr)
=\dim\RM(s-4,s)+1.
\end{equation*}
The inclusions and the dimension equality yield
\begin{equation*}
W^\perp=\RM(s-4,s)\cup\bigl(\delta_0+\RM(s-4,s)\bigr),
\qquad
\chi_S\in\RM(s-4,s)\cup\bigl(\delta_0+\RM(s-4,s)\bigr).
\end{equation*}

\medskip\noindent
Since $S$ has no zero column, its column indicator satisfies $\chi_S(0)=0$.
Writing $\chi_S=c+\varepsilon\delta_0$ with $c\in\RM(s-4,s)$ forces
$\varepsilon=c(0)$, and hence
\begin{equation}\label{eq:exp-puncture}
n=
\begin{cases}
\lvert c\rvert,&c(0)=0,\\
\lvert c\rvert-1,&c(0)=1.
\end{cases}
\end{equation}
Conversely, every $c\in\RM(s-4,s)$ gives
$\chi_S=c+c(0)\delta_0$, which vanishes at the origin and is orthogonal to every
nonconstant monomial of degree at most three.
Thus S1--S3 hold, and the two cases above give exactly the lengths allowed by
these conditions.

The columns of $S$ are exactly the nonzero $\beta$ with $c(\beta)=1$, so
\[
\operatorname{rank}S
=\dim\operatorname{span}\{\beta\in\F_2^s\setminus\{0\}:c(\beta)=1\}.
\]
Requiring $\operatorname{rank}S=s$ gives the last condition in
Theorem~\ref{thm:exp-spectrum}.

\end{proof}

\subsubsection{Affine equivalence and puncturing}\label{app:exp-affine-equivalence}

Theorem~\ref{thm:exp-spectrum} describes each candidate support $\mathcal P$ by a polynomial
$c\in\RM(s-4,s)$. For a map $\kappa\colon\supp(c)\to\F_2^3$, let $G$ be the matrix
with columns $(\kappa(\beta),\beta)$ for $\beta\in\supp(c)$.
\begin{lemma}[Affine changes of the support]\label{lem:exp-affine-support}
Let $L$ be an invertible affine map of $\F_2^s$, and let $G_L$ be the matrix with
columns $(\kappa(\beta),L(\beta))$ for $\beta\in\supp(c)$. Then $G$ satisfies the nine CH
conditions if and only if $G_L$ does.
\end{lemma}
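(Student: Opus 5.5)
We prove the lemma by expressing each CH parity of $G_L$ as a combination of CH parities of $G$. By Corollary~\ref{cor:exp-ch-codes}, the nine CH conditions can be phrased through the codes: (i) $|a\wedge b|$ even for $a,b\in\mathcal C_1$, (ii) $|a\wedge b\wedge u|$ even for $a,b\in\mathcal C_1$, $u\in\mathcal C_2$, and (iii) $|K_1\wedge K_2\wedge K_3|$ odd. The logical rows of $G_L$ are the same vectors $(\kappa_i(\beta))_\beta$ as those of $G$, with columns in the same order. So the plan is to compare the stabilizer codes. We write $L(\beta)=A\beta+b$.

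First consider the linear case $b=0$. The stabilizer rows of $G_L$ are the coordinate functions $\beta\mapsto (A\beta)_l=\sum_m A_{lm}\beta_m$. Hence $S_L=AS$, a left multiplication by an invertible matrix. Lemma~\ref{lem:exp-generator-operations}(a) then gives $\mathcal C_2$, $\mathcal C_1$ and all nine conditions unchanged. The converse uses $A^{-1}$.

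The translation $b\neq0$ is the real content. Row $l$ of $S_L$ is now $(A\beta)_l+b_l$, so it equals a row of $AS$ plus $b_l\mathbf 1$. Thus $\mathcal C_2(G_L)\subseteq \mathcal C_2(AS)+\langle\mathbf 1\rangle$, and likewise for $\mathcal C_1$. I would invoke Corollary~\ref{cor:exp-ch-codes}(b): if $n$ is even, appending $\mathbf 1$ as a stabilizer row preserves the nine conditions. The conditions for $G_L$ are then instances of (i)--(ii) for the enlarged pair of codes, which hold. Condition (iii) involves only $K$ and so is unchanged. This gives one direction, and applying it to $L^{-1}$ gives the other.

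The main obstacle is justifying that $n$ is even. Evenness holds because the CH conditions for $G$ make every codeword of $\mathcal C_1$ even, but that fact says nothing about $|\mathbf 1|=n$ directly. I would argue it from K3 together with Corollary~\ref{cor:exp-ch-codes}(a), or more simply as follows. Each CH parity involving $\mathbf 1$ reduces, by additivity (Corollary~\ref{cor:exp-xorweight}), to a parity of at most two rows of $G$ or to $n$. The term $n$ appears only in $|\mathbf 1\wedge\mathbf 1|$-type parities, that is, for a row of $S_L$ of the form $\mathbf 1+(\text{stuff})$. For example, $|S'_l|=|(AS)_l|+n-2|(AS)_l|$ has the parity of $n$. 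So the issue is not circular: if $G$ satisfies the CH conditions and some $b_l=1$, I need S1 for the translated row, and that requires $n$ even. I expect to obtain $n$ even via Theorem~\ref{thm:exp-spectrum} when $\chi_S\in\RM(s-4,s)$ (for $s\ge5$ all weights are even), and to handle the $\delta_0$-coset case, where $\mathcal P$ omits the origin, separately by noting that translation can move the origin into the support.

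If that separate case genuinely breaks evenness, I would fall back on the direct indicator formulation. Under $L$ the column indicator transforms by $\chi_{G_L}=\chi_G\circ(\mathrm{id}\times L^{-1})$. The test space $\RM(3,r)$ is affine-invariant, and the right-hand sides of \eqref{eq:exp-bridge} are preserved because $m_{T_S}\circ L^{-1}$ equals $m_{T_S}$ plus lower-degree terms in $y$, with no new $x$-dependence. This route checks each parity condition directly; the constant term (the parity of $n$) is not among the nine conditions and is not needed.
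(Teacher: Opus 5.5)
Your route is the paper's: split $L$ into a linear part, handled by $S\mapsto AS$ and Lemma~\ref{lem:exp-generator-operations}(a), and a translation, handled by appending $\mathbf 1$ via Corollary~\ref{cor:exp-ch-codes}(b) and taking row combinations, with the converse coming from $L^{-1}$. You correctly isolate the one substantive step, that $n$ is even, and you even name the right reason ($\chi_S\in\RM(s-4,s)$). What you miss is that this reason covers every case.

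In this lemma the columns of $G$ are indexed by the full support $\supp(c)$. That support includes the zero-syndrome column at the origin when $c(0)=1$. So $\chi_S=c$ exactly and $n=|c|$. Since $\deg c\le s-4<s$, the coefficient of $\beta_1\cdots\beta_s$ vanishes. Lemma~\ref{lem:exp-coefficient-parity} then gives $|c|\equiv0\pmod2$ for every $s$, not only $s\ge5$. This is all the paper uses. There is no $\delta_0$-coset case inside this lemma.

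Evenness also cannot be extracted from the CH conditions through K3 and Corollary~\ref{cor:exp-ch-codes}(a). The $\nkd{47}{3}{3}$ code of Example~\ref{prop:exp-nonnative47} satisfies all nine conditions with $n$ odd.

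Your fallback for the supposed odd case is wrong and should be dropped. Composing $m_{T_S}$ with an affine map produces a constant term. For $T_K=\emptyset$ that constant pairs with $\chi_G$ to give $\langle\chi_G,\mathbf 1\rangle=n\bmod2$, which is exactly your own computation $|S'_l|\equiv n\pmod2$. So the parity of $n$ is needed even though it is not one of the nine conditions.

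When $n$ is odd, for instance $\mathcal P=\supp(c)\setminus\{0\}$ with $c(0)=1$, any nonzero translation makes some stabilizer row odd and genuinely breaks S1. No argument can avoid the parity of $n$. This is why the paper, in Corollary~\ref{cor:exp-one-support}, first reinserts the zero column at the origin to restore the even-size support $\supp(c)$, and only then translates.
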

Each column keeps its logical label $\kappa(\beta)$, and only its syndrome moves from
$\beta$ to $L(\beta)$.

\begin{proof}
By Lemma~\ref{lem:exp-affine-invariance}, the syndromes $L(\beta)$ of $G_L$ form
$\supp(c\circ L^{-1})$, and $c\circ L^{-1}\in\RM(s-4,s)$, so $G_L$ has the same form as $G$.
Write $L(\beta)=A(\beta+a)$ with $A\in\operatorname{GL}(s,2)$ and $a\in\F_2^s$, a translation by
$a$ followed by the linear map $A$. Proving the forward implication for each of the two maps
suffices, because $L^{-1}$ is again an invertible affine map and gives the converse.

An invertible linear change of the syndromes is a row operation on $S$. Replacing every
syndrome $\beta$ by $A\beta$ replaces $S$ by $AS$. Since $A$ is invertible,
Lemma~\ref{lem:exp-generator-operations}(a) shows that $[K;AS]$ satisfies CH exactly when $[K;S]$
does, with the same $n$, $s$, $k$, and $d_Z$.

A translation of the syndromes is a row operation on $[S;\mathbf1^{\mathsf T}]$.
Replacing every syndrome $\beta$ by $\beta+a$ adds $a_j$ to every entry of row $j$ of $S$, so $S$
becomes $S+a\mathbf1^{\mathsf T}$. Each new row is a combination of the rows of
$[S;\mathbf1^{\mathsf T}]$, and that matrix satisfies CH by Corollary~\ref{cor:exp-ch-codes}(b)
once the full support has even size. The size is even because the coefficient of
$\beta_1\cdots\beta_s$ in $c$ vanishes ($\deg c\le s-4$), so
Lemma~\ref{lem:exp-coefficient-parity} gives
\begin{equation}\label{eq:exp-affine-constant}
|c|\equiv0\pmod2.
\end{equation}
Lemma~\ref{lem:exp-generator-operations}(a) then shows that $[K;S+a\mathbf1^{\mathsf T}]$
satisfies CH.
\end{proof}

Moving a supported point to the origin creates a zero column, which is deleted. Every
invertible affine change $L(\beta)=A\beta+b$ can be written as $L(\beta)=A(\beta+a)$ with
$a=A^{-1}b$, and $a$ is the unique point that $L$ sends to the origin. After the linear part $A$
is removed as above, the candidate support is
\begin{equation}\label{eq:exp-floor-origin}
\mathcal P_a=(\supp(c)+a)\setminus\{0\}.
\end{equation}
The point $a$ is an \emph{origin choice}. If $c(a)=0$, then $\mathcal P_a$ has length
$|c|$. If $c(a)=1$, the column indexed by $a$ has syndrome $0$ after the translation. The
logical label at $a$ must be $\kappa(a)=0$ when $d_Z\ge3$ (Lemma~\ref{lem:exp-fiber}), and puncturing the
code on this zero column preserves CH and $k$ (Lemma~\ref{lem:exp-projred}), so
$\mathcal P_a$ has length $|c|-1$.
The proof of Theorem~\ref{thm:exp-floor38} does not test every origin choice. For each
polynomial $c$, the proof tests only the support $\mathcal P_{a_0}$ of length $|c|$, for one point
$a_0$ with $c(a_0)=0$. Corollary~\ref{cor:exp-one-support} shows that this single test also
excludes $\mathcal P_a$ for every other $a$, including the supports of length $|c|-1$ with
$c(a)=1$.

\begin{corollary}\label{cor:exp-one-support}
Let $c\in\RM(s-4,s)$, and let $a_0\in\F_2^s$ satisfy $c(a_0)=0$. If no label map
satisfies the CH conditions on $\mathcal P_{a_0}$, then for every $a\in\F_2^s$ no label map on
$\mathcal P_a$ satisfies the CH conditions with $d_Z\ge3$.
\end{corollary}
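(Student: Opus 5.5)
We argue by contraposition: starting from a CH label map on $\mathcal P_a$ with $d_Z\ge3$, we build one on $\mathcal P_{a_0}$. The construction has three steps: restore the full support $\supp(c)$ if a point was punctured, translate to a common reference frame, and then translate again so that $a_0$ becomes the origin.

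\emph{Step 1: restore the punctured point.} Fix $a$ and a label map $\kappa$ on $\mathcal P_a$ that satisfies the CH conditions with $d_Z\ge3$.
\begin{itemize}
\item If $c(a)=0$, then $0\notin\supp(c)+a$, so $\mathcal P_a=\supp(c)+a$ has exactly $|c|$ points and no point is missing.
\item If $c(a)=1$, then $\mathcal P_a$ is $\supp(c)+a$ with the origin removed. Reinsert the origin as a zero column with label $\kappa(0)=0$.
\end{itemize}
Adjoining an all-zero full column adds $0$ to every overlap, so every CH parity is unchanged. This is the converse direction of the last clause of Lemma~\ref{lem:exp-generator-operations}(c), with $\sigma=e_j$ and $G\sigma=0$. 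In either case we obtain a matrix whose syndrome multiset is exactly $\supp(c)+a$, which satisfies the nine CH conditions. It may now contain a zero syndrome, but the CH conditions do not require projectivity.

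\emph{Step 2: translate so that $a_0$ becomes the origin.} Apply the translation $\beta\mapsto\beta+a+a_0$ to every syndrome, keeping each column's label. The new syndrome set is $\supp(c)+a_0$. Since $c(a_0)=0$, the origin is not in this set, so it equals $\mathcal P_{a_0}$.

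The translation preserves the CH conditions by the translation half of the proof of Lemma~\ref{lem:exp-affine-support}. That argument rewrites $S$ as $S+a'\mathbf 1^{\mathsf T}$ with $a'=a+a_0$, and needs only two inputs:
\begin{itemize}
\item Corollary~\ref{cor:exp-ch-codes}(b), which lets us append the all-ones row;
\item evenness of the number of columns, which here is $|c|$ and is even by Equation~\eqref{eq:exp-affine-constant}.
\end{itemize}
Corollary~\ref{cor:exp-ch-codes}(b) makes $[K;S;\mathbf 1^{\mathsf T}]$ satisfy CH. Lemma~\ref{lem:exp-generator-operations}(a) then transfers CH to $[K;S+a'\mathbf 1^{\mathsf T}]$, because every new stabilizer row lies in the span of $\mathcal C_2$ and $\mathbf 1$. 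Translation does not change the stabilizer rank of interest, but the CH conditions are the only property needed here.

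\emph{Step 3: conclude.} The resulting label map on $\mathcal P_{a_0}$ satisfies the CH conditions. This contradicts the hypothesis, so no CH label map exists on $\mathcal P_a$.

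\emph{Expected obstacle.} The delicate point is the bookkeeping in Lemma~\ref{lem:exp-generator-operations}(a). That lemma requires the logical rows $K_i'$ to lie in $K_i+\mathcal C_2$ and the new stabilizer rows to lie in $\mathcal C_2$, where $\mathcal C_2$ must be read as the code enlarged by the all-ones row $\mathbf 1$. The logical rows do not change, so $K_i'=K_i$ qualifies, and each new stabilizer row $S_j+a'_j\mathbf 1$ lies in that enlarged code, as required. Finally, the hypothesis $d_Z\ge3$ is used only in Step 1, to force $\kappa(0)=0$ on the reinserted column. This is exactly the setting of Lemma~\ref{lem:exp-fiber}.
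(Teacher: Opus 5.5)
Your proof is correct and is essentially the paper's argument. You reinsert a zero column when $c(a)=1$ via Lemma~\ref{lem:exp-generator-operations}(c), and you translate using the even length $|c|$ and Corollary~\ref{cor:exp-ch-codes}(b), as in Lemma~\ref{lem:exp-affine-support}; the only difference is that you compose the paper's two translations (back by $a$, then by $a_0$) into one shift by $a+a_0$. One small correction to your closing remark: the label $0$ on the reinserted column is your own choice, not something $d_Z\ge3$ forces, so that hypothesis is not actually used.
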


\begin{proof}
Since $c(a_0)=0$, the support $\mathcal P_{a_0}$ equals $\supp(c)+a_0$.
Given a label map on some $\mathcal P_a$
that satisfies the CH conditions with $d_Z\ge3$, reinsert the zero column if $c(a)=1$
(Lemma~\ref{lem:exp-generator-operations}(c)) and translate back by $a$. The result is a label
map on $\supp(c)$ that satisfies the CH conditions, and translating by $a_0$ gives one on
$\mathcal P_{a_0}$ by Lemma~\ref{lem:exp-affine-support}. The resulting label map on $\mathcal P_{a_0}$ contradicts the hypothesis.
\end{proof}

Appendix~\ref{app:exp-flat-supports} below focuses on the supports $\mathcal P$ that are affine flats, which
carry no label map satisfying the CH conditions, and bounds the rank of supports that lie inside
a flat.

\subsubsection{Affine-flat supports}\label{app:exp-flat-supports}

\begin{proof}[Proof of Lemma~\ref{lem:exp-flatobstruction}]
On a flat of dimension $m\le4$, conditions K1, KS1, and KS3 force each label
$\kappa_i$ to have degree at most $m-3$ in the coordinates of the flat, and $\kappa_i=0$ when
$m\le2$. The product $\kappa_1\kappa_2\kappa_3$ then has degree below $m$ and even weight, so K3
fails before any search for the labels. The proof makes this degree count precise.

Let $\mathcal P=v+U$, where $U\subseteq\F_2^s$ has dimension $m\le4$.
Choose affine coordinates on $\F_2^s$ in which $\mathcal P$ is the set of points
$(z,0)$ with $z\in\F_2^m$, so that every function on $\mathcal P$ becomes a function of
$z\in\F_2^m$, and write $\kappa_i(z)$ for the label at $(z,0)$. Degrees are unchanged
under this affine change by Lemma~\ref{lem:exp-affine-invariance}, so we have
\begin{equation}\label{eq:exp-flat-restriction}
\RM(2,s)|_{\mathcal P}
=\{q(z,0):q\in\RM(2,s)\}
=\RM(2,m).
\end{equation}
Equation~\eqref{eq:exp-VA},
Reed--Muller duality~\eqref{eq:exp-rmdual}, and
Equation~\eqref{eq:exp-flat-restriction} give
\begin{equation}\label{eq:exp-flat-label-degree}
\kappa_i\in\RM(2,m)^\perp=
\begin{cases}
\{0\},&m\le2,\\
\RM(m-3,m),&3\le m\le4.
\end{cases}
\end{equation}
For $m\le2$, every $\kappa_i$ is zero, contradicting K3. For $3\le m\le4$, the degree
of a product is at most the sum of the degrees, $\deg(fg)\le\deg f+\deg g$, and $3(m-3)\le m-1$
holds because $m\le4$. Hence
\begin{equation}\label{eq:exp-flat-product-degree}
\deg(\kappa_1\kappa_2\kappa_3)\le3(m-3)\le m-1.
\end{equation}
The coefficient of $z_1\cdots z_m$ in $\kappa_1\kappa_2\kappa_3$ therefore vanishes, and
Equation~\eqref{eq:exp-topcoeff-parity} of Lemma~\ref{lem:exp-coefficient-parity} gives
\begin{equation}\label{eq:exp-flat-product-even}
|\kappa_1\kappa_2\kappa_3|\equiv0\pmod2.
\end{equation}
Equation~\eqref{eq:exp-bridge} with $T=\{1,2,3\}$ and $T_S=\emptyset$ therefore gives
\begin{align}\label{eq:exp-flat-k3-contradiction}
|K_1\wedge K_2\wedge K_3|
&\equiv\sum_{\beta\in\mathcal P}\kappa_1(\beta)\kappa_2(\beta)\kappa_3(\beta)\notag\\
&\equiv|\kappa_1\kappa_2\kappa_3|\equiv0\pmod2,
\end{align}
which contradicts K3.

The lemma also covers a flat with the origin deleted. For such a support, inserting
the zero column at the origin gives, by Lemma~\ref{lem:exp-generator-operations}(c), a label
map on the full flat that satisfies the CH conditions, and the argument above excludes it.

For $n\in\{15,16\}$, Theorem~\ref{thm:exp-spectrum} gives $n=|c|$ or $n=|c|-1$ for some
$c\in\RM(s-4,s)$, and $|c|$ is even by~\eqref{eq:exp-affine-constant}. Hence
$|c|=16$, the minimum weight of $\RM(s-4,s)$. The minimum-weight codewords are the indicators of
$4$-dimensional affine flats (Appendix~\ref{app:exp-facts-boolean}), so the support is such a flat, and this lemma excludes it.
\end{proof}

\phantomsection\label{case:exp-hyperplane-origin}
A support inside an affine flat has actual rank at most the dimension of the flat plus
one. Steps~2, 3, and~5 of the proof of Theorem~\ref{thm:exp-floor38} use this bound to discard
supports that lie in a flat.

\subsubsection{Small-rank support exclusion}\label{app:exp-s5}

\begin{proof}[Proof of Proposition~\ref{prop:exp-lb}]
At stabilizer rank at most $5$, Theorem~\ref{thm:exp-spectrum} leaves so few
supports that all but one are flats, and a direct search settles the remaining case.

By Lemma~\ref{lem:exp-projred} and Theorem~\ref{thm:exp-spectrum}, a code with
stabilizer rank $s\le5$ has support $\mathcal P=\supp(c)\setminus\{0\}$ for some
$c\in\RM(s-4,s)$.

For $s<4$, the convention $\RM(s-4,s)=\{0\}$ in~\eqref{eq:exp-rmdual}
makes the support empty. For $s=4$, the only nonzero codeword in $\RM(0,4)$
is the constant function $\mathbf1$. The function's projective support is
$\F_2^4\setminus\{0\}$, which Lemma~\ref{lem:exp-flatobstruction} excludes.

For $s=5$, every codeword of $\RM(1,5)$ has the form $c(\beta)=a\cdot\beta+b$, and
Table~\ref{tab:exp-s5-supports} lists the resulting supports. The $62$ supports with $a\ne0$ are
affine hyperplanes of $\F_2^5$, which are affine flats of dimension $4$, or such hyperplanes with
the origin deleted. Lemma~\ref{lem:exp-flatobstruction} excludes them, and only
$\mathcal P=\F_2^5\setminus\{0\}$ remains.
\begin{table}[htbp]
\centering
{\small
\begin{tabular}{@{}llrrl@{}}
\hline
Coefficients & Support & Length & Count & Exclusion \\
\hline
$a\ne0,\ b=0$ & Affine hyperplane & $16$ & $31$ & Lemma~\ref{lem:exp-flatobstruction} \\
$a\ne0,\ b=1$ & Punctured hyperplane & $15$ & $31$ & Lemma~\ref{lem:exp-flatobstruction} \\
$a=0,\ b=1$ & $\F_2^5\setminus\{0\}$ & $31$ & $1$ & Algorithm~\ref{alg:exp-search} \\
\hline
\end{tabular}}
\caption{Supports from the $63$ nonzero codewords of $\RM(1,5)$.}
\label{tab:exp-s5-supports}
\end{table}

To run Algorithm~\ref{alg:exp-search} on this support we need $d=\dim V_{\mathcal P}$,
the number of unknowns in~\eqref{eq:exp-third-label-coefficients}. By~\eqref{eq:exp-VA},
$V_{\mathcal P}$ is the orthogonal complement of $\operatorname{ev}_{\mathcal P}(\RM(2,5))$ in
$\F_2^{31}$. The map $\operatorname{ev}_{\mathcal P}$ is injective on $\RM(2,5)$. A polynomial
of degree at most two that vanishes on every nonzero point of $\F_2^5$ is $0$ or $\delta_0$, and
$\delta_0$ has degree five by Lemma~\ref{lem:exp-origin-indicator}, so it is $0$. Hence
$\operatorname{ev}_{\mathcal P}(\RM(2,5))$ has dimension $\dim\RM(2,5)=16$, and
\begin{equation}\label{eq:exp-s5-label-dimension}
\dim V_{\mathcal P}=15.
\end{equation}
With $d=15$, Algorithm~\ref{alg:exp-search} exhausts all $49{,}741{,}825$ ordered pairs
$(\kappa_1,\kappa_2)$ satisfying~\eqref{eq:exp-admissible-pair}. For each pair,
Equation~\eqref{eq:exp-third-label-coefficients} has $2+2s+1=13$
equations at $s=5$ in the $15$ basis coefficients of $\kappa_3$. The equation admits no
solution for any pair, so no pair extends to a CH label map.
\end{proof}

\subsection{Stabilizer-rank reduction and hyperplane supports}\label{app:exp-rank-reductions}

Let $G=[K;S]$ satisfy the nine CH conditions with $k=3$ and $d_Z\ge3$ and $\mathcal C_2$ projective. Express $G$ in
terms of its set $\mathcal P$ of syndrome columns:
\begin{equation}\label{eq:exp-rank-reduction-columns}
G=\left(\binom{\kappa(\beta)}{\beta}\right)_{\beta\in\mathcal P}.
\end{equation}

To delete one independent stabilizer row, construct a projection
$\pi:\F_2^s\to\F_2^{s-1}$ with a one-dimensional kernel,
generated by a nonzero vector $v\in\F_2^s$:
\begin{equation}\label{eq:exp-levelred-kernel}
\ker\pi=\langle v\rangle=\{0,v\}.
\end{equation}
Choose $m$ with $v_m=1$, and define $\pi$ by
\begin{equation*}
\pi(\beta)=\bigl(\beta_l+v_l\beta_m\bigr)_{l\ne m}.
\end{equation*}
If $\pi(\beta)=0$, then $\beta_l=v_l\beta_m$ for $l\ne m$, hence
$\beta=\beta_mv$. For $\delta\in\F_2^3$, apply the row operations
\begin{equation*}
K_i\mapsto K_i+\delta_iS_m\quad(1\le i\le3),\qquad
S_l\mapsto S_l+v_lS_m\quad(l\ne m),
\end{equation*}
and delete $S_m$. The full-column operation is
\begin{equation}\label{eq:exp-levelred-full-map}
\binom{\alpha}{\beta}\longmapsto
\binom{\alpha+\beta_m\delta}{\pi(\beta)},\qquad
\alpha\in\F_2^3,\quad\beta\in\F_2^s.
\end{equation}
Equation~\eqref{eq:exp-levelred-kernel} and $v_m=1$ give
\begin{equation}\label{eq:exp-levelred-full-kernel}
\binom{\alpha+\beta_m\delta}{\pi(\beta)}=0
\quad\Longleftrightarrow\quad
\binom{\alpha}{\beta}\in\left\{0,\binom{\delta}{v}\right\}.
\end{equation}

Whether the output keeps $d_Z\ge3$ depends on the labels of the syndromes that differ
by $v$. For nonzero $v\in\F_2^s$, collect their label differences:
\begin{equation}\label{eq:exp-labeldiff}
\Delta_v=\bigl\{\,\kappa(\beta+v)+\kappa(\beta)
 : \{\beta,\beta+v\}\subseteq\mathcal P\,\bigr\}\subseteq\F_2^3.
\end{equation}
Several pairs may give the same value, and $\Delta_v=\emptyset$ when no supported pair
differs by $v$.

\begin{lemma}\label{lem:exp-projection}
Let $G=[K;S]$ satisfy the nine CH conditions with $k=3$, $d_Z\ge3$, and
$\mathcal C_2$ projective, at stabilizer rank $s$. Let $\pi$, $v$, $m$, and $\delta$ be as above,
and suppose that
\begin{enumerate}
\item[(i)] $\Delta_v\subseteq\{\delta\}$, and
\item[(ii)] $\kappa(v)=\delta$ if $v\in\mathcal P$.
\end{enumerate}
Let
\begin{equation*}
G'=[K';S']=
\left(\binom{\kappa(\beta)+\beta_m\delta}{\pi(\beta)}\right)_{\beta\in\mathcal P}.
\end{equation*}
Then $G'$ satisfies the nine CH conditions with $k'=3$, $d_Z'\ge3$, and
$\operatorname{rank}S'=s-1$. A column of $G'$ is zero, or two distinct columns of $G'$ are
equal, in the following cases:
\begin{align}
\binom{\kappa(\beta)+\beta_m\delta}{\pi(\beta)}=0
&\quad\Longleftrightarrow\quad\beta=v,\label{eq:exp-projection-zero}\\
\binom{\kappa(\beta)+\beta_m\delta}{\pi(\beta)}=
\binom{\kappa(\beta')+\beta'_m\delta}{\pi(\beta')}
&\quad\Longleftrightarrow\quad\beta'=\beta+v,\label{eq:exp-projection-equal}
\end{align}
where $\beta,\beta'\in\mathcal P$ and $\beta\ne\beta'$ in the second line.
Applying Lemma~\ref{lem:exp-projred} to $G'$ gives a generator matrix of stabilizer rank at most
$s-1$ and length at most $n$ with $k=3$, all nine CH conditions, $d_Z\ge3$, and
$\mathcal C_2$ projective.
\end{lemma}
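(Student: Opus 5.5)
The matrix $G'$ arises from $G$ by the row operations $K_i\mapsto K_i+\delta_iS_m$ and $S_l\mapsto S_l+v_lS_m$ ($l\ne m$), followed by deleting the row $S_m$. Its column map is~\eqref{eq:exp-levelred-full-map}. I will argue on the level of codes and then read off the column identities.

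\textbf{CH conditions.} The new logical rows lie in $K_i+\mathcal C_2$ and the new stabilizer rows lie in $\mathcal C_2$. Lemma~\ref{lem:exp-generator-operations}(a) therefore shows that the matrix before deletion satisfies the nine CH conditions. Deleting a row is a left multiplication, and CH conditions on a subset of rows are a subset of the original conditions, with K3 untouched. So $G'$ satisfies CH.

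\textbf{Rank of $S'$.} The rows of $S'$ are the coordinate functions of $\pi(\beta)$ on $\mathcal P$. Since $\mathcal P$ spans $\F_2^s$ and $\pi$ is surjective, the columns $\pi(\beta)$ span $\F_2^{s-1}$. Hence $\operatorname{rank}S'=s-1$.

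\textbf{Zero and repeated columns.} Equation~\eqref{eq:exp-projection-zero} follows from~\eqref{eq:exp-levelred-full-kernel}. A zero column means $(\kappa(\beta),\beta)\in\{0,(\delta,v)\}$, and $\beta\ne0$, so $\beta=v$. Conversely, if $\beta=v\in\mathcal P$, then hypothesis (ii) gives $\kappa(v)=\delta$, and since $v_m=1$ the column is zero. For~\eqref{eq:exp-projection-equal}, linearity of~\eqref{eq:exp-levelred-full-map} shows that two images agree exactly when the difference $(\kappa(\beta)+\kappa(\beta'),\beta+\beta')$ lies in the kernel. Because $\beta\ne\beta'$, this forces $\beta'=\beta+v$ with $\kappa(\beta)+\kappa(\beta')=\delta$. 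Conversely, when $\beta'=\beta+v$ is in $\mathcal P$, hypothesis (i) gives label difference $\delta$, so the two columns coincide.

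\textbf{Distance.} Lemma~\ref{lem:exp-fiber}(b) applies. A column of $S'$ with zero syndrome has $\pi(\beta)=0$, so $\beta=v$, and its full column is zero. Two columns sharing a nonzero syndrome $\pi(\beta)=\pi(\beta')$ have $\beta'=\beta+v$, and their full columns coincide by the previous step. Hence $G'$ is multiplicity-free and $d_Z'\ge3$.

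\textbf{Logical dimension $k'=3$.} This is the step I expect to be the main obstacle. $\mathcal C_1'$ and $\mathcal C_2'$ are the images of $\mathcal C_1$ and $\mathcal C_2$ under the same linear map on row coefficients. Dimension can drop only through row combinations of $G$ that become zero after deleting $S_m$, that is, combinations $x\cdot K'+y\cdot S'=0$ with $(x,y)\ne0$. I would show directly that the rows of $K'$ are independent modulo $\rowspan S'$. Corollary~\ref{cor:exp-ch-codes}(a) applied to $G'$ (which satisfies CH) shows that any $w=\sum x_hK'_h+\sum y_lS'_l$ has $|w\wedge K'_{i'}\wedge K'_{i''}|\equiv x_i$. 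Taking $w=0$ forces $x=0$. So no nontrivial combination of the $K'$ rows lies in $\rowspan S'$, giving $k'=\operatorname{rank}G'-\operatorname{rank}S'=3$.

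\textbf{Conclusion.} Finally apply Lemma~\ref{lem:exp-projred} to $G'$. It punctures the zero column at $v$ and the equal pairs $\{\beta,\beta+v\}$, preserving CH, $k=3$ and $d_Z\ge3$, while possibly lowering the rank further. The result has length at most $n$, stabilizer rank at most $s-1$, and projective $\mathcal C_2$.
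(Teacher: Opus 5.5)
Your proof is correct. It follows the paper's outline for the CH conditions, the two column identities, the multiplicity-free check via Lemma~\ref{lem:exp-fiber}(b), and the final appeal to Lemma~\ref{lem:exp-projred}. The only difference is how you obtain $\operatorname{rank}S'=s-1$ and $k'=3$.

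\emph{The paper's route.} It observes that $k=3$ together with independent rows of $S$ makes all $s+3$ rows of $G$ independent. The row additions are invertible, so deleting $S_m$ leaves $s+2$ independent rows of $G'$, of which $s-1$ are stabilizer rows. This gives both ranks and $k'=(s+2)-(s-1)=3$ at once.

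\emph{Your route.} You get $\operatorname{rank}S'$ from the column side: the columns $\beta\in\mathcal P$ span $\F_2^s$ and $\pi$ is surjective. You get $k'=3$ from Corollary~\ref{cor:exp-ch-codes}(a) applied to $G'$. A relation $\sum_h x_hK'_h+\sum_l y_lS'_l=0$ forces $x_i\equiv\lvert 0\wedge K'_{i'}\wedge K'_{i''}\rvert=0\pmod 2$ for each $i$.

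\emph{What each buys.} Your argument shows that the CH conditions alone make the logical rows independent modulo $\mathcal C_2'$. It therefore never uses the hypothesis $k=3$ on $G$; this is the same device the paper uses inside the proof of Lemma~\ref{lem:exp-projred}. The paper's argument is pure linear algebra, independent of the CH conditions, and yields both ranks from one observation.

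One small simplification: your detour through the matrix before deleting $S_m$ is unnecessary. Lemma~\ref{lem:exp-generator-operations}(a) applies to $G'$ directly, because it only requires the new stabilizer rows to lie in $\mathcal C_2$, not to span it.
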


\begin{proof}
Each new logical row is $K_i+\delta_iS_m$, and each retained stabilizer row is
$S_l+v_lS_m$. Lemma~\ref{lem:exp-generator-operations}(a) therefore gives the CH conditions
for $G'$.

Since $k=3$ and the $s$ rows of $S$ are independent, the $s+3$ rows of $G$ are
independent. The row additions preceding~\eqref{eq:exp-levelred-full-map} are invertible.
Deleting $S_m$ therefore leaves $s+2$ independent rows of $G'$ and $s-1$ independent
stabilizer rows, so
\begin{equation*}
\operatorname{rank}G'=s+2,\qquad
\operatorname{rank}S'=s-1,\qquad k'=(s+2)-(s-1)=3.
\end{equation*}

For distinct $\beta,\beta'\in\mathcal P$, Equation~\eqref{eq:exp-levelred-kernel}
gives $\pi(\beta)=\pi(\beta')$ exactly when $\beta'=\beta+v$. Condition~(i) then gives
$\kappa(\beta')+\kappa(\beta)=\delta$. The two input columns differ by $(\delta,v)$, so their
images under~\eqref{eq:exp-levelred-full-map} agree by~\eqref{eq:exp-levelred-full-kernel}.
Conversely, equal columns have equal syndromes, so $\beta'=\beta+v$, and
\eqref{eq:exp-projection-equal} follows.

The syndrome $\pi(\beta)$ of a column of $G'$ is zero exactly when
$\beta\in\ker\pi=\{0,v\}$. Since $0\notin\mathcal P$, this happens exactly when $\beta=v$ and
$v\in\mathcal P$. In that case condition~(ii) gives $\kappa(v)=\delta$, and $v_m=1$ makes the new
label $\kappa(v)+v_m\delta=0$, so the column is zero by~\eqref{eq:exp-levelred-full-kernel}.
Conversely, a zero column has zero syndrome, so $\beta=v$, and
\eqref{eq:exp-projection-zero} follows.
By~\eqref{eq:exp-projection-equal}, two columns of $G'$ with the same nonzero syndrome
are equal, so they carry the same logical label. By~\eqref{eq:exp-projection-zero}, a column of
$G'$ with zero syndrome has logical label zero. Lemma~\ref{lem:exp-fiber}(b) requires exactly
these two properties, so $d_Z'\ge3$.

Lemma~\ref{lem:exp-projred} removes zero columns and pairs of equal columns
while retaining CH, $k'=3$, and $d_Z'\ge3$.
\end{proof}

The hypotheses of Lemma~\ref{lem:exp-projection} follow from counting in
Appendix~\ref{app:exp-levelred-proof}, or from hyperplane containment in
Appendix~\ref{app:exp-hyperplane-reductions}, which we discuss below.

\subsubsection{General rank reduction}\label{app:exp-levelred-proof}

\begin{proof}[Proof of Lemma~\ref{lem:exp-levelred}]

Each unordered pair of distinct columns $\{\beta,\beta+v\}\subseteq\mathcal P$ determines the
unique nonzero difference $v$. A vector $v$ with $\lvert\Delta_v\rvert\ge2$ accounts for at
least two pairs, and no pair is counted for two different $v$. Hence
\begin{equation}\label{eq:exp-levelred-count}
\#\{\,v:\lvert\Delta_v\rvert\ge2\,\}
\le \Bigl\lfloor\tbinom n2\big/2\Bigr\rfloor.
\end{equation}
Exactly $n$ vectors $v$ lie in $\mathcal P$.
There are $2^s-1$ nonzero vectors $v$, so
\begin{equation*}
\#\{\,v:v\notin\mathcal P\ \text{and}\ \lvert\Delta_v\rvert\le1\,\}
\ge 2^s-1-n-\Bigl\lfloor\tbinom n2\big/2\Bigr\rfloor.
\end{equation*}

\phantomsection\label{proof:exp-levelred-construction}
The hypothesis of Lemma~\ref{lem:exp-levelred} makes this lower bound positive.
Fix a nonzero
$v\notin\mathcal P$ with $\lvert\Delta_v\rvert\le1$. Let $\delta$ be the unique element of
$\Delta_v$ when the set is nonempty, and set $\delta=0$ otherwise.
Both hypotheses of Lemma~\ref{lem:exp-projection} hold: $\Delta_v\subseteq\{\delta\}$
and $v\notin\mathcal P$. Applying the lemma proves the claimed reduction.

\end{proof}

\subsubsection{Hyperplane reductions}\label{app:exp-hyperplane-reductions}
We can also use the membership of a support $\mathcal P$ within a hyperplane $H$ as a
hypothesis for Lemma~\ref{lem:exp-projection}, independent of the counting inequality.
 Table~\ref{tab:exp-hyperplane-reductions} records the two cases
for the rank-$s$ generator~\eqref{eq:exp-rank-reduction-columns}.

\begin{table}[htbp]
\centering
\begin{tabular}{@{}p{0.20\linewidth}p{0.53\linewidth}p{0.21\linewidth}@{}}
\toprule
Support case & Operation & Length and rank\\
\midrule
$\mathcal P\subsetneq H$
& Lemma~\ref{lem:exp-projection} with $v\in H\setminus\mathcal P$, $\delta=0$
& $n'=n$,\newline $s'=s-1$\\[3pt]
$\mathcal P=H$
& Lemma~\ref{lem:exp-projection} with $v\in H$, $\delta=\kappa(v)$, puncture the
zero column& $n'=n-1$,\newline $s'=s-1$\\
\bottomrule
\end{tabular}
\caption{Hyperplane reductions for $\mathcal P\subseteq H\subset\F_2^s$ with
$\operatorname{span}\mathcal P=\F_2^s$ and $0\notin H$. Both cases retain $k=3$, the nine CH
conditions, $d_Z\ge3$, and projectivity.}
\label{tab:exp-hyperplane-reductions}
\end{table}

\begin{corollary}[Hyperplane reduction]\label{cor:exp-hyperplane-reduction}
Let $G=[K;S]$ satisfy the nine CH conditions with $k=3$, $d_Z\ge3$, and
$\mathcal C_2$ projective, at stabilizer rank $s$. If its syndrome set $\mathcal P$ lies in an
affine hyperplane $H$ of $\F_2^s$, then $G$ reduces to a generator matrix of stabilizer rank at most
$s-1$ and length at most $n$, with $k=3$, all nine CH conditions, $d_Z\ge3$, and $\mathcal C_2$ projective.
The length stays $n$ when $\mathcal P\subsetneq H$ and becomes $n-1$ when $\mathcal P=H$.
\end{corollary}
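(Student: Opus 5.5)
The plan is to apply Lemma~\ref{lem:exp-projection} directly, splitting into the two cases of Table~\ref{tab:exp-hyperplane-reductions}. The hyperplane $H$ has $2^{s-1}$ points. Since $\operatorname{span}\mathcal P=\F_2^s$ and $0\notin\mathcal P$, we may take $H$ not to contain the origin. If $H$ were a linear hyperplane, then $\mathcal P$ would span at most a proper subspace, contradicting rank $s$. So $H=\{\beta:a\cdot\beta=1\}$ for some nonzero $a$. The key observation is that the difference of two points of $H$ lies in the linear hyperplane $H_0=\{a\cdot\beta=0\}$, and a point of $H$ never lies in $H_0$. Hence for any $v\in H$ and any $\beta\in\mathcal P$ we have $a\cdot(\beta+v)=0$, so $\beta+v\notin H\supseteq\mathcal P$. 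It follows that $\Delta_v=\emptyset$ for every $v\in H$, and hypothesis (i) of Lemma~\ref{lem:exp-projection} holds for any $\delta$.

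\textbf{Case $\mathcal P\subsetneq H$.} Pick $v\in H\setminus\mathcal P$ and set $\delta=0$. Hypothesis (ii) is vacuous because $v\notin\mathcal P$. By \eqref{eq:exp-projection-zero} and \eqref{eq:exp-projection-equal}, $G'$ has no zero column: such a column would require $\beta=v\in\mathcal P$. It also has no repeated columns, since $\beta+v\notin\mathcal P$. Therefore $S'$ is already projective, the length stays $n$, and Lemma~\ref{lem:exp-projection} gives rank $s-1$, $k=3$, the nine CH conditions, and $d_Z\ge3$.

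\textbf{Case $\mathcal P=H$.} Pick any $v\in H=\mathcal P$ and set $\delta=\kappa(v)$, so hypothesis (ii) holds. Lemma~\ref{lem:exp-projection} now yields exactly one zero column (at $\beta=v$) and no equal pairs, again because $\beta+v\notin\mathcal P$. Lemma~\ref{lem:exp-projred} punctures that zero column, preserving CH, $k=3$, and $d_Z\ge3$, and leaves a projective support of length $n-1$ at rank at most $s-1$.

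The only delicate point is the claim that $H$ avoids the origin. If the statement is read as allowing an arbitrary affine hyperplane, this follows from the rank-$s$ spanning assumption as above. Everything else is a direct check of the hypotheses of Lemma~\ref{lem:exp-projection}, so I expect no real obstacle beyond confirming that $\pi$ merges no columns, which the observation $\beta+v\notin H$ handles.
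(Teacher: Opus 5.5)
Your proposal is correct and follows the paper's proof essentially step for step: the spanning hypothesis gives $H=\{\beta:a\cdot\beta=1\}$, so $\Delta_v=\emptyset$ for every $v\in H$, and Lemma~\ref{lem:exp-projection} is applied with $\delta=0$ and $v\in H\setminus\mathcal P$ when $\mathcal P\subsetneq H$, or with $\delta=\kappa(v)$ and $v\in\mathcal P$ when $\mathcal P=H$, followed by puncturing the single zero column. The only differences are cosmetic: the paper reads projectivity of the new syndromes directly from the injectivity of $\pi$ on $H$, whereas your appeal to \eqref{eq:exp-projection-zero}--\eqref{eq:exp-projection-equal} concerns full columns and implicitly uses $d_Z'\ge3$ to pass to syndromes, and the paper also records that for $\mathcal P=H$ the punctured support is exactly $\F_2^{s-1}\setminus\{0\}$.
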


\begin{proof}
Since $\operatorname{span}\mathcal P=\F_2^s$, write
$H=\{\beta:a\cdot\beta=1\}$ with $a\ne0$. For $v\in H$ and $\beta\in H$,
$a\cdot(\beta+v)=0$, so $\beta+v\notin H$ and $\Delta_v=\emptyset$.
Thus Lemma~\ref{lem:exp-projection}(i) holds for any $\delta\in\F_2^3$, the coefficient vector
in the logical-row additions. The projection $\pi$ of~\eqref{eq:exp-levelred-kernel} is
injective on $H$: equal images of distinct points would require their difference to be $v$.

\phantomsection\label{case:exp-hyperplane-proper}
If $\mathcal P\subsetneq H$, choose $v\in H\setminus\mathcal P$ and $\delta=0$.
Both hypotheses of Lemma~\ref{lem:exp-projection} hold, and the column operation is
\begin{equation}\label{eq:exp-hyperplane-proper-map}
\binom{\kappa(\beta)}{\beta}\longmapsto
\binom{\kappa(\beta)}{\pi(\beta)}.
\end{equation}
Injectivity on $H$ and $\mathcal P\cap\ker\pi=\emptyset$ give distinct nonzero
syndromes. No puncturing is needed, so $n'=n$, $s'=s-1$, and $\mathcal C_2'$ is projective.

\phantomsection\label{case:exp-hyperplane-full}
If $\mathcal P=H$, choose $v\in H$ and $\delta=\kappa(v)$.
Lemma~\ref{lem:exp-projection} now applies with a supported $v$, giving
\begin{equation}\label{eq:exp-rank6-hyperplane-reduction}
\binom{\kappa(\beta)}{\beta}\longmapsto
\binom{\kappa(\beta)+\beta_m\kappa(v)}{\pi(\beta)},\qquad\beta\in H.
\end{equation}
The column at $v$ becomes $(\kappa(v)+\kappa(v);0)=0$. Since $|H|=2^{s-1}$ and $\pi|_H$ is injective, $\pi(H)=\F_2^{s-1}$.
Puncturing the zero column therefore leaves the syndrome set
$\pi(H\setminus\{v\})=\F_2^{s-1}\setminus\{0\}$, with $n'=n-1$ and $s'=s-1$.
The zero column contributes nothing to the row ranks, and $\mathcal C_2'$ is projective.
\end{proof}

\subsection{Finite case analysis for the length lower bound}\label{app:exp-decide-one}

This subsection assembles the results of
Appendices~\ref{app:exp-search}--\ref{app:exp-rank-reductions} into the proof of
Theorem~\ref{thm:exp-floor38}: no CSS code with $k=3$, $d_Z\ge3$, and the nine CH conditions has
length $n\le38$. The proof first lowers the stabilizer rank to $s\in\{6,7,8\}$. At each remaining
rank, it then excludes candidate supports by their indicator polynomial $c$ alone, in ascending
order of $\deg c$ and one affine-equivalence class of $c$ at a time. Every support that these tests do not exclude is passed to
Algorithm~\ref{alg:exp-search}, which solves for the logical rows $\kappa$ and finds none.

Appendix~\ref{app:exp-lengths-flat} characterizes the admissible support $\mathcal P$
in $\F_2^s$. Appendix~\ref{app:exp-rank-reductions} examines when the stabilizer rank can be
reduced, and Appendix~\ref{app:exp-search} decides whether a given $\mathcal P$ admits logical
labels $\kappa$ satisfying the CH conditions. See Figure~\ref{fig:exp-floor-flow} for a visual
guide to the steps of the following proof.

\begin{proof}[Proof of Theorem~\ref{thm:exp-floor38}]

\medskip\noindent\emph{Step 1. Restrict the stabilizer ranks.}\quad
Suppose a generator matrix $G=[K;S]$ of length at most $38$ satisfies the nine CH
conditions with $k=3$ and $d_Z\ge3$.
Lemma~\ref{lem:exp-projred} makes $\mathcal C_2$ projective. We then apply
Lemma~\ref{lem:exp-levelred} repeatedly to lower the stabilizer rank of $S$.
For $n\le38$ and $s\ge9$, the counting inequality of Lemma~\ref{lem:exp-levelred} holds:
\begin{equation}\label{eq:exp-floor-ranks}
2^s-1>n+\Bigl\lfloor\tbinom n2\big/2\Bigr\rfloor.
\end{equation}

Each reduction strictly decreases rank, so we obtain $S$ with $s\le8$.
Since Proposition~\ref{prop:exp-lb} excludes $s\le5$, only $s\in\{6,7,8\}$ remain.

For these ranks, Theorem~\ref{thm:exp-spectrum} expresses each candidate support as
$\mathcal P=\supp(c)\setminus\{0\}$ for $c\in\RM(s-4,s)$,
with length $n=|c|-c(0)$.

By Appendix~\ref{app:exp-affine-equivalence}, it suffices to test one polynomial $c$
from each affine-equivalence class in $\RM(s-4,s)$, together with its translated supports
$\mathcal P_a$ of~\eqref{eq:exp-floor-origin}. Each step below checks the rank of $\mathcal P_a$.

\medskip\noindent\emph{Step 2. Exclude ranks $6$ and $7$.}\quad
For each rank, we first determine the possible weights of $c$ and, at each weight, the
affine-equivalence classes of $c$. The flat and hyperplane cases are then excluded, and
Algorithm~\ref{alg:exp-search} is applied to the remaining candidate supports $\mathcal P_a$.

At $s=6$, the Dickson forms~\eqref{eq:exp-dickson-general} in
Appendix~\ref{app:exp-quadratic-normal-forms} classify $c\in\RM(2,6)$.
Table~\ref{tab:exp-spectrum} lists the possible weights $16,24,28,32,36$ when $n\le38$.
Weight $16$ gives $n\in\{15,16\}$ and an affine $4$-flat, which
Lemma~\ref{lem:exp-flatobstruction} excludes. The other weights give six affine-equivalence
classes of $c$, counted in the $s=6$ rows of Table~\ref{tab:exp-floorcells}: the degree-one class
$c=y_6$ and five quadratic classes.

The degree-one representative $c(y)=y_6$ has support $H=\{y:y_6=1\}$ and weight $32$.
If $a\notin H$, then $\mathcal P_a=H+a$ is a full hyperplane not containing zero, so
Corollary~\ref{cor:exp-hyperplane-reduction}, through the
\hyperref[case:exp-hyperplane-full]{reduction}~\eqref{eq:exp-rank6-hyperplane-reduction},
lowers rank $6$ to $5$.
If $a\in H$, then $\mathcal P_a=(H+a)\setminus\{0\}$ lies in a hyperplane through zero and
has actual rank $5$ by the \hyperref[case:exp-hyperplane-origin]{span bound} of
Appendix~\ref{app:exp-flat-supports}.
Proposition~\ref{prop:exp-lb} excludes the resulting generator matrices of rank $5$.

Algorithm~\ref{alg:exp-search} finds no CH triple on the supports $\mathcal P_a$ of
the five quadratic classes, as recorded in the $s=6$ rows of Table~\ref{tab:exp-floorcells}.
Hence rank $6$ is excluded for $n\le38$.

At $s=7$, the candidates are $c\in\RM(3,7)$ with $|c|\le n+1<40$, and
Table~\ref{tab:exp-spectrum} lists their possible weights. The code $\RM(3,7)$ has minimum weight
$16$, which gives an affine $4$-flat that Lemma~\ref{lem:exp-flatobstruction}
excludes. The weights $24$, $28$, and $36$ are associated with the candidates counted in the $s=7$ rows of
Table~\ref{tab:exp-floorcells}.

Weight $32$ needs one more distinction, because the classification of
Kasami--Tokura--Azumi at this weight covers only the polynomials that are not divisible by an
affine function. A polynomial $c$ is \emph{divisible by an
affine function} $\ell(\beta)=a\cdot\beta+b$, with $a\ne0$, when $c=\ell c'$ for some
polynomial $c'$. Then $c(\beta)=1$ forces $\ell(\beta)=1$, so $\supp(c)$ lies in the affine
hyperplane $H=\{\beta:\ell(\beta)=1\}$. Table~\ref{tab:exp-s7-w32} sorts the weight-$32$
polynomials $c$ by degree and by this divisibility. No weight-$32$ polynomial has degree one,
since a nonconstant affine function on $\F_2^7$ has weight $64$.

\begin{table}[htbp]
\centering
\small
\begin{tabular}{@{}c p{0.2\linewidth} l p{0.36\linewidth}@{}}
\toprule
$\deg c$ & divisible by an affine function & classes & verdict\\
\midrule
$2$ & yes & $y_1y_2$ & $\supp(c)$ in a hyperplane, rank at most $6$ after reduction\\
$3$ & yes & every such $c$ & $\supp(c)$ in a hyperplane, rank at most $6$ after reduction\\
$3$ & no & \begin{tabular}[t]{@{}l@{}}$y_1y_2y_3+y_4y_5+y_1y_4y_5$,\\
$y_1y_2y_3+y_2y_4y_5+y_3y_4y_6$,\\ $y_1y_2y_3+y_3y_4y_5+y_4y_6y_7$\\
\cite[Cor.~1(1)]{kasamiTokuraAzumi1976}\end{tabular} & Algorithm~\ref{alg:exp-search}, no $\kappa_3$\\
\bottomrule
\end{tabular}
\caption{The weight-$32$ polynomials $c\in\RM(3,7)$ in Step~2, sorted by degree and by
divisibility by an affine function.}
\label{tab:exp-s7-w32}
\end{table}

Every polynomial $c$ divisible by an affine function is excluded without
Algorithm~\ref{alg:exp-search}. By~\eqref{eq:exp-floor-origin}, $\mathcal P_a\subseteq H+a$ for
every $a\in\F_2^7$. If $0\in H+a$, the \hyperref[case:exp-hyperplane-origin]{span bound} in
Appendix~\ref{app:exp-flat-supports} gives actual rank at most $6$. Otherwise
$n\le32<|H+a|=64$, so $\mathcal P_a\subsetneq H+a$, and
Corollary~\ref{cor:exp-hyperplane-reduction} lowers the rank to $6$. Ranks at most $6$ are
already excluded. The quadratic class $y_1y_2$ is divisible by $y_1$, so it is one of these
polynomials. For the weight-$32$ polynomials not divisible by an affine function,
Kasami--Tokura--Azumi~\cite[Cor.~1(1)]{kasamiTokuraAzumi1976} give three cubic classes.

Algorithm~\ref{alg:exp-search} finds no CH triple on the remaining supports at rank
$7$, as recorded in the $s=7$ rows of Table~\ref{tab:exp-floorcells}.
Consequently rank $7$ is excluded at any $n\le38$, including the weights $32$ and $36$
needed in the later rank reductions.

\medskip\noindent\emph{Step 3. Reduce candidates at rank $8$.}\quad
We first exclude candidates at rank $8$ that reduce to a rank already excluded
in Steps~1--2. If $|c|\le30$, then $n\le30$ by~\eqref{eq:exp-puncture}, and the counting
hypothesis of Lemma~\ref{lem:exp-levelred} holds:
\begin{equation}\label{eq:exp-rank8-small}
n+\Bigl\lfloor\tbinom n2\big/2\Bigr\rfloor\le247<2^8-1.
\end{equation}
Hence every candidate with $|c|\le30$ reduces to a rank excluded in Steps~1--2.

By~\eqref{eq:exp-affine-constant}, $|c|$ is even. Together with $|c|\le n+1\le39$, only
weights in the interval $32\le|c|<40$ remain.

To exclude lower-degree polynomials, suppose $\deg c<4$, so $c\in\RM(3,8)$.
The code $\RM(3,8)$ has minimum weight $32$, and
Kasami--Tokura~\cite[Thm.~1]{kasamiTokura1970} exclude weights strictly between $32$ and $40$.
Thus $32\le|c|<40$ forces $|c|=32$, whose support is an affine $5$-flat.

Every translated support $\mathcal P_a$ remains inside an affine $5$-flat.
The \hyperref[case:exp-hyperplane-origin]{span bound} of
Appendix~\ref{app:exp-flat-supports} therefore gives actual rank at most $6$, excluded in
Steps~1--2. Only degree $4$ and weights $32,34,36,38$ remain at rank $8$.

\medskip\noindent\emph{Step 4. Exclude degree-$4$ cases at rank $8$.}\quad
Steps~4--7 exclude the remaining degree-$4$ polynomials by expressing them in
six-variable polynomials and enumerating the possibilities.

Theorem~1 of Kasami--Tokura--Azumi~\cite{kasamiTokuraAzumi1976} gives an affine
$6$-flat in $\F_2^8$ on which $c$ vanishes. Choose affine coordinates $(u,y_7,y_8)$ with
$u=(y_1,\ldots,y_6)\in\F_2^6$ so that this flat is given by $y_7=y_8=0$.
Since $c$ vanishes on the flat, $c(u,0,0)=0$ for every $u$, so no monomial of $c$ uses only the
variables $u$. Grouping the remaining monomials by whether they contain $y_7$, $y_8$, or both
gives
\begin{equation}\label{eq:exp-rank8-decomposition}
c(u,y_7,y_8)=y_7f(u)+y_8g(u)+y_7y_8h(u).
\end{equation}
Since $\deg c=4$, the polynomials $f$ and $g$ have degree at most $3$, and $h$ has
degree at most $2$. Setting $(y_7,y_8)$ to each of its four values in~\eqref{eq:exp-rank8-decomposition}
gives
\begin{equation*}
c(u,0,0)=0,\quad c(u,1,0)=f(u),\quad c(u,0,1)=g(u),\quad
c(u,1,1)=f(u)+g(u)+h(u).
\end{equation*}
The weight $|c|$ counts the points $(u,y_7,y_8)$ of $\F_2^8$ at which $c=1$.
Grouping these $256$ points by the value of $(y_7,y_8)$ gives
\begin{equation}\label{eq:exp-rank8-weight-sum}
|c|=|f|+|g|+|f+g+h|.
\end{equation}
The candidate supports still require the origin recovery~\eqref{eq:exp-floor-origin}
and rank checks of Appendix~\ref{app:exp-affine-equivalence}.

\medskip\noindent\emph{Step 5. Exclude vanishing components and bound their weights.}\quad
To exclude the cases $f=0$, $g=0$, or $f+g+h=0$, we examine the hyperplane that
contains $\supp(c)$ in each case:
\begin{equation}\label{eq:exp-rank8-vanishing-components}
\begin{aligned}
f=0:\quad &c=y_8(g+y_7h),
&&\supp(c)\subseteq\{y_8=1\},\\
g=0:\quad &c=y_7(f+y_8h),
&&\supp(c)\subseteq\{y_7=1\},\\
f+g+h=0:\quad &c=y_7(1+y_8)f+y_8(1+y_7)g,
&&\supp(c)\subseteq\{y_7+y_8=1\}.
\end{aligned}
\end{equation}

Denote any of the hyperplanes in~\eqref{eq:exp-rank8-vanishing-components} by $H$.
For each $a\in\F_2^8$, Equation~\eqref{eq:exp-floor-origin} gives
$\mathcal P_a\subseteq H+a$. If $0\in H+a$, the
\hyperref[case:exp-hyperplane-origin]{span bound in Appendix~\ref{app:exp-flat-supports}}
gives actual rank at most $7$.
Otherwise $n\le38<|H+a|=128$ makes $\mathcal P_a$ a proper subset, and
Corollary~\ref{cor:exp-hyperplane-reduction} lowers rank $8$ to $7$.
The exclusion of rank $7$ in Step~2 therefore forces all three polynomials
$f$, $g$, and $f+g+h$ to be nonzero for every remaining candidate at rank $8$.

To avoid enumerating equivalent permutations of the nonzero components, order their
weights as $0<|f|\le|g|\le|f+g+h|$.
An invertible linear change of $(y_7,y_8)$ permits this ordering:
$\operatorname{GL}(2,2)$ permutes the three nonzero values of $(y_7,y_8)$ and hence the
three restrictions $f$, $g$, and $f+g+h$. With this ordering,
\eqref{eq:exp-rank8-weight-sum} gives $|c|\ge|f|+2|g|$, so $|c|<40$ bounds the two smaller
weights.

The bound $|f|+2|g|<40$ leaves only a few weight pairs, which we now list. Both $f$
and $g$ are nonzero polynomials of degree at most $3$ in $6$ variables, that is, nonzero codewords
of $\RM(3,6)$. The code $\RM(3,6)$ has minimum weight $8$, and its only weights below $16$
are $8$, $12$, and $14$~\cite[Thm.~1]{kasamiTokura1970}. If $|g|\ge16$, then
$|f|+2|g|\ge8+32=40$. If $|g|\le14$, the condition $|f|+2|g|<40$ with $|f|\le|g|$ leaves exactly
\begin{equation}\label{eq:exp-quadpairs}
(|f|,|g|)\in\{(8,8),(8,12),(12,12),(8,14)\},
\end{equation}
because the pairs $(12,14)$ and $(14,14)$ give $40$ and $42$. The four pairs are the rows
of Table~\ref{tab:exp-orbitcover}.
\par

\medskip\noindent\emph{Step 6. Enumerate the remaining polynomials.}\quad
Fix $f$ to be the Kasami--Tokura normal form~\cite[Thm.~1]{kasamiTokura1970}
listed in Table~\ref{tab:exp-orbitcover}. Subsequent changes of the six coordinates must
preserve this chosen $f$. For an invertible affine map $L$ of $\F_2^6$, that is, $L\in\operatorname{AGL}(6,2)$, with
$f\circ L^{-1}=f$, the substitution in~\eqref{eq:exp-rank8-decomposition} acts on the
remaining polynomials by
\begin{equation}\label{eq:exp-rank8-polynomial-transport}
(g,h)\longmapsto(g\circ L^{-1},h\circ L^{-1}).
\end{equation}
These maps $L$ form the subgroup of $\operatorname{AGL}(6,2)$ that fixes $f$, and the orbit of $g$ under them is
\begin{equation*}
\operatorname{Orb}_f(g)=\{\,g\circ L^{-1}:L\in\operatorname{AGL}(6,2),\ f\circ L^{-1}=f\,\}.
\end{equation*}
Choose one representative $g$ from each orbit.

For each representative pair $(f,g)$ and each candidate weight $|c|$,
enumerate every $h$ of degree at most $2$ satisfying the weight constraint
from~\eqref{eq:exp-rank8-weight-sum}:
\begin{equation}\label{eq:exp-rank8-h-weight}
|f+g+h|=|c|-|f|-|g|.
\end{equation}
Lemma~\ref{lem:exp-affine-invariance} shows that the substitution
\eqref{eq:exp-rank8-polynomial-transport} preserves the degree of $h$ and the weights of
$f$, $g$, and $f+g+h$. The substitution therefore preserves the weight
constraint~\eqref{eq:exp-rank8-h-weight} and the total weight $|c|$.

A computation with \texttt{nauty}~\cite{mckayPracticalGraphIsomorphism2014} lists one
representative $g$ of each orbit for every row of Table~\ref{tab:exp-orbitcover}. Three checks
show that no polynomial $g$ is missed. First, the listed orbits are pairwise disjoint. Second,
every $g$ in a listed orbit has the weight of its row. Third, for each row the orbit sizes sum to
the number of all $g$ of that weight, given in the third column. Hence every remaining candidate
$c$ at rank $8$ is affine-equivalent to $y_7f+y_8g+y_7y_8h$ with $f$ from
Table~\ref{tab:exp-orbitcover}, $g$ a listed representative, and $h$ one of the enumerated
polynomials.

\begin{table}[!hbp]
\centering
\begin{tabular}{@{}c c r r@{}}
\toprule
$(\lvert f\rvert,\lvert g\rvert)$ & $f$ normal form & number of $g$ & orbits of $g$\\
\midrule
$(8,8)$   & $y_1y_2y_3$            & $11{,}160$       & $20$\\
$(8,12)$  & $y_1y_2y_3$            & $1{,}749{,}888$  & $88$\\
$(12,12)$ & $y_1(y_2y_3{+}y_4y_5)$ & $1{,}749{,}888$  & $2{,}167$\\
$(8,14)$  & $y_1y_2y_3$            & $22{,}855{,}680$ & $124$\\
\bottomrule
\end{tabular}
\caption{Orbit partitions at $s=8$, with $f$ fixed to normal form and $g$ reduced by
invertible affine coordinate changes that preserve $f$. The third column counts all
$g$ with the weight in the first column.}
\label{tab:exp-orbitcover}
\end{table}

\medskip\noindent\emph{Step 7. Exclude the remaining candidates at rank $8$.}\quad
It remains to exclude logical labels for the polynomials enumerated in Step~6.
For each such $c$, choose $a_0\in\F_2^8$ with $c(a_0)=0$ and form the full translated
support $\mathcal P_{a_0}$ of~\eqref{eq:exp-floor-origin}.
When the matrix $S$ whose columns are the points of $\mathcal P_{a_0}$ has actual rank below $8$, Steps~1--2
already exclude a generator matrix satisfying the CH conditions on that support.

Algorithm~\ref{alg:exp-search} finds no three logical rows satisfying the CH conditions
on any remaining $\mathcal P_{a_0}$. Corollary~\ref{cor:exp-one-support} then excludes logical
labels satisfying the CH conditions on $\mathcal P_a$ for every $a\in\F_2^8$, including the punctured supports.

All cases with $n\le38$ have now been excluded, proving the theorem.
Table~\ref{tab:exp-floorcells} summarizes the cases at ranks $6$, $7$, and $8$.
\end{proof}

\begin{table}[!htb]
\centering
\small
\begin{tabular}{@{}c c r r l@{}}
\toprule
$s$ & $\lvert c\rvert$ & \shortstack{Number\\enumerated} & \shortstack{Supports\\tested} & result\\
\midrule
$6$ & $16$ & -- & -- & Excluded by Lemma~\ref{lem:exp-flatobstruction}\\
$6$ & $24$ & $1$ & $1$ & no $\kappa_3$\\
$6$ & $28$ & $1$ & $1$ & no $\kappa_3$\\
$6$ & $32$ & $3$ & $2$ & no $\kappa_3$\\
$6$ & $36$ & $1$ & $1$ & no $\kappa_3$\\
$7$ & $16$ & -- & -- & Excluded by Lemma~\ref{lem:exp-flatobstruction}\\
$7$ & $24$ & $1$ & $1$ & no $\kappa_3$\\
$7$ & $28$ & $2$ & $2$ & no $\kappa_3$\\
$7$ & $32$ & $4$ & $3$ & no $\kappa_3$\\
$7$ & $36$ & $9{,}792$ & $7{,}038$ & no $\kappa_3$\\
$8$ & $32$ & $108$ & $13{,}632$ & no $\kappa_3$\\
$8$ & $34$ & $108$ & $3{,}712$ & no $\kappa_3$\\
$8$ & $36$ & $2{,}399$ & $292{,}128$ & no $\kappa_3$\\
$8$ & $38$ & $2{,}399$ & $160{,}256$ & no $\kappa_3$\\
\bottomrule
\end{tabular}
\par\smallskip
\begin{tabular}{@{}p{0.31\linewidth} p{0.65\linewidth}@{}}
\toprule
Rows & What the third column counts\\
\midrule
$s=6$; $s=7$, $|c|=24,28$ & Affine-equivalence classes of $c$\\
$s=7$, $|c|=32$ & The quadratic class $y_1y_2$ and the three cubic classes not divisible by an affine function (Table~\ref{tab:exp-s7-w32})\\
$s=7$, $|c|=36$ & Weight-$36$ functions obtained by fixing a cubic representative and enumerating quadratic, linear, and constant terms~\cite{cohenHelminck1988,braekenBooleanClassification2004}\\
$s=8$ & Orbits of $g$ under affine maps preserving the selected $f$ (Table~\ref{tab:exp-orbitcover}). Step~6 then enumerates the permitted $h$\\
\bottomrule
\end{tabular}
\caption{Cases in the proof of Theorem~\ref{thm:exp-floor38}.
The fourth column counts full translated supports tested by Algorithm~\ref{alg:exp-search},
possibly including equivalent supports.
At $s=8$, each count includes the permitted $h$ from Step~6.
Dashes mark cases excluded analytically. An entry ``no $\kappa_3$'' means that no tested
pair extends to a CH triple on any support counted in that row.
At $s=6$ and $|c|=32$, Step~2 excludes the degree-one class through the hyperplane cases.}
\label{tab:exp-floorcells}
\end{table}

\FloatBarrier

\begin{figure}[!htbp]
\centering
\begingroup
\tikzset{every picture/.style={}}
\newcommand{\flowbadge}[3]{\begingroup\setlength{\fboxsep}{1.2pt}%
  \raisebox{.15ex}{\fcolorbox{#1}{#2}{\textcolor{#1}{\bfseries\fontsize{7.2}{8}\selectfont #3}}}%
  \endgroup}
\newcommand{\badgeR}{\flowbadge{blue!80!black}{blue!7}{R}}
\newcommand{\badgeH}{\flowbadge{cyan!55!black}{cyan!7}{H}}
\newcommand{\badgeKT}{\flowbadge{teal!75!black}{teal!7}{KT}}
\newcommand{\badgeKTA}{\flowbadge{violet!80!black}{violet!7}{KTA}}
\newcommand{\badgeF}{\flowbadge{red!70!black}{red!7}{F}}
\newcommand{\badgeA}{\flowbadge{brown!80!black}{brown!7}{A}}
\newcommand{\badgeL}{\flowbadge{orange!85!black}{orange!7}{L}}
\begin{tikzpicture}[
  x=1cm,y=1cm,
  proofbox/.style={draw,rounded corners=2pt,minimum width=7.6cm,
    inner sep=6pt,outer sep=2pt,font={\fontsize{9}{11}\selectfont},align=center,line width=.6pt},
  source/.style={draw,rounded corners=2pt,minimum width=4.4cm,
    inner sep=7pt,outer sep=2pt,font={\fontsize{8.5}{10.5}\selectfont},align=center,line width=.6pt},
  flow/.style={-{Stealth[length=2.2mm]},line width=.8pt},
  condition/.style={font=\footnotesize,anchor=west,inner sep=3pt}
]
\draw[black!30,rounded corners=4pt,line width=.5pt]
  (-4.45,1.55) rectangle (4.25,-7.8);
\node[fill=white,inner sep=3pt,rotate=90,font={\fontsize{9}{11}\selectfont}]
  at (-4.45,-3.1) {Restrict rank, weight, and degree};
\draw[black!30,rounded corners=4pt,line width=.5pt]
  (-4.45,-8.0) rectangle (4.25,-14.05);
\node[fill=white,inner sep=3pt,rotate=90,font={\fontsize{9}{11}\selectfont}]
  at (-4.45,-11.0) {Constrain the decomposition};
\draw[black!30,rounded corners=4pt,line width=.5pt]
  (-4.45,-14.3) rectangle (4.25,-20.3);
\node[fill=white,inner sep=3pt,rotate=90,font={\fontsize{9}{11}\selectfont}]
  at (-4.85,-17.3) {Enumerate and exclude};
\node[proofbox] (s1) at (0,0) {\begin{tabular}{@{}c@{}}
\textbf{1. Restrict the stabilizer ranks}\\
$n\le38$: projective reduction (Lemma~\ref{lem:exp-projred})\\
\badgeR\ Repeat rank reduction while $s\ge9$\\
$s\le5$ excluded (Proposition~\ref{prop:exp-lb}); retain $s\in\{6,7,8\}$\\
Theorem~\ref{thm:exp-spectrum}: $c\in\RM(s-4,s)$\\
\badgeA\ $n=|c|-c(0)$; form $\mathcal P_a$ and check rank
\end{tabular}};
\node[proofbox] (s2) at (0,-3.15) {\begin{tabular}{@{}c@{}}
\textbf{2. Exclude ranks $6$ and $7$}\\
Classify $s=6$: Dickson~\eqref{eq:exp-dickson-general}~\cite[Ch.~15, Thms.~4--5]{macwilliams1977theory}\\
$s=7$: \badgeKT\ \badgeKTA\ Ax: $4\mid |c|$~\cite[Sec.~4.1.5]{carletBooleanFunctions2021}\\
\badgeA\ Recover supports\\
\badgeF\ Weight $16$: Lemma~\ref{lem:exp-flatobstruction}\\
\badgeH\ Exclude hyperplane families; \badgeL\ test the remainder\\
(Table~\ref{tab:exp-floorcells})
\end{tabular}};
\node[proofbox] (s3) at (0,-6.3) {\begin{tabular}{@{}c@{}}
\textbf{3. Reduce candidates at rank $8$}\\
Exclude by smaller rank (Steps~1--2):\\
$|c|\le30$: \badgeR\ reduce to $s\le7$ using~\eqref{eq:exp-rank8-small}\\
\badgeF\ Coefficient parity gives even $|c|$\\
\badgeF\ \badgeKT\ $\deg c<4$: weight-$32$ flat, rank $\le6$\\
Retain $\deg c=4$, $|c|\in\{32,34,36,38\}$
\end{tabular}};
\node[proofbox] (s4) at (0,-9.45) {\begin{tabular}{@{}c@{}}
\textbf{4. Exclude degree-$4$ cases at rank $8$}\\
$\deg c=4$: \badgeKTA\ gives a zero $6$-dimensional flat\\
$c=y_7f+y_8g+y_7y_8h$, $\deg f,\deg g\le3$, $\deg h\le2$\\
$|c|=|f|+|g|+|f+g+h|$; \badgeA\ form $\mathcal P_a$
\end{tabular}};
\node[proofbox] (s5) at (0,-12.6) {\begin{tabular}{@{}c@{}}
\textbf{5. Exclude vanishing components}\\
\textbf{and bound their weights}\\
$f=0$, $g=0$, or $f+g+h=0$ gives\\
\badgeA\ $\mathcal P_a\subseteq H+a$\\
\badgeH\ $0\in H+a$: rank $\le7$; otherwise rank $8\to7$\\
($n\le38<128$: proper support); apply Step~2\\
\badgeKT\ \badgeF\ $|c|<40$: four weight pairs (Table~\ref{tab:exp-orbitcover})
\end{tabular}};
\node[proofbox] (s6) at (0,-15.75) {\begin{tabular}{@{}c@{}}
\textbf{6. Enumerate the remaining polynomials}\\
\badgeKT\ Fix $f$; enumerate $g$ under maps preserving $f$\\
Enumerate $h$: $\deg h\le2$, weight condition~\eqref{eq:exp-rank8-h-weight}\\
\texttt{nauty}~\cite{mckayPracticalGraphIsomorphism2014}: disjoint orbits and class membership\\
with size totals establish coverage (Table~\ref{tab:exp-orbitcover})
\end{tabular}};
\node[proofbox,minimum width=7.3cm,fill=black!4] (s7) at (0,-18.9) {\begin{tabular}{@{}c@{}}
\textbf{7. Exclude the remaining candidates at rank $8$}\\
\badgeA\ Step~6: choose $a_0$ with $c(a_0)=0$\\
\badgeL\ No CH triple on $\mathcal P_{a_0}$ (Table~\ref{tab:exp-floorcells})\\
Corollary~\ref{cor:exp-one-support}: no CH triple on any $\mathcal P_a$\\
All cases excluded; no code has $n\le38$
\end{tabular}};
\node[source,draw=blue!80!black] (rank) at (7.05,0) {\begin{tabular}{@{}c@{}}
\badgeR\ \textbf{Counting reduction}\\
Lemma~\ref{lem:exp-levelred}, Appendix~\ref{app:exp-levelred-proof}\\
Lower rank, no greater length\\
Recheck rank after reduction\\
Failed bound: no exclusion
\end{tabular}};
\node[source,draw=cyan!55!black] (hyper) at (7.05,-3.15) {\begin{tabular}{@{}c@{}}
\badgeH\ \textbf{Hyperplane cases}\\
\hyperref[case:exp-hyperplane-origin]{Span bound: Appendix~\ref{app:exp-flat-supports}}\\
Reduction: Corollary~\ref{cor:exp-hyperplane-reduction}, Table~\ref{tab:exp-hyperplane-reductions}\\
Exclude only at a rank already settled
\end{tabular}};
\node[source,draw=teal!75!black] (kt) at (7.05,-6.3) {\begin{tabular}{@{}c@{}}
\badgeKT\ \textbf{Kasami--Tokura}\\
Low-weight classes and gap\\
\cite[Thm.~1]{kasamiTokura1970}
\end{tabular}};
\node[source,draw=violet!80!black] (kta) at (7.05,-9.45) {\begin{tabular}{@{}c@{}}
\badgeKTA\ \textbf{Kasami--Tokura--Azumi}\\
Classes and flats where $c=0$\\
\cite[Thm.~1, Cor.~1(1)]{kasamiTokuraAzumi1976}
\end{tabular}};
\node[source,draw=red!70!black] (facts) at (7.05,-12.6) {\begin{tabular}{@{}c@{}}
\badgeF\ \textbf{Shared facts (Appendix~\ref{app:exp-facts})}\\
\hyperref[fact:exp-rm-minweight]{Minimum weight and flats}\\
Parity~\eqref{eq:exp-topcoeff-parity}: even weight
\end{tabular}};
\node[source,draw=brown!80!black] (affine) at (7.05,-15.75) {\begin{tabular}{@{}c@{}}
\badgeA\ \textbf{Affine representatives}\\
\textbf{and translated supports}\\
Appendix~\ref{app:exp-affine-equivalence}, \eqref{eq:exp-floor-origin}\\
Labels stay on their columns\\
(Lemma~\ref{lem:exp-affine-support})\\
If $c(a_0)=0$ and $\mathcal P_{a_0}$ has no CH triple,\\
no $\mathcal P_a$ has one (Cor.~\ref{cor:exp-one-support})
\end{tabular}};
\node[source,draw=orange!85!black] (test) at (7.05,-18.9) {\begin{tabular}{@{}c@{}}
\badgeL\ \textbf{Fixed-support label tests}\\
Appendix~\ref{app:exp-search}, Algorithm~\ref{alg:exp-search}\\
Pair tests~\eqref{eq:exp-admissible-pair}\\
Third-label system~\eqref{eq:exp-third-label-coefficients}
\end{tabular}};
\foreach \a/\b in {1/2,2/3,3/4,4/5,5/6,6/7}
  \draw[flow] (s\a.south) -- (s\b.north);
\end{tikzpicture}
\endgroup
\caption{Proof steps for Theorem~\ref{thm:exp-floor38}.
Solid arrows follow the numbered steps. Matching lettered badges identify the results
from earlier sections and the literature used by each step.}
\label{fig:exp-floor-flow}
\end{figure}

\bibliography{references/Gates}

@article{haahCodesProtocolsDistilling2017,
  title = {Codes and Protocols for Distilling {$T$}, controlled-{$S$}, and {Toffoli} Gates},
  author = {Haah, Jeongwan and Hastings, Matthew B.},
  journal = {Quantum},
  volume = {2},
  pages = {71},
  year = {2018},
  doi = {10.22331/q-2018-06-07-71},
  eprint = {1709.02832},
  primaryclass = {quant-ph},
  archiveprefix = {arXiv}
}

@article{bargGeometricStructureTransversal2024,
  title = {Geometric Structure and Transversal Logic of Quantum {{Reed-Muller}} Codes},
  author = {Barg, Alexander and Coble, Nolan J. and Hangleiter, Dominik and Kang, Christopher},
  year = {2026},
  month = jan,
  journal = {IEEE Transactions on Information Theory},
  volume = {72},
  number = {1},
  pages = {415--436},
  eprint = {2410.07595},
  primaryclass = {quant-ph},
  doi = {10.1109/TIT.2025.3592631},
  archiveprefix = {arXiv}
}

@article{bravyiMagicStateDistillation2012,
  title = {Magic State Distillation with Low Overhead},
  author = {Bravyi, Sergey and Haah, Jeongwan},
  year = 2012,
  month = nov,
  journal = {Physical Review A},
  volume = {86},
  number = {5},
  eprint = {1209.2426},
  pages = {052329},
  issn = {1050-2947, 1094-1622},
  doi = {10.1103/PhysRevA.86.052329},
  archiveprefix = {arXiv}
}

@article{brownColorCodeLogical2025,
  title = {Color Code with a Logical Control-{$S$} Gate Using Transversal {$T$} Rotations},
  author = {Brown, Benjamin J.},
  year = 2025,
  month = aug,
  journal = {Physical Review Letters},
  volume = {135},
  number = {7},
  pages = {070602},
  issn = {0031-9007, 1079-7114},
  doi = {10.1103/lwxd-fdlb}
}

@article{campbellUnifiedFrameworkMagic2017,
  title = {Unified framework for magic state distillation and multiqubit gate synthesis with reduced resource cost},
  author = {Campbell, Earl T. and Howard, Mark},
  year = 2017,
  month = feb,
  journal = {Physical Review A},
  volume = {95},
  number = {2},
  eprint = {1606.01904},
  primaryclass = {quant-ph},
  pages = {022316},
  issn = {2469-9926, 2469-9934},
  doi = {10.1103/PhysRevA.95.022316},
  archiveprefix = {arXiv}
}

@article{huClimbingDiagonalClifford2021,
  title = {Climbing the {{Diagonal Clifford Hierarchy}}},
  author = {Hu, Jingzhen and Liang, Qingzhong and Calderbank, Robert},
  year = {2025},
  month = oct,
  journal = {Academia Quantum},
  volume = {2},
  number = {4},
  eprint = {2110.11923},
  primaryclass = {quant-ph},
  doi = {10.20935/AcadQuant7957},
  archiveprefix = {arXiv}
}

@article{jainTransversalCliffordTgate2025,
  title = {Transversal {{Clifford}} and {{T-gate}} Codes of Short Length and High Distance},
  author = {Jain, Shubham P. and Albert, Victor V.},
  year = 2025,
  journal = {IEEE Journal on Selected Areas in Information Theory},
  volume = {6},
  eprint = {2408.12752},
  primaryclass = {quant-ph},
  pages = {127--137},
  issn = {2641-8770},
  doi = {10.1109/JSAIT.2025.3570832},
  archiveprefix = {arXiv}
}

@article{nezamiClassificationSmallTriorthogonal2022,
  title = {Classification of Small Triorthogonal Codes},
  author = {Nezami, Sepehr and Haah, Jeongwan},
  year = 2022,
  month = jul,
  journal = {Physical Review A},
  volume = {106},
  number = {1},
  pages = {012437},
  publisher = {American Physical Society},
  doi = {10.1103/PhysRevA.106.012437}
}

@article{bombinFaultTolerantPostselectionLowOverhead2024,
  title = {Fault-{{Tolerant Postselection}} for {{Low-Overhead Magic State Preparation}}},
  author = {Bomb{\'i}n, H{\'e}ctor and Pant, Mihir and Roberts, Sam and Seetharam, Karthik I.},
  year = {2024},
  month = jan,
  journal = {PRX Quantum},
  volume = {5},
  number = {1},
  pages = {010302},
  issn = {2691-3399},
  doi = {10.1103/PRXQuantum.5.010302}
}

@misc{gidneyMagicStateCultivation2024,
  title = {Magic State Cultivation: Growing {{T}} States as Cheap as {{CNOT}} Gates},
  author = {Gidney, Craig and Shutty, Noah and Jones, Cody},
  year = {2024},
  month = sep,
  number = {arXiv:2409.17595},
  eprint = {2409.17595},
  primaryclass = {quant-ph},
  publisher = {arXiv},
  archiveprefix = {arXiv},
  note = {\href{https://arxiv.org/abs/2409.17595}{arXiv:2409.17595}}
}

@misc{kohPhantomCodes2025,
  title = {Entangling logical qubits without physical operations},
  author = {Koh, Jin Ming and Gong, Anqi and Diaconu, Andrei C. and Tan, Daniel Bochen and Geim, Alexandra A. and Gullans, Michael J. and Yao, Norman Y. and Lukin, Mikhail D. and Majidy, Shayan},
  year = {2026},
  eprint = {2601.20927},
  note = {\href{https://arxiv.org/abs/2601.20927}{arXiv:2601.20927}},
  archiveprefix = {arXiv},
  primaryclass = {quant-ph}
}

@article{rengaswamyOnOptimalityCSSCodes2020,
  title = {On {{Optimality}} of {{CSS Codes}} for {{Transversal}} {{$T$}}},
  author = {Rengaswamy, Narayanan and Calderbank, Robert and Newman, Michael and Pfister, Henry D.},
  year = {2020},
  month = aug,
  journal = {IEEE Journal on Selected Areas in Information Theory},
  volume = {1},
  number = {2},
  pages = {499--514},
  eprint = {1910.09333},
  primaryclass = {quant-ph},
  doi = {10.1109/JSAIT.2020.3012914},
  archiveprefix = {arXiv}
}

@article{huDesigningQuantumChannels2023,
  title = {Designing the {{Quantum Channels Induced}} by {{Diagonal Gates}}},
  author = {Hu, Jingzhen and Liang, Qingzhong and Calderbank, Robert},
  year = {2022},
  month = sep,
  journal = {Quantum},
  volume = {6},
  pages = {802},
  eprint = {2109.13481},
  primaryclass = {quant-ph},
  archiveprefix = {arXiv},
  doi = {10.22331/q-2022-09-08-802}
}

@misc{huDivisibleCodesQuantum2022,
  title = {Divisible {{Codes}} for {{Quantum Computation}}},
  author = {Hu, Jingzhen and Liang, Qingzhong and Calderbank, Robert},
  year = 2022,
  month = apr,
  number = {arXiv:2204.13176},
  eprint = {2204.13176},
  note = {\href{https://arxiv.org/abs/2204.13176}{arXiv:2204.13176}},
  primaryclass = {quant-ph},
  publisher = {arXiv},
  archiveprefix = {arXiv}
}

@misc{koutsioumpasSmallestCodeTransversal2022,
  author = {Koutsioumpas, Stergios and Banfield, Darren and Kay, Alastair},
  title = {The Smallest Code with Transversal {T}},
  year = {2022},
  eprint = {2210.14066},
  archiveprefix = {arXiv},
  primaryclass = {quant-ph},
  note = {\href{https://arxiv.org/abs/2210.14066}{arXiv:2210.14066}}
}

@article{bombinExactTopologicalQuantum2015,
  author = {Bomb{\'\i}n, H{\'e}ctor},
  title = {Gauge Color Codes: Optimal Transversal Gates and Gauge Fixing in Topological Stabilizer Codes},
  journal = {New Journal of Physics},
  volume = {17},
  number = {8},
  pages = {083002},
  year = {2015},
  doi = {10.1088/1367-2630/17/8/083002}
}

@article{mckayPracticalGraphIsomorphism2014,
  author = {McKay, Brendan D. and Piperno, Adolfo},
  title = {Practical graph isomorphism, {II}},
  journal = {Journal of Symbolic Computation},
  volume = {60},
  pages = {94--112},
  year = {2014},
  doi = {10.1016/j.jsc.2013.09.003}
}

@book{macwilliams1977theory,
  author = {MacWilliams, Florence J. and Sloane, Neil J. A.},
  title = {The Theory of Error-Correcting Codes},
  publisher = {North-Holland},
  year = {1977},
  doi = {10.1016/S0924-6509(08)X7030-8}
}

@article{campsCSST2024,
  author = {Camps-Moreno, Eduardo and L{\'o}pez, Hiram H. and Matthews, Gretchen L.
               and Ruano, Diego and San-Jos{\'e}, Rodrigo and Soprunov, Ivan},
  title = {An algebraic characterization of binary {CSS-T} codes and cyclic {CSS-T}
               codes for quantum fault tolerance},
  journal = {Quantum Information Processing},
  volume = {23},
  number = {6},
  pages = {230},
  doi = {10.1007/s11128-024-04427-5},
  eprint = {2312.17518},
  year = {2024}
}

@article{betsumiyaMunemasa2012,
  author = {Betsumiya, Koichi and Munemasa, Akihiro},
  title = {On triply even binary codes},
  journal = {Journal of the London Mathematical Society},
  volume = {86},
  number = {1},
  pages = {1--16},
  year = {2012},
  doi = {10.1112/jlms/jdr054}
}

@inproceedings{heinleinProjectiveDivisible2017,
  author = {Heinlein, Daniel and Honold, Thomas and Kiermaier, Michael and Kurz, Sascha and Wassermann, Alfred},
  title = {Projective divisible binary codes},
  booktitle = {The Tenth International Workshop on Coding and Cryptography (WCC 2017)},
  note = {\href{https://arxiv.org/abs/1703.08291}{arXiv:1703.08291}},
  eprint = {1703.08291},
  year = {2017}
}

@misc{kurzDivisible2021,
  author = {Kurz, Sascha},
  title = {Divisible codes},
  note = {Revised edition, December 2025},
  doi = {10.15495/EPub_UBT_00008789},
  howpublished = {\doi{10.15495/EPub_UBT_00008789}},
  eprint = {2112.11763},
  year = {2025}
}

@article{calderbankKantor1986,
  author = {Calderbank, A. R. and Kantor, W. M.},
  title = {The geometry of two-weight codes},
  journal = {Bulletin of the London Mathematical Society},
  volume = {18},
  number = {2},
  pages = {97--122},
  year = {1986},
  doi = {10.1112/blms/18.2.97}
}

@article{landjev2000,
  author = {Landjev, Ivan N.},
  title = {Linear codes over finite fields and finite projective geometries},
  journal = {Discrete Mathematics},
  volume = {213},
  number = {1--3},
  pages = {211--244},
  year = {2000},
  doi = {10.1016/S0012-365X(99)00183-1}
}

@incollection{assmusKeyPolynomial1998,
  author = {Assmus, Jr., E. F. and Key, J. D.},
  title = {Polynomial codes and finite geometries},
  booktitle = {Handbook of Coding Theory},
  editor = {Pless, V. S. and Huffman, W. C.},
  publisher = {Elsevier},
  volume = {2},
  chapter = {16},
  pages = {1269--1343},
  year = {1998}
}

@incollection{brouwerTwoWeight2021,
  author = {Brouwer, A. E.},
  title = {Two-weight codes},
  booktitle = {Concise Encyclopedia of Coding Theory},
  editor = {Huffman, W. C. and Kim, J.-L. and Sol{\'e}, P.},
  publisher = {CRC Press},
  chapter = {19},
  pages = {449--462},
  year = {2021}
}

@inproceedings{alyQLDPCFiniteGeometries2008,
  author = {Aly, Salah A.},
  title = {A class of quantum {LDPC} codes constructed from finite geometries},
  booktitle = {IEEE Global Telecommunications Conference (GLOBECOM)},
  year = {2008},
  pages = {1--5},
  doi = {10.1109/GLOCOM.2008.ECP.217},
  note = {\href{https://arxiv.org/abs/0712.4115}{arXiv:0712.4115}}
}

@misc{willsClassificationGeneralised2026,
  title = {Classification of Generalised Triorthogonal Codes through Length 54},
  author = {Wills, Adam and Jain, Shubham P. and Singh, Shraddha},
  year = {2026},
  eprint = {2609.30860},
  archiveprefix = {arXiv},
  primaryclass = {quant-ph},
  note = {\href{https://arxiv.org/abs/2609.30860}{arXiv:2609.30860}}
}

@misc{dalfaveroSatisfyingQuantumCodes2026,
  title = {Satisfying Quantum Codes: Physics-Informed and Hardware-Aware Code Design with {SAT} Solvers},
  author = {DalFavero, Ben and Watkins, William M. and LaBorde, Margarite L. and Russo, Vincent and Egger, Ethan and Quiroz, Gregory and LaRose, Ryan},
  year = {2026},
  eprint = {2608.23460},
  archiveprefix = {arXiv},
  primaryclass = {quant-ph},
  note = {\href{https://arxiv.org/abs/2608.23460}{arXiv:2608.23460}}
}

@misc{jacintoCompactFactories2026,
  author = {Jacinto, Hugo and Valcarce, Xavier and Barizien, Victor and Gouzien, {\'E}lie and Sangouard, Nicolas},
  title = {Exploring the landscape of compact magic-state distillation factories},
  year = {2026},
  note = {\href{https://arxiv.org/abs/2606.07734}{arXiv:2606.07734}}
}

@article{kasamiTokura1970,
  author = {Kasami, Tadao and Tokura, Nobuki},
  title = {On the weight structure of {Reed--Muller} codes},
  journal = {IEEE Transactions on Information Theory},
  volume = {16},
  number = {6},
  pages = {752--759},
  year = {1970},
  doi = {10.1109/TIT.1970.1054545}
}

@article{kasamiTokuraAzumi1976,
  author = {Kasami, Tadao and Tokura, Nobuki and Azumi, Saburo},
  title = {On the weight enumeration of weights less than $2.5d$ of {Reed--Muller} codes},
  journal = {Information and Control},
  volume = {30},
  number = {4},
  pages = {380--395},
  year = {1976},
  doi = {10.1016/S0019-9958(76)90355-7}
}

@article{sloaneBerlekamp1970,
  author = {Sloane, N. J. A. and Berlekamp, E. R.},
  title = {Weight enumerator for second-order {Reed--Muller} codes},
  journal = {IEEE Transactions on Information Theory},
  volume = {16},
  number = {6},
  pages = {745--751},
  year = {1970},
  doi = {10.1109/TIT.1970.1054553}
}

@book{assmusKeyDesigns1992,
  author = {Assmus, Jr., E. F. and Key, J. D.},
  title = {Designs and Their Codes},
  series = {Cambridge Tracts in Mathematics},
  number = {103},
  publisher = {Cambridge University Press},
  year = {1992},
  doi = {10.1017/CBO9781316529836}
}

@book{carletBooleanFunctions2021,
  author = {Carlet, Claude},
  title = {Boolean Functions for Cryptography and Coding Theory},
  publisher = {Cambridge University Press},
  year = {2021},
  doi = {10.1017/9781108606806}
}

@misc{gongMagicStateBinaryExtension2026,
  author = {Gong, Anqi and Pattison, Christopher A. and Rall, Patrick and Wills, Adam},
  title = {Magic State Distillation via Codes over Binary Extension Fields},
  year = {2026},
  eprint = {2608.09727},
  archiveprefix = {arXiv},
  note = {\href{https://arxiv.org/abs/2608.09727}{arXiv:2608.09727}},
  doi = {10.48550/arXiv.2608.09727}
}

@book{huffmanFundamentals2003,
  author = {Huffman, W. Cary and Pless, Vera},
  title = {Fundamentals of Error-Correcting Codes},
  publisher = {Cambridge University Press},
  year = {2003},
  doi = {10.1017/CBO9780511807077}
}

@article{cohenHelminck1988,
  author = {Cohen, Arjeh M. and Helminck, Aloysius G.},
  title = {Trilinear Alternating Forms on a Vector Space of Dimension 7},
  journal = {Communications in Algebra},
  volume = {16},
  number = {1},
  pages = {1--25},
  year = {1988},
  doi = {10.1080/00927878808823558}
}

@misc{braekenBooleanClassification2004,
  author = {Braeken, An and Borissov, Yuri and Nikova, Svetla and Preneel, Bart},
  title = {Classification of {Boolean} Functions of 6 Variables or Less with Respect to Cryptographic Properties},
  year = {2004},
  howpublished = {Cryptology ePrint Archive, Report 2004/248},
  url = {https://eprint.iacr.org/2004/248},
  note = {Conference version: \emph{Automata, Languages and Programming}, LNCS 3580, pp.~324--334 (2005), \doi{10.1007/11523468_27}}
}

\end{document}